\documentclass[11pt,a4paper]{article}
\usepackage[utf8]{inputenc}
\usepackage{amsmath, amssymb, amsthm}
\usepackage{mathtools}
\usepackage{geometry}
\usepackage{hyperref}
\usepackage{booktabs}
\usepackage{graphicx}
\usepackage{algorithm}
\usepackage{algpseudocode}
\usepackage[numbers]{natbib}
\usepackage{xcolor}
\usepackage{tikz}
\usepackage{enumitem}

\hypersetup{
    colorlinks=true,
    linkcolor=blue!60!black,
    citecolor=blue!60!black,
    urlcolor=blue!60!black
}

\newtheorem{theorem}{Theorem}[section]
\newtheorem{proposition}[theorem]{Proposition}

\newtheorem{corollary}[theorem]{Corollary}
\newtheorem{definition}[theorem]{Definition}
\newtheorem{remark}[theorem]{Remark}
\theoremstyle{definition}
\newtheorem{auditimp}{Audit Implication}[section]
\newtheorem{designprinciple}{Design Principle}[section]
\newtheorem{heuristic}{Heuristic}[section]
\theoremstyle{definition}
\newtheorem{assumption}[theorem]{Assumption}
\theoremstyle{plain}
\theoremstyle{plain}

\DeclareMathOperator{\tr}{tr}
\DeclareMathOperator{\proj}{proj}

\title{\textbf{Telemetry and Concealment in Self-Adapting Generative AI:\\Logging Architecture, Adversarial Model Hiding,\\and the Limits of Detection}}

\author{
  Sriram Nagaraj \footnote{dr.sri.nagaraj@gmail.com}
}

\date{}

\begin{document}

\maketitle

\begin{abstract}
Model risk management (MRM) guidance assumes a static model lifecycle, in which models are developed, independently validated, and implemented without further autonomous modification. Continually self-adapting generative AI systems --- models that update their own weights during production deployment --- fundamentally violate this assumption and render point-in-time validation inadequate. This paper addresses the resulting governance problem in two parts.

\emph{Part I} develops a rigorous telemetry architecture for such models, operating simultaneously in discrete and continuous time. We establish the \emph{Minimal Sufficient Statistic} for audit purposes, construct a tamper-evident Merkle chain for discrete weight sequences, derive the appropriate continuous-time generalization via the It\^{o} formula, and propose event-driven logging via KL divergence stopping times that is both computationally tractable and meaningful for validation. We further embed these results within a Lyapunov stability framework, deriving both mean-stability and almost-sure path-wise stability conditions for the weight process, and show how the CUSUM test \citep{basseville1993detection} provides a statistically grounded mechanism for tier-based governance escalation.

\emph{Part II} asks what happens when the model owner is adversarial. A firm deploying such a model has strong economic incentives to conceal learning updates that would trigger mandatory validation review. We formalize this as the \emph{Model Hiding Problem} and provide a systematic taxonomy of six distinct attack strategies against the Part~I architecture, spanning discrete and continuous time, with a formal countermeasure for each grounded in cryptography, statistical hypothesis testing, or stochastic process theory. Two of the natural continuous-time defenses are shown, as \emph{negative} results, not to work: the first variation of a diffusion is infinite, so a power-variation ratio measures sampling frequency rather than drift, and a gradient probe is circular, re-detecting only what the behavioral oracle already sees. A third continuous-time attack hides adversarial drift inside stochastic learning noise; here a path-wise check against the declared quadratic variation (an envelope from the Law of the Iterated Logarithm) does close the gap.

Far from weakening the framework, the two failed defenses sharpen its central result: we prove that at any finite detection resolution $\delta>0$ the set of undetectable weight perturbations contains a ball of strictly positive radius $\epsilon^*(\delta)$, so no telemetry-only regime can detect arbitrarily small adversarial drift. We do not derive a closed form or a rate for $\epsilon^*(\delta)$, and in particular no dependence on telemetry bandwidth: what is established is strict positivity and monotonicity in $\delta$, which is what the dual-regime conclusion requires. A separate counting argument suggests why a residual should exist in high dimension, but it is stated as a heuristic rather than a theorem, and it is not tested at the scale simulated here. Together the two parts establish a dual-regime architecture in which continuous telemetry is necessary but not sufficient, narrowing---but never replacing---periodic invasive audit. The framework is model-architecture-agnostic and is designed to satisfy the \emph{Effective Challenge}, \emph{Conceptual Soundness}, and \emph{Outcome Analysis} pillars of the model risk management guidance.
\end{abstract}

\tableofcontents
\newpage

\section{Introduction}

The principles of model risk management (MRM) enunciated in \citep{sr117} establish a lifecycle framework for models used in consequential decisions: development, independent validation, and production implementation \citep{sr117}. Implicit in this framework is the assumption that the model in production is the model that was validated --- that is, the mapping from inputs to outputs does not autonomously change over time.

Generative AI systems with online or continual learning capabilities violate this assumption by design. A large language model (LLM) fine-tuned via reinforcement learning from human feedback (RLHF) on production traffic, or a deep neural network updated continuously via stochastic gradient descent (SGD) on streaming data, is a \emph{different} model at each inference step. Traditional validation, which certifies a model at a fixed point in time, provides no ongoing guarantee about the model's behavior, risk characteristics, or alignment with its approved objective.

This creates what we term the \emph{Continuous Validation Gap}: the period between the most recent formal validation and the current model state. As learning rates, data volumes, and architectural complexity increase, this gap widens, and the distance between the validated model and the production model may become unacceptable from a governance standpoint.

Closing that gap requires two things, and this paper supplies both. The first is an architecture that makes adaptation \emph{observable}: a faithful, tamper-evident, computationally tractable record of how a model moves. The second is an account of what happens when the model owner does not wish to be observed. A firm that must halt a model pending revalidation whenever it reports material adaptation has an incentive to adapt without reporting; any telemetry regime that assumes truthful transmission is therefore incomplete. Part~I of this paper builds the observability architecture on the assumption of faithful reporting; Part~II removes that assumption, treats the model owner as strategic, and asks how much concealment a telemetry regime can detect --- and, more importantly, how much it provably cannot.

\subsection{Scope and Contribution}

This paper addresses three interconnected questions:
\begin{enumerate}[label=(\roman*)]
    \item \emph{What is the minimal information that must be logged to provide an auditable record of a self-adapting model's trajectory?}
    \item \emph{How should this logging architecture behave in the continuous-time limit, when updates are effectively instantaneous?}
    \item \emph{If the model owner is strategic and wishes to conceal adaptation, which concealment strategies can be detected from the telemetry stream, and which provably cannot?}
\end{enumerate}

Our contributions are as follows:
\begin{itemize}
    \item We formalize the \emph{State Digest} as the minimal sufficient statistic for discrete-time weight trajectory auditing, and show that weight-space distance is a poor proxy for the relevant governance quantity: behavioral divergence.
    \item We construct a tamper-evident \emph{Merkle Weight Chain} that binds each model state to its predecessor and its training data, and show that this is the correct discrete-time formalization of the ``Model Genetic Record.''
    \item We derive the continuous-time limit of the logging architecture via the It\^{o} formula, establishing the \emph{Quadratic Variation} as the canonical telemetry scalar and KL divergence crossing times as the natural event-driven logging mechanism.
    \item We embed the framework within a Lyapunov stability theory, deriving both mean and almost-sure stability conditions for the weight process as a stochastic differential equation (SDE).
    \item We propose a CUSUM-based trigger mechanism that operationalizes the heuristic tier table of prior governance frameworks into a statistically grounded detection rule with explicit false-alarm rate controls.
    \item We formalize the \emph{Model Hiding Problem} and give a complete taxonomy of six concealment strategies against the above architecture, with a formal countermeasure for each --- including two \emph{negative} results showing that the natural continuous-time defenses do not work.
    \item We prove a \emph{detection floor}: at any finite behavioral resolution $\delta>0$ the set of undetectable weight perturbations contains a ball of strictly positive radius $\epsilon^*(\delta)$, so no telemetry-only regime can detect arbitrarily small adversarial drift. We establish positivity and monotonicity of $\epsilon^*(\delta)$, not its magnitude or rate, and we set out the \emph{Dual-Regime Principle} --- a design recommendation, not a theorem --- that this motivates.
\end{itemize}

The remainder of the paper is organized in two parts. \textbf{Part~I (Sections~\ref{sec:background}--\ref{sec:numerical_a})} develops the telemetry architecture under faithful reporting: Section~\ref{sec:background} gives background on the self-adapting model problem; Section~\ref{sec:discrete} develops the discrete-time architecture; Section~\ref{sec:continuous} derives the continuous-time generalization; Section~\ref{sec:lyapunov} establishes the stability framework; Section~\ref{sec:cusum} proposes the CUSUM trigger; Section~\ref{sec:architecture} assembles the integrated architecture; and Section~\ref{sec:numerical_a} reports numerical studies. \textbf{Part~II (Sections~\ref{sec:hiding}--\ref{sec:numerical_b})} removes the faithful-reporting assumption: Section~\ref{sec:hiding} formalizes the Model Hiding Problem; Section~\ref{sec:game} sets out the adversarial game; Sections~\ref{sec:discrete_attacks} and~\ref{sec:continuous_attacks} develop the discrete- and continuous-time attack taxonomies and countermeasures; Section~\ref{sec:residual} establishes the information-theoretic residual; Section~\ref{sec:defensive_arch} describes the integrated defensive architecture; and Section~\ref{sec:numerical_b} reports the corresponding numerical studies. Section~\ref{sec:discussion} discusses both parts together and Section~\ref{sec:conclusion} concludes.

\section{Background}
\label{sec:background}

\subsection{MRM and Its Static Assumptions}

MRM defines a \emph{model} as, roughly, any quantitative method, system, or approach that applies established theories, techniques, and assumptions to process input data into quantitative estimates (\citep{sr117}). The guidance identifies three components of model risk: \emph{fundamental model error} (the model's conceptual design is flawed), \emph{implementation error} (the model is correctly conceived but incorrectly executed), and \emph{inappropriate use} (the model is applied outside its validated scope).

The MRM lifecycle comprises development, validation, and ongoing monitoring. Ongoing monitoring under MRM is designed to detect \emph{model drift} --- degradation in predictive performance over time due to changes in the environment \citep{ben2010theory} --- but it assumes the model itself is fixed; only the environment changes. This is precisely the assumption violated by self-adapting systems, where the model and the environment co-evolve.

\subsection{Self-Adapting Generative AI}

We define a \emph{self-adapting generative model} as a parameterized model $f(\cdot\,;\, W)$ whose weight matrix $W$ is updated continuously during production operation, based on observed data. Formally, at each production step $t$, the model receives a data batch $\mathcal{B}_t$ drawn from a distribution $\mathbb{P}_t$ (which may itself be non-stationary), and updates:
\begin{equation}
    W_{t+1} = W_t - \eta_t \nabla_W \mathcal{L}(W_t, \mathcal{B}_t) + \xi_t
\end{equation}
where $\eta_t$ is the learning rate, $\mathcal{L}$ is a loss function, and $\xi_t$ is a stochastic perturbation term (e.g., from dropout or gradient noise). If we interpret the sequence $\{W_t\}_{t \geq 0}$ as a Markov chain on a high-dimensional weight space $\mathcal{W} \subset \mathbb{R}^d$, then $d$ may be of order $10^9$ to $10^{12}$ for contemporary LLMs. In Section \ref{sec:state_interpretation} we provide a more nuanced interpretation of what the ``effective" dimension of the space we are modeling is. 

The validatory challenge is threefold: (i) storing or transmitting $\{W_t\}$ is computationally infeasible; (ii) the validator cannot, in general, distinguish a model that has drifted significantly from one that has not, without direct access to the weights; (iii) the model's behavior can change dramatically in ways that are not reflected in simple performance metrics.

\section{Discrete-Time Telemetry Architecture}
\label{sec:discrete}

\subsection{The Minimal Sufficient Statistic for Audit}

The central design question is: what information about the weight trajectory $\{W_0, W_1, \ldots, W_T\}$ is \emph{necessary and sufficient} for a validator to certify that the model remained within its validated scope? We answer this in terms of a sufficient statistic relative to a canonical reference set.

\begin{definition}[Canonical Reference Set]
\label{def:reference_set}
Let $\mathcal{R} = \{(x_i, p^*_i)\}_{i=1}^m$ be a fixed set of inputs and their output distributions under the initially validated model $W_0$. $\mathcal{R}$ is chosen at the time of initial validation and held jointly by the firm and the validator. The reference set is said to be \emph{behaviorally sufficient} if it spans the input manifold in a statistically meaningful way.
\end{definition}

\begin{definition}[Behavioral Divergence]
The behavioral divergence between consecutive model states is:
\begin{equation}
    \delta_t = D_{\mathrm{KL}}\!\left(p_{W_{t-1}}(\cdot \mid \mathcal{R}) \;\big\|\; p_{W_t}(\cdot \mid \mathcal{R})\right) = \sum_{i=1}^m \sum_y p_{W_{t-1}}(y \mid x_i) \log \frac{p_{W_{t-1}}(y \mid x_i)}{p_{W_t}(y \mid x_i)}
\end{equation}
\end{definition}

\begin{definition}[Cumulative Drift]
The cumulative behavioral drift from the validated state $W_0$ to the current state $W_t$ is:
\begin{equation}
    \Delta_t = \sum_{s=1}^t \delta_s
\end{equation}
\end{definition}

\begin{remark}
$\Delta_t$ is an additive book-keeping quantity, not a distance: KL divergence is not a metric and obeys no triangle inequality, so in general $\Delta_t$ neither upper- nor lower-bounds the end-to-end divergence $D_{\mathrm{KL}}(p_{W_0}\|p_{W_t})$. For small consecutive steps, however, each increment admits the second-order expansion $\delta_s \approx \tfrac{1}{2}(\theta_s-\theta_{s-1})^\top \mathcal{I}(\theta_{s-1})\,(\theta_s-\theta_{s-1})$, where $\mathcal{I}$ is the Fisher information; i.e.\ $\delta_s$ is a \emph{squared} Fisher--Rao length, and $\sqrt{2\delta_s}$ is the (locally metric) Fisher--Rao distance. Consequently $\Delta_t = \sum_s \delta_s$ is a sum of squared infinitesimal lengths --- an action-like, path-dependent functional --- and should be interpreted as a cumulative \emph{activity} measure rather than as a displacement from $W_0$. For practical validation purposes it remains a useful monotone scalar summary of total behavioral movement, but claims that it bounds net drift require the path to be quasi-monotone in distribution space.
\end{remark}

We argue that $(\delta_t, h_t, \tau_t)$ constitutes a \emph{minimal audit packet} for discrete-time auditing, where $h_t$ is a cryptographic hash of the model state and $\tau_t$ is a timestamp. We use ``minimal'' in the operational sense that each component is non-redundant for the validatory objective; we do \emph{not} claim sufficiency in the formal Fisher--Neyman sense, as no parametric likelihood for the audit decision is posited here. Weight-space distances such as $\|W_t - W_{t-1}\|_F$ are an \emph{inadequate} primary metric (cf.\ Study~A1): large weight moves in directions to which the output distribution is insensitive can produce small behavioral change, while small weight moves in sensitive directions can produce large behavioral change. We record this as the design principle motivating $\delta_t$.
\label{principle:frobenius_inadequate}

\subsection{The Merkle Weight Chain}

To ensure that the telemetry record is tamper-evident, we chain the state digests cryptographically.

\begin{definition}[Merkle Weight Chain]
Define the chain hash sequence by:
\begin{equation}
    h_0 = \mathrm{Hash}(W_0), \qquad h_t = \mathrm{Hash}\!\left(W_t \;\big\|\; h_{t-1} \;\big\|\; \mathrm{Hash}(\mathcal{B}_t)\right)
\end{equation}
where $\|\cdot\|$ denotes concatenation and $\mathrm{Hash}$ is a collision-resistant cryptographic hash function (e.g., SHA-3-256).
\end{definition}

The integrity of this record is itself a governance property. Aggregated telemetry is only as trustworthy as its least tamper-resistant link: if a firm can retroactively rewrite the history of how its model adapted, any reconstruction of its exposure rests on data that can be edited after the fact, and precisely the drift that tends to precede stress could be erased from the record once it became inconvenient. The following result forecloses this, reducing trust in the entire history to verification of a single hash.

\begin{proposition}[Tamper-Evidence]
\label{prop:tamper_ev}Under the collision-resistance assumption of the hash function, any retroactive modification to $W_s$ for $s < t$ invalidates every $h_{s'}, s' \geq s$ with overwhelming probability. The validator need only verify the chain tip $h_T$ to certify the integrity of the entire history.
\end{proposition}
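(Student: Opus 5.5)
The plan is a standard hash-chain reduction: I will show that any adversary who produces a modified history consistent with an honest chain value yields an explicit collision of $\mathrm{Hash}$. The setting is as follows. Fix the honest history $(W_0,\mathcal{B}_1,W_1,\ldots,\mathcal{B}_T,W_T)$ with chain values $h_0,\ldots,h_T$, and suppose an adversary outputs a modified history $(W'_0,\mathcal{B}'_1,W'_1,\ldots)$ with chain values $h'_0,\ldots,h'_T$. Here $W'_s\neq W_s$ for some $s<t$, and we let $s$ be the smallest modified index. For $s'<s$ the inputs agree, so $h'_{s'}=h_{s'}$.

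The core is an induction on $s'$ from $s$ up to $T$, with the claim that $h'_{s'}=h_{s'}$ for some $s'\ge s$ yields a collision. The base case is $s'=s$. There $h_s$ and $h'_s$ are hashes of the strings $W_s\|h_{s-1}\|\mathrm{Hash}(\mathcal{B}_s)$ and $W'_s\|h_{s-1}\|\mathrm{Hash}(\mathcal{B}'_s)$; for $s=0$ they are $\mathrm{Hash}(W_0)$ and $\mathrm{Hash}(W'_0)$. These strings differ because $W'_s\neq W_s$. For the inductive step, take $s'>s$ and let $s''\in[s,s')$ be the last index at which the preimages differ. Then $h'_{s'}=h_{s'}$ forces equality all the way down to $h'_{s''}=h_{s''}$, and that equality is a collision on distinct inputs. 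Formally, I will build an algorithm $\mathcal{A}'$ that walks backward from the first index where the chain values coincide and outputs the pair of distinct preimages whose hash values match. Its success probability equals the adversary's probability of preserving any $h_{s'}$ with $s'\ge s$. Collision resistance makes that probability negligible. A union bound over the at most $T$ indices keeps it negligible, which is the first claim.

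The second claim follows from the same reduction applied at $s'=T$: any history that differs anywhere yet reproduces the trusted tip $h_T$ yields a collision. The validator therefore recomputes the chain from the supplied $(W_s,\mathcal{B}_s)$ and checks it against $h_T$.

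The main obstacle is making the encoding unambiguous. Concatenation must be injective: $W\|h\|\mathrm{Hash}(\mathcal{B})$ must parse uniquely. Otherwise distinct histories could map to the same string without any collision of $\mathrm{Hash}$. I will resolve this by fixing a canonical fixed-length serialization of $W$ and noting that $h$ and $\mathrm{Hash}(\mathcal{B})$ have fixed length; length-prefixing is the alternative. A second caveat is that the guarantee holds only when $h_T$ was committed to the validator (e.g.\ transmitted at time $T$) before the attempted rewrite. If the firm can recompute and re-publish the whole chain, no hash property helps, so I will state this as an explicit assumption rather than prove it.
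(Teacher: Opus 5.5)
Your proposal is correct and is essentially the paper's argument. The paper shows $\hat h_s\neq h_s$ by collision resistance and propagates the discrepancy forward to $\hat h_T\neq h_T$; its phrase ``since $h_{s+1}$ depends on $h_s$'' quietly reapplies collision resistance at each link, which your backward-walking reduction makes explicit. Your two caveats are left implicit in the paper and are worth keeping: an injective serialization of $W\|h\|\mathrm{Hash}(\mathcal{B})$, and the requirement that $h_T$ be committed to the validator before any rewrite (the paper's ``validator holding $h_T$'').
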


\begin{proof}
\begin{sloppypar}
Suppose an adversary modifies $W_s \to \hat{W}_s$ for some $s < T$. Then $\hat{h}_s = \mathrm{Hash}(\hat{W}_s \| h_{s-1} \| \mathrm{Hash}(\mathcal{B}_s)) \neq h_s$ with overwhelming probability (by collision resistance). Since $h_{s+1}$ depends on $h_s$, the modification propagates forward: $\hat{h}_{s+k} \neq h_{s+k}$ for all $k \geq 0$. In particular, $\hat{h}_T \neq h_T$, which is detectable by the validator holding $h_T$.
\end{sloppypar}
\end{proof}

The Merkle Weight Chain is structurally identical to a blockchain \citep{nakamoto2008bitcoin} applied to model state, but without the requirement for distributed consensus: the validator is the sole external verifier.

\subsection{Event-Driven Logging via Stopping Times}

Logging every step $t$ is impractical at scale. We propose a principled alternative: log only at \emph{stopping times} defined by behavioral divergence crossings.

\begin{definition}[Behavioral Stopping Times]
\label{def:disc_stopping}
\begin{equation}
    \tau_0 = 0, \qquad \tau_{k+1} = \inf\!\left\{t > \tau_k : \Delta_t - \Delta_{\tau_k} > \epsilon\right\}
\end{equation}
for a validator-specified threshold $\epsilon > 0$.
\end{definition}

Under this scheme:
\begin{itemize}
    \item A \textbf{static model} generates zero log packets: $\tau_1 = \infty$.
    \item A \textbf{slowly adapting model} generates sparse packets with large inter-arrival times.
    \item A \textbf{rapidly adapting model} generates dense packets, automatically flagging the adaptation intensity to the validator.
\end{itemize}

The log packet at $\tau_k$ contains $\{h_{\tau_k}, \Delta_{\tau_k}, \tau_k\}$. The total number of packets is proportional to the total behavioral movement of the model, which is the quantity of governance interest.

\begin{remark}
The inter-arrival times $\{\tau_{k+1} - \tau_k\}$ are themselves informative. A gap that is anomalously long --- the model claims to be static for an extended period --- is a signal for the concealment-detection mechanisms of Part~II (Section~\ref{sec:hiding}).
\end{remark}

\section{The Continuous-Time Limit}
\label{sec:continuous}

\subsection{The Weight SDE}

In the small-learning-rate regime $\eta \to 0$, the discrete SGD recursion admits a weak (order-one) approximation by an It\^{o} SDE under the time identification $t = k\eta$ (one update per $\eta$ units of continuous time) \cite{li2017stochastic, mandt2017stochastic}; for the underlying It\^{o} and semimartingale theory see \cite{oksendal2003stochastic} and \cite{protter2005stochastic}, and for stochastic-approximation limits \cite{kushner2003stochastic}:

\begin{equation}
    dW_t = -\nabla_W J(W_t)\, dt + \sqrt{\eta}\, \Sigma(W_t)^{1/2}\, dB_t
    \label{eq:sde}
\end{equation}

where:
\begin{itemize}
    \item $J(W_t) = \mathbb{E}_{\mathcal{B}}[\mathcal{L}(W_t, \mathcal{B})]$ is the population loss;
    \item $\Sigma(W_t) = \mathrm{Cov}_{\mathcal{B}}[\nabla_W \mathcal{L}(W_t, \mathcal{B})]$ is the gradient covariance matrix, capturing mini-batch stochasticity;
    \item $\{B_t\}_{t \geq 0}$ is a standard Brownian motion on $\mathbb{R}^d$.
\end{itemize}

The weight process $\{W_t\}$ is a continuous-time It\^{o} diffusion on $\mathcal{W}$. The audit problem is to summarize the path of this diffusion without storing the full trajectory.

\subsection{Quadratic Variation as the Canonical Telemetry Scalar}

\begin{definition}[Quadratic Variation of the Weight Process]
The quadratic variation of the weight process $\{W_t\}$ is:
\begin{equation}
    [W]_t = \int_0^t \mathrm{tr}\!\left(\eta\, \Sigma(W_s)\right)\, ds
\end{equation}
\end{definition}

\begin{proposition}[Properties of the Quadratic Variation]
\label{prop:qv}
The quadratic variation satisfies:
\begin{enumerate}[label=(\roman*)]
    \item \textbf{Incremental computability:} $d[W]_t = \eta\, \mathrm{tr}(\Sigma_t)\, dt$, so it can be accumulated online at each training step.
    \item \textbf{Activity indicator:} $[W]_t$ is non-decreasing; its slope $\eta\, \mathrm{tr}(\Sigma_t)$ equals zero if and only if the model is not learning.
    \item \textbf{Noise decomposition:} $[W]_t$ isolates the stochastic component of learning, separating it from the deterministic drift $\int_0^t \|\nabla J(W_s)\|^2\, ds$.
\end{enumerate}
\end{proposition}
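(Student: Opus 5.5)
The plan is to work directly from the SDE \eqref{eq:sde} and the definition of $[W]_t$, using the standard identification of the quadratic variation of a continuous semimartingale with the integrated trace of its diffusion coefficient. Write $W_t = W_0 + A_t + M_t$, where $A_t = -\int_0^t \nabla J(W_s)\,ds$ is the finite-variation part and $M_t = \int_0^t \sqrt{\eta}\,\Sigma(W_s)^{1/2}\,dB_s$ is a continuous local martingale. I will assume the coefficients are locally Lipschitz with at most linear growth, so that a strong solution exists and the stochastic integral is well defined. Since $A$ has finite variation and is continuous, $[A]=0$ and $[A,M]=0$. Hence $[W]_t = [M]_t = \sum_{i=1}^d [M^i]_t = \int_0^t \eta\,\mathrm{tr}(\Sigma(W_s))\,ds$. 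This agrees with the definition and justifies calling it the quadratic variation, namely the limit in probability of $\sum \|W_{t_{k+1}}-W_{t_k}\|^2$ over refining partitions.

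For (i), differentiate the Lebesgue integral. The integrand $s\mapsto \eta\,\mathrm{tr}(\Sigma(W_s))$ is continuous because $W$ has continuous paths and $\Sigma$ is continuous, so $d[W]_t = \eta\,\mathrm{tr}(\Sigma_t)\,dt$ pathwise. Online accumulation is then just a left-Riemann sum $\sum_k \eta\,\mathrm{tr}(\widehat\Sigma_{k})\,\Delta t$. In the discrete picture, $\mathrm{tr}(\widehat\Sigma_k)$ is estimable from the per-example gradients already computed at step $k$. I would remark that this uses only quantities available at training time.

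For (ii), monotonicity follows because $\Sigma(W)$ is a covariance matrix, hence positive semidefinite, so $\mathrm{tr}(\Sigma)\ge 0$. For the zero-slope characterization, note that $\mathrm{tr}(\Sigma)=\mathbb{E}_{\mathcal B}\|\nabla\mathcal L - \nabla J\|^2$, which vanishes iff $\Sigma=0$. This is the main obstacle. As stated, ``not learning'' must be interpreted as ``no stochastic update,'' because a model with $\Sigma=0$ but $\nabla J\neq 0$ still moves deterministically. I would therefore make the interpretation explicit: $\mathrm{tr}(\Sigma_t)=0$ iff the diffusion component vanishes. In the production setting the learning mechanism is stochastic, and any nonzero step has $\eta>0$ and nondegenerate mini-batch noise unless all per-batch gradients coincide, so the slope is zero iff no update is being performed. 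I would also note the caveat that a purely deterministic drift is invisible to $[W]$. Part II presumably exploits exactly this, so stating it here keeps the proposition honest.

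For (iii), the decomposition $W = W_0 + A + M$ together with $[A]=0$ and $[A,M]=0$ shows that $[W]$ depends only on $M$. The drift energy $\int_0^t\|\nabla J(W_s)\|^2ds$ contributes nothing to $[W]_t$; it enters only through first-order or energy functionals such as the It\^o expansion of $J(W_t)$. To make this concrete I would apply It\^o's formula to $J$:
\[
J(W_t) = J(W_0) - \int_0^t \|\nabla J\|^2\,ds + \tfrac{\eta}{2}\int_0^t \mathrm{tr}(\Sigma\,\nabla^2 J)\,ds + \text{mart.}
\]
This exhibits the drift term and the noise term as separate additive contributions, with $[W]$ capturing the latter.
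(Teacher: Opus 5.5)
Your proof is correct and follows essentially the same route as the paper. Both arguments rest on the same ingredients:
\begin{itemize}
\item[(i)] split $W$ into a finite-variation drift and a martingale part, and note that the drift contributes nothing to the bracket, so $[W]_t=\int_0^t\eta\,\mathrm{tr}(\Sigma_s)\,ds$;
\item[(ii)] use positive semidefiniteness of $\Sigma$ for monotonicity, together with ``zero trace iff zero covariance'' under the same ``not learning stochastically'' reading;
\item[(iii)] observe that $[W]$ depends on the path only through $\Sigma$.
\end{itemize}
The only differences are minor: the paper checks $[A]=0$ and $[A,M]=0$ by an explicit mesh and Cauchy--Schwarz estimate where you cite them as standard, and your optional It\^o expansion of $J$ has noise term $\tfrac{\eta}{2}\mathrm{tr}(\Sigma\nabla^2 J)$, a Hessian-weighted trace rather than $d[W]_t$, so ``$[W]$ capturing the latter'' holds only up to that curvature weighting.
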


\begin{proof}
Write \eqref{eq:sde} as $dW_t = b_t\,dt + \sqrt{\eta}\,\Sigma_t^{1/2}dB_t$ with $b_t = -\nabla_W J(W_t)$ and $\Sigma_t = \Sigma(W_t)$.

(i) The quadratic variation of a $d$-dimensional It\^o process is $[W]_t = \sum_{k=1}^d [W^k]_t$, and the finite-variation (drift) part contributes nothing to it: for a partition of mesh $\delta$, $\sum_i \|b_{t_i}\|^2 (t_{i+1}-t_i)^2 \le \delta\, t \sup_s\|b_s\|^2 \to 0$, while the cross terms vanish by Cauchy--Schwarz. Only the martingale part survives, and $d[\,\sqrt{\eta}\,\Sigma^{1/2}B\,]_t = \eta\,\mathrm{tr}\!\left(\Sigma_t^{1/2}(\Sigma_t^{1/2})^\top\right)dt = \eta\,\mathrm{tr}(\Sigma_t)\,dt$ since $\Sigma_t$ is symmetric positive semi-definite. Because the integrand depends only on the current state, the integral can be accumulated by a running sum at each step.

(ii) $\Sigma_t \succeq 0$ implies $\mathrm{tr}(\Sigma_t) \ge 0$, so $t\mapsto[W]_t$ is non-decreasing. Moreover $\mathrm{tr}(\Sigma_t) = 0$ iff $\Sigma_t = 0$ (a positive semi-definite matrix with zero trace is zero), i.e. iff the mini-batch gradient has zero covariance --- no stochastic parameter movement. This is the sense in which a zero slope certifies that the model is not learning stochastically; a model at a deterministic fixed point with $\Sigma_t\neq0$ still accumulates $[W]_t$.

(iii) By construction $[W]_t$ depends on the path only through $\Sigma$, not through $b_t$, so it is invariant to the drift. The drift energy $\int_0^t\|\nabla J(W_s)\|^2ds$ is a separate functional; neither determines the other, which is precisely why both are logged.
\end{proof}

$[W]_t$ is the continuous-time analog of the discrete-time $\sum_s \|W_s - W_{s-1}\|_F^2$, with the important correction that it captures only the diffusion component of movement, not the deterministic gradient descent component. A model that is learning but has reached a fixed point (gradient near zero, stochastic perturbations still active) will show $[W]_t$ increasing; a truly frozen model will show $[W]_t$ flat. This distinction is meaningful for governance. The slope $\eta\,\mathrm{tr}(\Sigma_t)$ is an instantaneous measure of the model's learning intensity, and tracking it across a population of deployed models is the natural way to detect that many models are adapting in concert---the precondition for correlated, portfolio-level dynamics.

\subsection{KL Stopping Times in Continuous Time}

The event-driven logging principle carries directly to continuous time.

\begin{definition}[Continuous-Time KL Stopping Times]
\label{def:ct_stopping}
\begin{equation}
    \tau_0 = 0, \qquad \tau_{k+1} = \inf\!\left\{t > \tau_k : D_{\mathrm{KL}}\!\left(p_{W_{\tau_k}}(\cdot\mid\mathcal{R}) \;\big\|\; p_{W_t}(\cdot\mid\mathcal{R})\right) > \epsilon\right\}
\end{equation}
The reference distribution is \emph{reset} to the current state $W_{\tau_k}$ at each event, so each inter-arrival interval measures fresh divergence accrued since the last log packet.
\end{definition}

The log packet at each $\tau_k$ is $\left\{\mathrm{Hash}(W_{\tau_k}),\; [W]_{\tau_k},\; \tau_k\right\}$.

\begin{theorem}[Renewal Structure of Stopping Times]
\label{thm:renewal}
Assume the weight process \eqref{eq:sde} is positive Harris recurrent with a unique stationary law $\pi$ and is initialized from $\pi$ (or run to stationarity). Then the inter-arrival times $\{T_k\}_{k \geq 1}$, $T_k = \tau_k - \tau_{k-1}$, of the \emph{reset} stopping rule of Definition~\ref{def:ct_stopping} form a stationary renewal sequence, and
\begin{equation}
    \frac{1}{n}\sum_{k=1}^n T_k \;\xrightarrow{\text{a.s.}}\; \bar{T} = \mathbb{E}_\pi[T_1],
\end{equation}
the expected time for the divergence from a stationary-distributed reference state to first reach $\epsilon$. For small $\epsilon$, $\bar{T} \approx \epsilon / \bar{\delta}$, where $\bar\delta = \tfrac{1}{2}\,\eta\,\mathbb{E}_\pi[\operatorname{tr}(\mathcal{I}\,\Sigma)]$ is the stationary mean instantaneous KL rate and $\mathcal{I}$ is the Fisher information of the output map.
\end{theorem}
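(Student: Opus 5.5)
Two facts do the work: the strong Markov property at each stopping time, and stationarity of the embedded chain of reference states. The law of large numbers comes from an ergodic theorem, and the approximation $\bar T\approx\epsilon/\bar\delta$ comes from a second-order Itô expansion of the KL functional.

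\emph{Markov structure at the stopping times.} Because of the reset rule in Definition~\ref{def:ct_stopping}, $T_{k+1}$ is a measurable functional of the post-$\tau_k$ path $(W_{\tau_k+s})_{s\ge 0}$ alone:
\[
T_{k+1}=F\bigl((W_{\tau_k+s})_{s\ge0}\bigr),\qquad F(w)=\inf\{s>0:D_{\mathrm{KL}}(p_{w_0}\|p_{w_s})>\epsilon\}.
\]
Each $\tau_k$ is a stopping time of the continuous adapted process $t\mapsto D_{\mathrm{KL}}(p_{W_{\tau_k}}\|p_{W_t})$. By the strong Markov property of the Itô diffusion~\eqref{eq:sde}, conditional on $\mathcal F_{\tau_k}$ the post-$\tau_k$ path is a copy of the diffusion started at $W_{\tau_k}$. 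Hence $Y_k:=W_{\tau_k}$ is a Markov chain with kernel $Q(w,\cdot)=\mathbb P_w(W_{T_1}\in\cdot)$, and $(Y_k,T_{k+1})$ is a Markov-modulated renewal sequence.

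\emph{Stationarity and the ergodic limit.} This is the main obstacle. Stationarity of $W_t$ under $\pi$ does \emph{not} make $\pi$ invariant for $Q$: sampling a process at first-passage times biases the sampled law. I would therefore first prove that $Q$ is positive Harris recurrent with some invariant law $\tilde\pi$, inheriting Harris recurrence from the diffusion through a common small set and using $\mathbb E_w[T_1]<\infty$. I would then read the statement's ``$\mathbb E_\pi[T_1]$'' as $\mathbb E_{\tilde\pi}[T_1]$, or note that the two coincide to leading order as $\epsilon\to0$. Starting $Y_0\sim\tilde\pi$ makes $(T_k)$ stationary, and Birkhoff's ergodic theorem (or the LLN for Harris chains) gives $n^{-1}\sum_k T_k\to\mathbb E_{\tilde\pi}[T_1]$ almost surely. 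Strictly, the sequence is Markov renewal rather than i.i.d., so ``renewal'' must be read in that sense. For the small-$\epsilon$ regime I would also bound $\mathbb E_w[T_1]$ by the Lyapunov drift conditions of Section~\ref{sec:lyapunov}, which yields integrability uniformly on compacts.

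\emph{The small-$\epsilon$ rate.} Fix a reference $w$ and set $f(v)=D_{\mathrm{KL}}(p_w\|p_v)$. Then $f(w)=0$, $\nabla f(w)=0$ and $\nabla^2 f(w)=\mathcal I(w)$. Itô's formula gives
\[
df(W_t)=\nabla f\cdot dW_t+\tfrac12\eta\operatorname{tr}\bigl(\nabla^2 f\,\Sigma\bigr)\,dt,
\]
and near $t=0$ the $dt$ term is $\tfrac12\eta\operatorname{tr}(\mathcal I\Sigma)$ plus higher-order terms, while the martingale term is small because $\nabla f\approx\mathcal I(W_t-w)$. Optional stopping at $T_1$ gives $\epsilon=\mathbb E_w[f(W_{T_1})]=\mathbb E_w\!\int_0^{T_1}\tfrac12\eta\operatorname{tr}(\nabla^2 f\,\Sigma)\,ds$. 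For small $\epsilon$ the integrand is close to its value at $w$, so $\mathbb E_w[T_1]\approx\epsilon/\bigl(\tfrac12\eta\operatorname{tr}(\mathcal I\Sigma)(w)\bigr)$.

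Averaging over $w\sim\tilde\pi\approx\pi$ then needs one more point. The exact average $\mathbb E[\epsilon/r(w)]$, with $r(w)=\tfrac12\eta\operatorname{tr}(\mathcal I\Sigma)(w)$, equals $\epsilon/\mathbb E[r]$ only under a Jensen-type approximation (rate nearly constant on the support of $\pi$). The cleaner route is a rate argument: by the ergodic theorem, total elapsed time divided by the number of events tends to $\epsilon/\bar\delta$ with $\bar\delta=\mathbb E_\pi[r]$. This is consistent with $\bar T$ because time averages weight states by $\pi$, not $\tilde\pi$. I would present the $\approx$ in exactly this time-average form and state the required regularity (smooth $\mathcal I$ and $\Sigma$, bounded on $\pi$-typical sets) as assumptions.
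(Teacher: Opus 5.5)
Your skeleton is the same as the paper's: the strong Markov property at the reset times, the embedded chain $W_{\tau_k}$, an ergodic theorem, and a second-order Fisher expansion. You depart from it at exactly the two steps where the paper's proof is unsound, and your versions are the correct ones.

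First, stationarity. The paper asserts that starting the diffusion from $\pi$ makes $\{W_{\tau_k}\}$ stationary. As you note, sampling at first-passage times yields a Palm-type invariant law $\tilde\pi\neq\pi$, so the almost-sure limit of $\frac1n\sum_k T_k$ is $\mathbb{E}_{\tilde\pi}[T_1]$.

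Second, the rate. The paper differentiates $h\mapsto\mathbb{E}[\phi_k(\tau_k+h)]$ at $0^+$, averages the slope over $\pi$ and inverts. That treats a first-passage time as the time at which a mean reaches $\epsilon$, and it silently replaces $\mathbb{E}[1/r]$ by $1/\mathbb{E}[r]$. Your Dynkin identity $\epsilon=\mathbb{E}_w\int_0^{T_1}\mathcal{L}f(W_s)\,ds$ is the right link between $\mathbb{E}_w[T_1]$ and the generator. Your renewal-reward argument (the limit is the reciprocal of the event rate, and the KL accrued per cycle, summed over cycles, is $\approx\int_0^t r(W_s)\,ds$) is what legitimately produces $\epsilon/\bar\delta$ with a $\pi$-average. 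The paper's route is shorter; yours proves something true.

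One thing to drop: the fallback that $\mathbb{E}_\pi[T_1]$ and $\mathbb{E}_{\tilde\pi}[T_1]$ ``coincide to leading order as $\epsilon\to0$'', and the phrase $\tilde\pi\approx\pi$. Combine Palm inversion, $\int g\,d\pi=\mathbb{E}_{\tilde\pi}\bigl[\int_0^{T_1}g(W_s)\,ds\bigr]/\mathbb{E}_{\tilde\pi}[T_1]$, with $W_s\approx W_0$ on $[0,T_1]$ and $\mathbb{E}_w[T_1]\approx\epsilon/r(w)$. This gives $\tilde\pi\propto r\,\pi$ in the small-$\epsilon$ limit, not $\tilde\pi\to\pi$: events oversample high-rate states. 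Hence $\mathbb{E}_{\tilde\pi}[T_1]\approx\epsilon/\mathbb{E}_\pi[r]$, whereas $\mathbb{E}_\pi[T_1]\approx\epsilon\,\mathbb{E}_\pi[1/r]$. These differ at leading order by precisely the Jensen gap you flag later, unless $r$ is $\pi$-a.s.\ constant.

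So the statement's two claims, limit $=\mathbb{E}_\pi[T_1]$ and limit $\approx\epsilon/\bar\delta$, cannot both hold in general. Your reading $\bar T=\mathbb{E}_{\tilde\pi}[T_1]$ is the one consistent with the rate formula. The paper's final formula comes out right only because its two errors cancel.

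Three minor points:
\begin{itemize}
\item The term $\nabla f\cdot\mu\,dt$ you drop is lower order because $\nabla f=O(\sqrt\epsilon)$ on the KL ball; say so explicitly.
\item The martingale term has zero mean at $T_1$ under optional stopping; it is not merely small.
\item Integrability of $T_1$ follows from the same Dynkin identity, $\mathbb{E}_w[T_1]\le\epsilon/\inf_{\{f\le\epsilon\}}\mathcal{L}f$ whenever that infimum is positive. This is more direct than using the conditions of Section~\ref{sec:lyapunov}, which control distance to $\mathcal{S}$ rather than exit from a KL ball.
\end{itemize}
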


\begin{proof}
The reset rule (Definition~\ref{def:ct_stopping}) restarts the divergence clock from the current state at each $\tau_k$: the next inter-arrival depends on the path only through $W_{\tau_k}$. By the strong Markov property, $\{(\tau_k, W_{\tau_k})\}$ is a Markov chain, and each cycle's duration $T_{k+1}$ is a deterministic functional of the post-$\tau_k$ excursion started at $W_{\tau_k}$. When the diffusion is positive Harris recurrent with stationary law $\pi$ and is initialized from $\pi$, the chain $\{W_{\tau_k}\}$ is itself stationary, so $\{T_k\}$ is a stationary, ergodic sequence (a stationary renewal sequence in the regenerative sense). Birkhoff's ergodic theorem then gives
\[
  \frac{1}{n}\sum_{k=1}^n T_k \;\xrightarrow{\text{a.s.}}\; \mathbb{E}_\pi[T_1] = \bar{T}.
\]
(If the process is started off-stationarity but is geometrically ergodic, the same limit holds with the first finitely many cycles contributing negligibly; the cycles are then asymptotically, not exactly, i.i.d.)

For the small-$\epsilon$ rate, write the local divergence accrued since the last reset as $\phi_k(t) = D_{\mathrm{KL}}\!\big(p_{W_{\tau_k}}\,\|\,p_{W_t}\big)$ for $t\in[\tau_k,\tau_{k+1}]$, with $\phi_k(\tau_k)=0$. By the Fisher expansion, $\mathbb{E}[\phi_k(\tau_k+h)] = \tfrac{1}{2}\,\mathbb{E}[\,(W_{\tau_k+h}-W_{\tau_k})^\top \mathcal{I}\,(W_{\tau_k+h}-W_{\tau_k})\,] + o(h)$, and the diffusion increment satisfies $\mathbb{E}[(W_{\tau_k+h}-W_{\tau_k})(\cdot)^\top] = \eta\,\Sigma\,h + o(h)$, so $\tfrac{d}{dh}\mathbb{E}[\phi_k]\big|_{h=0^+} = \tfrac{1}{2}\eta\operatorname{tr}(\mathcal{I}\Sigma) =: \bar\delta$ in the stationary average. Hence $\mathbb{E}_\pi[T_1] \approx \epsilon/\bar\delta$ to leading order in $\epsilon$.
\end{proof}

Theorem~\ref{thm:renewal} has practical consequences: under normal learning dynamics, the validator can form an expectation about the packet arrival rate. A sustained deviation from this expected rate is evidence of either anomalous adaptation or deliberate suppression of telemetry. Read across a portfolio of models, the renewal baseline $\bar{T}\approx\epsilon/\bar\delta$ is a per-model norm against which departures can be detected, and a simultaneous compression of inter-arrival times---many models suddenly adapting faster than their baselines---is a candidate leading indicator of a building regime shift that model-by-model monitoring is poorly placed to read.

\subsection{The Fokker--Planck Perspective}

Rather than tracking the weight trajectory as a single path, one may track the evolution of the probability density over weight space. For the SDE \eqref{eq:sde}, the general Fokker--Planck (forward Kolmogorov) operator is $\partial_t\rho_t = \nabla_W\!\cdot(\rho_t\nabla_W J) + \tfrac{\eta}{2}\sum_{ij}\partial_i\partial_j(\Sigma_{ij}\rho_t)$. Specializing to the isotropic, state-independent case $\Sigma(W)\equiv I_d$ for clarity, this reduces to

\begin{equation}
    \frac{\partial \rho_t}{\partial t} = \nabla_W \cdot \!\left(\rho_t\, \nabla_W J\right) + \frac{\eta}{2}\, \Delta_W \rho_t
    \label{eq:fp}
\end{equation}

where $\Delta_W$ is the Laplacian in weight space. The stationary solution to \eqref{eq:fp} is the Gibbs measure
\begin{equation}
    \rho_\infty(W) \propto \exp\!\left(-\frac{2 J(W)}{\eta}\right).
    \label{eq:gibbs}
\end{equation}
\begin{remark}[Scope of the Gibbs form, and consistency with Part~II]
The closed Gibbs form \eqref{eq:gibbs} is specific to isotropic, state-independent noise $\Sigma\equiv I_d$. For state-dependent or anisotropic $\Sigma(W)$ --- the empirically relevant case for SGD, where the gradient-noise covariance tracks the local loss geometry --- the stationary density is in general \emph{not} of Gibbs form $\propto e^{-2J/\eta}$: the forward operator carries a noise-induced drift $\tfrac{\eta}{2}\sum_{ij}\partial_i\partial_j(\Sigma_{ij}\rho)$ that does not integrate to a potential unless a detailed-balance condition ($\Sigma^{-1}\nabla J$ a gradient field, equivalently $\nabla\!\cdot\Sigma$ compatible with $\nabla J$) holds. This is the well-known fact that SGD does not, in general, sample a Gibbs measure of its training loss under anisotropic noise. We flag this explicitly because it is exactly the regime Part~II exploits: an adversary shapes $\Sigma(W)$ to hide drift in the noise structure (Attacks~N and~H), which is possible \emph{only} because the stationary law departs from the Gibbs form. Consequently, any inference of the energy landscape $J$ from an observed stationary density is valid only under the isotropic assumption; we use \eqref{eq:gibbs} purely as the reference (honest, isotropic) baseline against which those attacks are defined, not as a claim about the attacked dynamics.
\end{remark}

This shifts the audit question from ``has this specific weight trajectory drifted?'' to ``has the distribution over model states drifted?'' The validator can audit whether empirical output distributions are consistent with $\rho_t$ remaining concentrated near the validated region $\mathcal{S}$, without observing $W_t$ directly. This is the correct population-level framing for supervision of self-adapting models at scale.

\section{Lyapunov Stability Framework}
\label{sec:lyapunov}

\subsection{The Risk Lyapunov Function}

Let $\mathcal{S} \subset \mathcal{W}$ be the \emph{safe manifold}: the set of model configurations validated during the initial MRM review, characterized by acceptable performance, fairness, and risk metrics.

\begin{definition}[Risk Lyapunov Function]
\begin{equation}
    V(W) = \frac{1}{2}\left\|W - \proj_{\mathcal{S}}(W)\right\|^2
\end{equation}
where $\proj_{\mathcal{S}}$ denotes the orthogonal projection onto $\mathcal{S}$.
\end{definition}

$V(W)$ measures the squared distance from the weight vector $W$ to its nearest point in the validated set $\mathcal{S}$. It equals zero if and only if $W \in \mathcal{S}$.

\subsection{Discrete-Time Stability Condition}

For the discrete-time update rule, the auditable stability condition is:

\begin{equation}
    \Delta V_t = V(W_{t+1}) - V(W_t) \leq -\alpha \left\|W_t - \proj_{\mathcal{S}}(W_t)\right\|^2, \qquad \alpha > 0
    \label{eq:discrete_lyap}
\end{equation}

\begin{remark}
This corrects the formulation in some prior frameworks that bound $\Delta V$ by the gradient norm $\|\nabla \theta\|$ rather than the state deviation. The gradient norm is an inappropriate bound because it can vanish at saddle points while the model remains far from $\mathcal{S}$. The condition \eqref{eq:discrete_lyap} ensures that stability is governed by \emph{where the model is}, not \emph{how fast it is moving}.
\end{remark}

\subsection{Continuous-Time Stability via It\^{o}'s Lemma}

For the continuous-time weight SDE \eqref{eq:sde}, the Lyapunov function evolves according to It\^{o}'s lemma:

\begin{equation}
    dV(W_t) = \left[\nabla V(W_t)^\top (-\nabla J(W_t)) + \frac{\eta}{2}\, \tr\!\left(\Sigma(W_t)\, \nabla^2 V(W_t)\right)\right] dt + \nabla V(W_t)^\top \sqrt{\eta}\, \Sigma(W_t)^{1/2}\, dB_t
    \label{eq:ito_lyap}
\end{equation}

The first term in the drift is $-\langle \nabla V, \nabla J \rangle$ (the deterministic push toward the loss minimum), and the second is $\frac{\eta}{2}\tr(\Sigma\, \nabla^2 V)$ (the \emph{noise energy term}, which quantifies how stochastic weight perturbations interact with the curvature of $V$). The second-order term is absent in discrete-time formulations and is the source of the most subtle model risk.

\begin{definition}[Mean Stability Condition]
The weight process is \emph{mean-stable} with rate $\alpha > 0$ and floor $V_\infty \geq 0$ if its drift satisfies
\begin{equation}
    \mathcal{L}V(W_t) \;:=\; -\langle \nabla V, \nabla J\rangle + \tfrac{\eta}{2}\tr(\Sigma\,\nabla^2 V) \;\leq\; -\alpha\big(V(W_t) - V_\infty\big).
    \label{eq:mean_stable}
\end{equation}
Taking expectations and applying Gr\"onwall's inequality to $v(t) = \mathbb{E}[V(W_t)]$ gives
\begin{equation}
    \mathbb{E}[V(W_t)] \;\leq\; V_\infty + \big(V(W_0) - V_\infty\big)\,e^{-\alpha t},
\end{equation}
i.e.\ exponential return \emph{to a neighborhood of $\mathcal{S}$ of size $V_\infty$}, not to $\mathcal{S}$ itself.
\end{definition}

\begin{remark}
The floor $V_\infty$ is unavoidable whenever the diffusion is non-degenerate near $\mathcal{S}$: at $V=0$ the drift equals $\tfrac{\eta}{2}\tr(\Sigma\,\nabla^2 V) > 0$, so $V$ cannot be driven to $0$ in expectation. For the Ornstein--Uhlenbeck surrogate $dW = -\alpha W\,dt + \sigma\,dB$ one has $V_\infty = \sigma^2 d/(4\alpha)$ in continuous time. The earlier idealization $\mathbb{E}[V(W_t)]\le V(W_0)e^{-\alpha t}$ holds only in the noiseless limit $\Sigma\to 0$ and is contradicted by the stationary density $\rho_\infty \propto e^{-2J/\eta}$ derived above; see Study~A5, where $\mathbb{E}[V(W_t)]\to V_\infty>0$.
\end{remark}

The sign and magnitude of the noise energy term $\frac{\eta}{2}\tr(\Sigma\,\nabla^2 V)$ depend on the geometry of $\mathcal{S}$. If $\mathcal{S}$ is convex, $V(W)=\tfrac12\|W-\proj_{\mathcal{S}}(W)\|^2$ is convex with $\nabla^2 V \succeq 0$, so the noise term is always non-negative (purely destabilizing): isotropic gradient noise can only push $\mathbb{E}[V]$ \emph{up}, which is precisely the origin of the floor $V_\infty$. If $\mathcal{S}$ is a non-convex smooth manifold, $V$ is $C^2$ only within the reach (tubular neighborhood) of $\mathcal{S}$, and there $\nabla^2 V$ can have negative eigenvalues in directions where $\mathcal{S}$ curves away; the noise term is then sign-indefinite. Either way, a firm can modulate the term by shaping $\Sigma$ relative to the local curvature of $\mathcal{S}$ --- the continuous-time mechanism by which adversarial drift can be hidden in the noise structure of learning, addressed in detail in Part~II (Section~\ref{sec:continuous_attacks}). Throughout we assume $W_t$ remains within the reach of $\mathcal{S}$ so that $\nabla V = W - \proj_{\mathcal{S}}(W)$ and $\nabla^2 V$ are well defined.

\subsection{Almost-Sure Path-Wise Stability}

Mean stability allows individual trajectories to escape $\mathcal{S}$ provided they return quickly on average. For validation purposes, almost-sure path-wise bounds are more appropriate. The certificate established next is a statement of \emph{individual-model resilience}---a guarantee that a single model's risk state, once perturbed, returns to a bounded neighborhood rather than wandering off---and it is the building block from which any assessment of an ensemble of models must start.

\begin{theorem}[Almost-Sure Stability via Burkholder--Davis--Gundy]
\label{thm:as_stability}
\begin{sloppypar}
Let $M_t = \int_0^t \nabla V(W_s)^\top \sqrt{\eta}\, \Sigma(W_s)^{1/2}\, dB_s$ be the martingale term of \eqref{eq:ito_lyap}. By the Burkholder--Davis--Gundy inequality:
\end{sloppypar}
\begin{equation}
    \mathbb{E}\!\left[\sup_{0 \leq s \leq T} M_s^2\right] \leq C\, \mathbb{E}\!\left[\int_0^T \eta\, \left\|\Sigma(W_s)^{1/2} \nabla V(W_s)\right\|^2\, ds\right]
\end{equation}
and by the Law of the Iterated Logarithm:
\begin{equation}
    \limsup_{t \to \infty} \frac{|M_t|}{\sqrt{2 [M]_t \log\log [M]_t}} = 1 \quad \text{a.s.}
\end{equation}
where $[M]_t = \int_0^t \eta\, \|\Sigma^{1/2} \nabla V\|^2\, ds$ is the predictable quadratic variation of $M_t$.
\end{theorem}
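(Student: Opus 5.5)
The plan is to identify $M_t$ as a continuous local martingale and compute its quadratic variation. Both claims then follow from classical results, applied once the integrability and divergence hypotheses are made explicit.

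\textbf{Step 1: quadratic variation.} Write $M_t = \int_0^t H_s^\top dB_s$ with $H_s = \sqrt{\eta}\,\Sigma(W_s)^{1/2}\nabla V(W_s)$, using the symmetry of $\Sigma^{1/2}$. Since $W_t$ is assumed to stay within the reach of $\mathcal{S}$, $\nabla V(W) = W - \proj_{\mathcal{S}}(W)$ is continuous. Provided $\Sigma$ is continuous, $H$ is then a progressively measurable process with $\int_0^T\|H_s\|^2ds<\infty$ a.s. Hence $M$ is a continuous local martingale with $M_0=0$. The It\^o isometry, applied componentwise to the $d$-dimensional Brownian motion, gives
\[
[M]_t=\int_0^t \|H_s\|^2\,ds=\int_0^t \eta\,\|\Sigma(W_s)^{1/2}\nabla V(W_s)\|^2\,ds .
\]
Because $M$ is continuous, this process is both the quadratic variation and the predictable compensator $\langle M\rangle$. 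That matches the expression in the statement.

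\textbf{Step 2: BDG bound.} Apply the Burkholder--Davis--Gundy inequality with exponent $p=2$ to the continuous local martingale $M$ on $[0,T]$. This gives $\mathbb{E}[\sup_{s\le T}M_s^2]\le C\,\mathbb{E}[[M]_T]$. The bound holds without any integrability assumption, since both sides may be infinite; to be fully rigorous one proves it first for the stopped martingale $M^{\tau_n}$ and then lets $\tau_n\uparrow\infty$ by monotone convergence. For $p=2$ one can take $C=4$ by Doob's $L^2$ maximal inequality, provided $\mathbb{E}[M]_T<\infty$, in which case $M$ is a true $L^2$ martingale.

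\textbf{Step 3: LIL.} Here I would use the Dambis--Dubins--Schwarz theorem. On the event $\{[M]_\infty=\infty\}$ there is a standard one-dimensional Brownian motion $\beta$ with $M_t=\beta_{[M]_t}$. Since $t\mapsto[M]_t$ is continuous, nondecreasing and unbounded, it sweeps through every level $u$. Taking $\limsup_{t\to\infty}$ of $|\beta_{[M]_t}|/\sqrt{2[M]_t\log\log[M]_t}$ therefore equals $\limsup_{u\to\infty}|\beta_u|/\sqrt{2u\log\log u}$. Flat stretches of $[M]$ do not matter, because $M$ is constant on them. Khinchin's LIL for $\beta$ then gives the value $1$ a.s.

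\textbf{Main obstacle.} The delicate point is the hypothesis $[M]_\infty=\infty$ a.s., which the statement leaves implicit. On $\{[M]_\infty<\infty\}$, $M_t$ converges a.s. to a finite limit, and the normalized ratio tends to $0$ (or is undefined), not to $1$. I would state the result conditionally on $\{[M]_\infty=\infty\}$. I would then note a sufficient condition: the non-degeneracy $\lambda_{\min}(\Sigma)\ge\sigma_0>0$ near $\mathcal{S}$, together with the trajectory spending infinite time where $\|\nabla V\|$ is bounded below. The latter holds under the recurrence assumptions of Theorem~\ref{thm:renewal} when $\pi$ charges the complement of $\mathcal{S}$. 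Finally, I would remark that the LIL controls only the martingale part of \eqref{eq:ito_lyap}. Combined with the mean-stability drift condition \eqref{eq:mean_stable}, this yields path-wise bounds of order $\sqrt{[M]_t\log\log[M]_t}$ on fluctuations of $V(W_t)$ around its drift-driven decay.
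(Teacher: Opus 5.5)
Your proposal is correct and follows the paper's route: identify $M$ and $[M]_t$ from the It\^o decomposition of $V(W_t)$, apply BDG (your exponent $p=2$ is the paper's $2p$ with $p=1$), and invoke the LIL for continuous local martingales under the same added hypothesis $[M]_\infty=\infty$. The only difference is that you unpack that LIL via Dambis--Dubins--Schwarz plus Khinchin, which is exactly how the cited result is proved. One small correction to your side remark: on $\{[M]_\infty<\infty\}$ with $[M]_\infty>e$ the ratio converges to $|M_\infty|/\sqrt{2[M]_\infty\log\log[M]_\infty}$, which is generically nonzero, so it tends neither to $0$ nor to $1$ (the paper's ``zero limsup'' aside has the same slip); this only reinforces your point that the hypothesis $[M]_\infty=\infty$ must be added.
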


\begin{proof}
It\^{o}'s lemma applied to $V(W_t)$ along the SDE \eqref{eq:sde} gives the semimartingale decomposition
\begin{equation}
    V(W_t) = V(W_0) + A_t + M_t,
\end{equation}
where the finite-variation component is
\[
    A_t = \int_0^t \!\left(-\langle \nabla V(W_s),\, \nabla J(W_s)\rangle + \tfrac{\eta}{2}\operatorname{tr}\bigl(\Sigma(W_s)\nabla^2 V(W_s)\bigr)\right) ds,
\]
and the local martingale is $M_t = \int_0^t \nabla V(W_s)^\top \sqrt{\eta}\,\Sigma^{1/2}(W_s)\,dB_s$ with predictable bracket $[M]_t = \int_0^t \eta\|\Sigma^{1/2}\nabla V\|^2\,ds$.

\emph{BDG bound.} The Burkholder--Davis--Gundy inequality (see, e.g., \cite{revuz1999continuous}, Theorem~IV.4.1) asserts that for any $p\geq 1$ there exists a universal constant $C_p > 0$ such that $\mathbb{E}[\sup_{s\leq T}|M_s|^{2p}] \leq C_p\,\mathbb{E}[[M]_T^p]$.  For $p=1$ this gives the stated $L^2$ bound.

\emph{LIL for continuous martingales.} If $[M]_\infty = +\infty$ a.s., the LIL for continuous local martingales (\cite{revuz1999continuous}, Theorem~V.1.8) gives
\[
    \limsup_{t\to\infty}\frac{|M_t|}{\sqrt{2[M]_t\log\log [M]_t}} = 1 \quad \text{a.s.}
\]
If $[M]_\infty < \infty$ a.s.\ then $M_t$ converges a.s.\ and the result holds trivially with a zero limsup.

\emph{Ultimate boundedness (under stated growth hypotheses).}  We invoke two standard stochastic-stability hypotheses \citep{khasminskii2011stochastic}: \textbf{(H1)} the dissipative drift bound $\mathcal{L}V(W)\le -\alpha\big(V(W)-V_\infty\big)$ of \eqref{eq:mean_stable}; and \textbf{(H2)} a linear-growth bound on the diffusion term, $\|\Sigma^{1/2}(W)\nabla V(W)\|^2 \le K\big(1+V(W)\big)$ for some $K<\infty$. Crucially, (H2) is \emph{verified}, not assumed circularly: for $V=\tfrac12\|W-\proj_{\mathcal S}(W)\|^2$ one has $\nabla V = W-\proj_{\mathcal S}(W)$, so if $\Sigma$ is bounded ($\Sigma\preceq\sigma_{\max}^2 I$) then $\|\Sigma^{1/2}\nabla V\|^2 \le \sigma_{\max}^2\|\nabla V\|^2 = 2\sigma_{\max}^2 V$, i.e.\ (H2) holds with $K=2\sigma_{\max}^2$. (The earlier draft asserted ``$[M]_t=O(t)$'' directly, which would require $\sup_s\|\nabla V(W_s)\|<\infty$ --- the very boundedness being proved; (H2) replaces that step with a structural bound that makes $d[M]_t/dt \le 2\eta\sigma_{\max}^2 V$ controlled \emph{by $V$ itself}.) Taking expectations of (H1) and applying Gr\"onwall gives the self-contained mean bound $\mathbb{E}[V(W_t)]\le V_\infty + (V(W_0)-V_\infty)e^{-\alpha t}$. For the almost-sure conclusion, (H1)+(H2) are exactly the hypotheses of the stochastic Lyapunov/ultimate-boundedness theorem (\citealp{khasminskii2011stochastic}, Ch.~5--6): the process is bounded in probability and
\[
    \limsup_{t\to\infty} V(W_t) \;\leq\; V_\infty \quad\text{a.s.},
\]
so it is recurrent to the neighborhood $\{V \leq V_\infty\}$ of $\mathcal{S}$.  Convergence $V(W_t)\to 0$ does \emph{not} hold when $\Sigma\not\to 0$ near $\mathcal{S}$: the non-degenerate stationary law forces persistent fluctuations of order $V_\infty$ about $\mathcal{S}$. (In the degenerate limit $\Sigma\to 0$, $V_\infty\to 0$ and one recovers a.s.\ convergence to $\mathcal{S}$.)
\end{proof}

The bound in Theorem~\ref{thm:as_stability} is computable from the telemetry stream: both $\Sigma_t$ (from the gradient covariance) and $\nabla V(W_t)$ (from the model state hash and proximity to $\mathcal{S}$) can be logged incrementally. A practical, monitorable form of the path-wise condition requires the drift to dominate the noise envelope outside the floor neighborhood: there exist $\alpha>0$, $V_\infty\geq 0$, and (eventually) for the LIL constant the certificate
\begin{equation}
    \begin{split}
    &\underbrace{-\langle \nabla V, \nabla J \rangle + \tfrac{\eta}{2}\tr(\Sigma\, \nabla^2 V)}_{\text{drift }\mathcal{L}V} \;\leq\; -\alpha\big(V(W_t)-V_\infty\big)
    \quad\text{and}\\[2pt]
    &\qquad     V(W_t)\le V_\infty + \frac{(1+\varepsilon)\sqrt{2[M]_t\log\log[M]_t}}{\alpha t}\ \ \text{a.s.\ eventually.}
    \label{eq:as_stable}
    \end{split}
\end{equation}
The first inequality is the drift certificate; the second is the LIL envelope, which decays as $t^{-1/2}\sqrt{\log\log t}\to 0$, so the right-hand side contracts to $V_\infty$.

\begin{corollary}
\label{cor:as_ultimate}
If the drift certificate in \eqref{eq:as_stable} holds for all $t \geq 0$, then $\limsup_{t\to\infty} V(W_t) \leq V_\infty$ almost surely: the weight process is ultimately bounded and recurs to the neighborhood $\{V\le V_\infty\}$ of $\mathcal{S}$ with probability one. If the certificate is violated at any time, the telemetry probe triggers an automatic halt. (Exact return $V(W_t)\to 0$ holds only in the degenerate-noise limit $\Sigma\to 0$.)
\end{corollary}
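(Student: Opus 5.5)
The plan is to reduce the corollary to the ultimate-boundedness argument already carried out in the proof of Theorem~\ref{thm:as_stability}. The drift certificate in \eqref{eq:as_stable} is exactly hypothesis (H1), $\mathcal{L}V \le -\alpha(V - V_\infty)$. Hypothesis (H2) does not need to be assumed separately. It was verified structurally there from $\nabla V = W - \proj_{\mathcal S}(W)$ under the standing boundedness $\Sigma \preceq \sigma_{\max}^2 I$. So the first step is simply to observe that, if the certificate holds for all $t \ge 0$, both (H1) and (H2) hold along the whole path. The Khasminskii ultimate-boundedness theorem invoked in that proof then applies.

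For a self-contained path-wise argument rather than a citation, I would proceed in three steps.

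\emph{Step 1 (semimartingale decomposition).} Use $V(W_t) = V(W_0) + A_t + M_t$ with $dA_t = \mathcal{L}V\,dt \le -\alpha(V - V_\infty)\,dt$.

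\emph{Step 2 (variation of constants).} Apply It\^o to $e^{\alpha t}(V(W_t) - V_\infty)$ to get
\[
V(W_t) - V_\infty \le e^{-\alpha t}\bigl(V(W_0) - V_\infty\bigr) + \int_0^t e^{-\alpha(t-s)}\,dM_s .
\]

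\emph{Step 3 (control the stochastic convolution).} Write $N_t = \int_0^t e^{\alpha s}\,dM_s$, a continuous local martingale with bracket
\[
[N]_t = \int_0^t e^{2\alpha s}\,\eta\,\|\Sigma^{1/2}\nabla V\|^2\,ds \le 2\eta\sigma_{\max}^2 \int_0^t e^{2\alpha s}\, V(W_s)\,ds
\]
by (H2). Applying the LIL of Theorem~\ref{thm:as_stability} to $N$ (or the trivial convergence case if $[N]_\infty < \infty$) gives $|N_t| = O\bigl(\sqrt{[N]_t \log\log [N]_t}\bigr)$. Multiplying by $e^{-\alpha t}$ then shows the convolution is of order $\sqrt{\sup_{s\le t} V(W_s)\,\log t}$, which is sublinear in $V$.

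A stopping-time bootstrap closes the argument. Define $\sigma_R = \inf\{t : V(W_t) > R\}$. On $[0,\sigma_R]$ everything is bounded, and the sublinear self-bound $V \le V_\infty + c + C\sqrt{R\log t}$ forces $\sigma_R = \infty$ for large $R$ along almost every path. Hence $\sup_t V(W_t) < \infty$ almost surely. Feeding this finite supremum back in, the envelope term decays like the right-hand side of \eqref{eq:as_stable}, giving $\limsup V(W_t) \le V_\infty + \varepsilon'$ for every $\varepsilon' > 0$.

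\emph{Main obstacle.} I expect this last step to be the hard one. The LIL yields fluctuations of order $\sqrt{\log\log}$ times the bracket, and the rescaled stochastic convolution does not obviously vanish. It is stationary-like, of size $\sqrt{\eta\sigma_{\max}^2 V/\alpha}$. So the honest conclusion may be $\limsup V \le V_\infty$ only if $V_\infty$ is interpreted as already absorbing the stationary noise level, as in the Ornstein--Uhlenbeck remark. Otherwise one falls back on recurrence to $\{V \le V_\infty\}$ in the Khasminskii sense, that is, positive recurrence via a hitting-time bound $\mathbb{E}[\tau] \le V(W_0)/(\alpha\,\cdot\,\text{gap})$ from Dynkin's formula. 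I would first try the Dynkin-formula recurrence route, which is clean, and present the $\limsup$ statement as inherited from Theorem~\ref{thm:as_stability}.

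The halt clause is not a mathematical claim. It is definitional: the monitor checks the logged drift certificate and halts on violation. It needs only the remark that every quantity in $\mathcal{L}V$ is computable from the telemetry stream.
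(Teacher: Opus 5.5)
Your proposal does not prove the corollary. The step that fails is the stopping-time bootstrap.

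On $\{\sigma_R>t\}$ your Step~3 gives only $V(W_t)\le V_\infty+c+C\sqrt{R\log t}$. The right-hand side grows with $t$, so it never rules out an exit above $R$. Also, the LIL's ``eventually'' is a random time, so it cannot be applied uniformly on $[0,\sigma_R]$. Solving the self-bound for $R$ gives at best $\sup_{s\le t}V(W_s)=O(\log t)$, not $\sup_t V(W_t)<\infty$.

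This cannot be repaired, because the pathwise conclusion does not follow from the drift certificate at all. Take the paper's own Ornstein--Uhlenbeck surrogate: $J(W)=\tfrac{\alpha}{2}\|W\|^2$, $\eta\Sigma\equiv\sigma^2 I_d$, $\mathcal S=\{0\}$.
\begin{itemize}
\item Then $\mathcal LV=-2\alpha\bigl(V-\sigma^2 d/(4\alpha)\bigr)$ holds with equality, and (H2) holds.
\item Yet $W_t$ is a nondegenerate positive-recurrent Gaussian diffusion, so $V(W_t)$ exceeds every level at arbitrarily large times.
\item Hence $\limsup_{t\to\infty}V(W_t)=\infty$ a.s.; its running maximum grows like $\log t$, which is exactly the order your computation produces.
\end{itemize}
Your ``main obstacle'' paragraph diagnoses this correctly. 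But ``inheriting'' the limsup from Theorem~\ref{thm:as_stability} only moves the claim elsewhere. Khasminskii-type results under (H1)+(H2) give the mean bound, boundedness in probability and positive recurrence, not a pathwise limsup. So the reduction in your first paragraph also rests on a misquoted conclusion.

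The paper's proof shares your Steps~1--2, then reaches the limsup in two ways:
\begin{itemize}
\item It assumes \emph{both} inequalities of \eqref{eq:as_stable}. The second one is not in the corollary's hypothesis, already contains the conclusion, and fails a.s.\ in the example above.
\item It bounds $\int_0^t e^{-\alpha(t-s)}\,dM_s$ by $|M_t|/(\alpha t)$. That is exactly the step your Step~3 shows to be invalid.
\end{itemize}
Dividing by $\alpha t$ is legitimate for the time average. From $0\le V(W_t)\le V(W_0)-\alpha\int_0^t\bigl(V(W_s)-V_\infty\bigr)ds+M_t$ one gets
\[
\frac1t\int_0^t V(W_s)\,ds\;\le\;V_\infty+\frac{V(W_0)+M_t}{\alpha t}.
\]
Now use (H2), in the form $[M]_t\le 2\eta\sigma_{\max}^2\int_0^tV(W_s)\,ds$, together with the martingale strong law: $M_t=o([M]_t)$ on $\{[M]_\infty=\infty\}$, and $M_t$ converges otherwise. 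This gives $\limsup_{t\to\infty}\frac1t\int_0^tV(W_s)\,ds\le V_\infty$ a.s. The LIL then supplies the $t^{-1/2}\sqrt{\log\log t}$ rate that appears in \eqref{eq:as_stable}.

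What the drift certificate actually delivers is:
\begin{itemize}
\item this Ces\`aro bound;
\item the Gr\"onwall mean bound;
\item your Dynkin estimate $\mathbb E[\tau_\gamma]\le V(W_0)/(\alpha\gamma)$ for hitting $\{V\le V_\infty+\gamma\}$, for each $\gamma>0$.
\end{itemize}
State the result in that form. The halt clause is definitional, as you say.
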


\begin{proof}
Both statements in \eqref{eq:as_stable} are assumed to hold. Apply It\^o's formula to $V$ along
\eqref{eq:sde} and split into drift and martingale parts:
\[
  V(W_t) \;=\; V(W_0) \;+\; \int_0^t \mathcal{L}V(W_s)\,ds \;+\; M_t ,
  \qquad M_t = \int_0^t \nabla V(W_s)^\top\sqrt{\eta}\,\Sigma(W_s)^{1/2}\,dB_s .
\]
By the drift certificate, $\mathcal{L}V(W_s) \leq -\alpha\big(V(W_s)-V_\infty\big)$ for all $s$.
Writing $U_t = V(W_t)-V_\infty$, this gives $U_t \leq U_0 - \alpha\int_0^t U_s\,ds + M_t$. By the
comparison/Gronwall argument in integral form,
\[
  U_t \;\leq\; U_0 e^{-\alpha t} \;+\; \int_0^t e^{-\alpha(t-s)}\,dM_s ,
\]
so the deterministic part decays to zero. For the stochastic part, the second inequality of
\eqref{eq:as_stable} bounds the martingale contribution by the law of the iterated logarithm:
$|M_t| \leq (1+\varepsilon)\sqrt{2[M]_t\log\log[M]_t}$ a.s.\ for all large $t$, and dividing by
$\alpha t$ gives a bound of order $t^{-1/2}\sqrt{\log\log t}$ under (H2), which tends to zero. Hence
$\limsup_{t\to\infty} U_t \leq 0$ almost surely, i.e.\ $\limsup_{t\to\infty} V(W_t) \leq V_\infty$.
Recurrence to $\{V\leq V_\infty\}$ follows because $V\geq0$ and $V$ cannot remain strictly above
$V_\infty$ on a set of positive probability without contradicting this limsup.

Note the direction of the statement: it gives ultimate boundedness to a neighborhood of
$\mathcal{S}$, not convergence to $\mathcal{S}$. Exact return $V(W_t)\to0$ requires the degenerate
limit $\Sigma\to0$, since a non-degenerate stationary distribution keeps $V$ fluctuating at the
noise floor.
\end{proof}

The halt is, in effect, a model-level circuit breaker. That every model in a portfolio can hold such an individual certificate and yet the portfolio collectively fail to be stable---so that per-model resilience does not aggregate---is a limitation of any purely per-model certification regime.

\section{CUSUM Governance Trigger}
\label{sec:cusum}

\subsection{Motivation}

The tiered governance table of prior frameworks (Tier~1: micro-update, Tier~2: meso-update, Tier~3: macro-update) lacks a formal statistical trigger. We propose the Cumulative Sum (CUSUM) test on the behavioral divergence sequence $\{\delta_t\}$ as a statistically grounded escalation mechanism.

\subsection{The CUSUM Statistic}

\begin{definition}[Governance CUSUM]
Let $\kappa > 0$ be a reference value representing the expected behavioral divergence per step under normal micro-adaptation. The CUSUM statistic is:
\begin{equation}
    C_0 = 0, \qquad C_t = \max(0,\; C_{t-1} + \delta_t - \kappa)
    \label{eq:cusum}
\end{equation}
\end{definition}

The CUSUM test declares a Tier~$\ell$ event at stopping time:
\begin{align}
    T_{\ell} &= \inf\!\left\{t \geq 0 : C_t \geq h_\ell\right\}
\end{align}
where $0 < h_1 < h_2$ are tier-specific thresholds corresponding to Tier~2 (objective parameter shift requiring 24h shadow testing) and Tier~3 (macro-update requiring Human-in-the-Loop sign-off).

The choice of CUSUM is not arbitrary. Among detectors with a fixed tolerance for false alarms it flags a genuine shift as quickly as possible, and where model adaptation can outpace the validation cycle, detection \emph{delay} is itself a governance quantity: every period a building shift goes undetected is a period in which the production model diverges further from the validated one. The following classical result makes the optimality precise.

\begin{theorem}[CUSUM Optimality; Lorden \citep{lorden1971procedures}, Moustakides \citep{moustakides1986optimal}]
\label{thm:cusum_opt}
\emph{(Classical; stated without proof --- see the cited sources.)}
Suppose the increment fed to \eqref{eq:cusum} is the log-likelihood ratio $\log\frac{f_1(\delta_t)}{f_0(\delta_t)}$ between the post-change density $f_1$ and the pre-change density $f_0$ of the monitored statistic. Then among all stopping rules with average run length to false alarm $\geq \gamma$, the CUSUM rule minimizes Lorden's worst-case expected detection delay as $\gamma\to\infty$ \citep{moustakides1986optimal}.
\end{theorem}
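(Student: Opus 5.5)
This is a classical theorem, stated without proof in the paper, so the plan follows the standard route. Write $Z_t=\log\frac{f_1(\delta_t)}{f_0(\delta_t)}$. Let $\nu$ be the (deterministic, unknown) change point. We assume the $\delta_t$ are independent, with density $f_0$ before $\nu$ and $f_1$ from $\nu$ on. The criterion is Lorden's worst-case delay
\[
\bar{\mathbb{E}}_1(T)=\sup_{\nu\ge1}\operatorname*{ess\,sup}\mathbb{E}_\nu\!\left[(T-\nu+1)^+\,\middle|\,\mathcal{F}_{\nu-1}\right].
\]
The argument has two halves: an information-theoretic lower bound valid for every admissible stopping rule, and a matching upper bound for CUSUM with threshold $h=\log\gamma$. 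Together they give asymptotic optimality; the exact (non-asymptotic) optimality is Moustakides' separate argument, discussed at the end.

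\emph{Lower bound.} For any stopping rule $T$ with $\mathbb{E}_\infty[T]\ge\gamma$, we show
\[
\bar{\mathbb{E}}_1(T)\ge \frac{\log\gamma}{I}\,(1+o(1)),\qquad I=D_{\mathrm{KL}}(f_1\|f_0).
\]
Following Lorden, we regard $T$ as a sequence of one-sided sequential tests, one restarted at each candidate $\nu$. By Wald's identity, the expected post-change log-likelihood accumulated before stopping equals $I$ times the expected delay. A change-of-measure (Wald--Wolfowitz type) inequality then shows that, if the delay were smaller than $(1-\varepsilon)\log\gamma/I$ uniformly in $\nu$, the pre-change probability of stopping within a window of length about $\log\gamma/I$ would be at least $\gamma^{-(1-\varepsilon)}$. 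This contradicts $\mathbb{E}_\infty[T]\ge\gamma$.

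\emph{Upper bound.} We write CUSUM as $C_t=\max_{1\le k\le t}\sum_{j=k}^t Z_j$, truncated at $0$. Its stopping time $N$ is the first time some one-sided SPRT, started at some $k$, crosses $h$. Two estimates follow.
\begin{itemize}
\item False alarms: by Lorden's renewal argument together with the optional-stopping bound $\mathbb{P}_\infty(\text{SPRT crosses } h)\le e^{-h}$, we get $\mathbb{E}_\infty[N]\ge e^{h}=\gamma$.
\item Delay: since $C\ge0$, the worst case is $C_{\nu-1}=0$. By Wald's identity and the positive drift $I$, the delay is at most $\mathbb{E}_1[\text{SPRT stopping time}]=(h+O(1))/I$, with the $O(1)$ overshoot controlled by renewal theory under $\mathbb{E}_1 Z^2<\infty$.
\end{itemize}
Matching the two bounds gives $\bar{\mathbb{E}}_1(N)\sim\log\gamma/I$, which is the asymptotic minimax claim.

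\emph{Main obstacle.} The delicate step is the lower bound. It requires handling the essential supremum over pre-change histories, and it requires converting the ARL constraint into a bound on the probability of stopping within a single window. I would follow Lai's 1998 formulation, which makes this step clean. For the exact finite-$\gamma$ optimality attributed to Moustakides, the plan is different:
\begin{enumerate}
\item Relax the problem via a Lagrangian.
\item Reduce it to an optimal stopping problem for the Markov process $C_t$.
\item Show that the optimal rule is a threshold on $C_t$.
\end{enumerate}
The technical crux there is the boundary behavior at $C=0$. Since the paper states the result without proof, I would present the asymptotic argument and cite the exact result. I would also flag that applying it to the raw $\delta_t$ with offset $\kappa$, rather than to a genuine log-likelihood ratio, requires $\delta_t-\kappa$ to coincide with $\log(f_1/f_0)(\delta_t)$, as happens, for instance, in an exponential family.
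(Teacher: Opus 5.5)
Your proposal is correct. The paper gives no proof and defers to Lorden and Moustakides, and your sandwich argument is the classical proof behind that citation for the asymptotic claim as stated: Lorden's restarted-SPRT representation of CUSUM gives $\mathbb{E}_\infty N\ge e^{h}$ and $\bar{\mathbb{E}}_1 N\le \mathbb{E}_1 N_1=(h+O(1))/I$, which you match with the Lorden/Lai change-of-measure lower bound $(\log\gamma/I)(1+o(1))$. One caution on your aside about the exact result, though it does not affect your proof since you only cite it: Lorden's worst-case criterion is not itself an optimal-stopping problem for a Markov statistic, and your outline skips the step that makes it one, namely Moustakides's inequality $\bar{\mathbb{E}}_1(T)\ge \mathbb{E}_\infty\bigl[\sum_{k=0}^{T-1}\max(R_k,1)\bigr]\big/\mathbb{E}_\infty\bigl[\sum_{k=0}^{T-1}(1-R_k)^+\bigr]$ with $R_k=\max(R_{k-1},1)\,e^{Z_k}$, $R_0=0$, obtained by averaging the conditional delays over $\nu$ with weights $(1-R_{\nu-1})^+$ and tight for CUSUM, which is what allows the Lagrangian reduction to optimal stopping of $R_k$.
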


\begin{remark}[Applicability to the behavioral CUSUM]
The statistic \eqref{eq:cusum} uses the raw centred divergence $\delta_t-\kappa$ rather than a log-likelihood ratio. Theorem~\ref{thm:cusum_opt} therefore applies exactly only when $\delta_t-\kappa$ coincides (up to affine reparameterization) with the LLR of a one-parameter exponential-family shift in the divergence rate --- e.g.\ a change in the mean of an approximately Gaussian or exponential $\delta_t$. In general the centred-divergence CUSUM is a well-behaved, monotone detector with explicit ARL control (Remark~\ref{rmk:arl}) but is only \emph{near}-optimal; exact Lorden optimality requires plugging the calibrated LLR into \eqref{eq:cusum}.
\end{remark}

\begin{remark}
\label{rmk:arl}
The false alarm rate of \eqref{eq:cusum} can be set explicitly via the average run length under the null (no drift). For a Tier~2 trigger, a practical choice might be an ARL of 250 business days ($\approx 1$ year) under normal operation, corresponding to a threshold $h_1$ computable from the null distribution of $\delta_t$.
\end{remark}

\subsection{Continuous-Time CUSUM}

In the continuous-time limit, the CUSUM statistic becomes the reflected process

\begin{equation}
    \widetilde{C}_t = \xi_t - \kappa t - \min_{0 \leq s \leq t}\big(\xi_s - \kappa s\big),
    \qquad \xi_t = \int_0^t \frac{d}{du} D_{\mathrm{KL}}\!\left(p_{W_{s^\ast(u)}} \;\big\|\; p_{W_u}\right) du,
\end{equation}

where $s^\ast(u)$ denotes the last reset before $u$, consistent with Definition~\ref{def:ct_stopping}. The stopping time $\theta_\ell = \inf\{t : \widetilde{C}_t \geq h_\ell\}$ inherits a minimax optimality under the LLR-increment hypothesis of Theorem~\ref{thm:cusum_opt}; the closely related \emph{Bayesian} quickest-detection problem (minimizing expected delay under a prior on the change time) is solved instead by the Shiryaev procedure \citep{shiryaev1963optimum}. The two are distinct optimality criteria and we invoke the minimax one here.

\section{The Integrated Telemetry Architecture}
\label{sec:architecture}

\subsection{System Components}

The complete telemetry architecture integrates the components developed above into a five-layer system:

\begin{enumerate}
    \item \textbf{Probe Layer:} Embedded within the SGD optimiser. Computes $\delta_t$, $d[W]_t$, and the CUSUM increment $\delta_t - \kappa$ at each update step.
    \item \textbf{Hash Layer:} Computes the Merkle chain hash $h_t = \mathrm{Hash}(W_t \| h_{t-1} \| \mathrm{Hash}(\mathcal{B}_t))$ at each stopping time $\tau_k$.
    \item \textbf{Stability Layer:} Evaluates the Lyapunov condition \eqref{eq:discrete_lyap} or \eqref{eq:as_stable} at each step. Triggers immediate halt if violated.
    \item \textbf{Escalation Layer:} Monitors $C_t$ against $h_1, h_2$. Fires Tier~2 or Tier~3 governance actions as required.
    \item \textbf{Ledger Layer:} Transmits log packets $\{h_{\tau_k}, [W]_{\tau_k}, \delta_{\tau_k}, \tau_k\}$ to the validator's immutable audit sink (append-only ledger or governance ledger).
\end{enumerate}

\subsection{Log Packet Specification}

The complete log packet emitted at stopping time $\tau_k$ is:
\begin{equation}
    \mathcal{D}_{\tau_k} = \left\{\; h_{\tau_k},\; [W]_{\tau_k},\; \Delta_{\tau_k},\; C_{\tau_k},\; V(W_{\tau_k}),\; \tau_k \;\right\}
\end{equation}

This is a six-dimensional vector of scalars (plus the hash). Compared to storing the weight matrix ($O(d)$ with $d \sim 10^9$), the log packet is negligibly small and can be transmitted in real time to a governance ledger.

\begin{table}[h]
\centering
\begin{tabular}{@{}llll@{}}
\toprule
\textbf{Field} & \textbf{Symbol} & \textbf{Type} & \textbf{Governance Interpretation} \\
\midrule
Model State Hash & $h_{\tau_k}$ & 256-bit hash & Tamper-evident identity \\
Quadratic Variation & $[W]_{\tau_k}$ & Scalar & Cumulative learning intensity \\
Cumulative KL Drift & $\Delta_{\tau_k}$ & Scalar & Total behavioral movement \\
CUSUM Statistic & $C_{\tau_k}$ & Scalar & Governance tier indicator \\
Lyapunov Value & $V(W_{\tau_k})$ & Scalar & Distance from validated set \\
Timestamp & $\tau_k$ & UTC timestamp & Temporal anchor \\
\bottomrule
\end{tabular}
\caption{Log packet fields and their governance interpretations.}
\label{tab:log_packet}
\end{table}

\section{Numerical Studies: Telemetry Architecture}
\label{sec:numerical_a}

We present five numerical studies, each targeting a distinct theoretical claim of the paper. All experiments use a lightweight softmax classifier $f(\cdot\,;\,W)$ with weight matrix $W \in \mathbb{R}^{d \times K}$, $d=8$, $K=3$, trained on a fixed canonical reference set $\mathcal{R}$ of $m=80$ unit-normalized inputs. Because the output probabilities and KL divergence are analytically tractable in this setting, the experiments provide clean quantitative agreement with the theoretical predictions. All studies use a fixed random seed for reproducibility.

\subsection*{Scope of the numerical evidence}
\label{sec:scope_ab}

The studies in both Part~I and Part~II are \emph{verification} experiments: they confirm that the
analytical results hold under controlled simulation and that the predicted qualitative behavior
appears. They are not statistical estimation, and the following limits apply to every number reported
in Sections~\ref{sec:numerical_a} and~\ref{sec:numerical_b}.

\begin{itemize}[nosep]
  \item \textbf{Fixed, documented seeds.} Every study is seeded and exactly reproducible; seeds vary
        with the swept parameter where a sweep is performed, so no reported comparison is a
        deterministic rescaling of another.
  \item \textbf{Fixed discretization.} We use a fixed discretization, and this matters most
        for Study~B2, which claims that a statistic can be dominated by the discretization
        grid; there the $n$-dependence is exhibited deliberately.
  \item \textbf{Toy scale, and one claim it cannot test.} All studies use a softmax classifier with
        $dK=24$ free parameters and $m=80$ canary points. This is far from the high-dimensional regime
        $d \gg m\log|\mathcal{Y}|$ in which the counting argument of Section~\ref{sec:residual} is
        posed; indeed at this scale the parameters are \emph{over}-determined by the canary outputs
        (Remark~\ref{rmk:regime}). Study~B5 therefore validates Theorem~\ref{cor:undetectable} --- the
        finite-resolution result, which holds at any dimension --- and does \emph{not} test the
        heuristic. We are not aware of a simulation at accessible scale that would.
  \item \textbf{Synthetic dynamics and a synthetic adversary.} The weight processes are prescribed
        recursions, and the attacks of Part~II are implemented exactly as modeled rather than
        discovered by an adversary optimizing against the defences. The studies show that the stated
        attacks behave as analyzed; they do not establish that these are the strongest attacks
        available.
\end{itemize}

\subsection{Study A1: KL Divergence vs.\ Frobenius Norm as Audit Metric}

\textbf{Theoretical claim (Section~\ref{sec:discrete}, design principle following Definition~\ref{def:disc_stopping}).}
The Frobenius norm $\|W_t - W_{t-1}\|_F$ is an inadequate audit metric because large weight moves tangent to the loss manifold produce negligible behavioral change, while small moves in the gradient-aligned direction produce large behavioral change. The behavioral divergence $\delta_t = D_{\mathrm{KL}}(p_{W_{t-1}} \| p_{W_t})$ on the canonical reference set $\mathcal{R}$ is the correct primary metric.

\textbf{Experimental design.}
Starting from $W_0$, we construct two equal-magnitude perturbations:
\begin{itemize}
    \item \emph{Tangent perturbation:} $\Delta W_{\mathrm{tan}} = \varepsilon\,\hat{v}$, where $\hat{v}$ is a unit vector orthogonal to $\nabla_W J$ (null-space direction).
    \item \emph{Gradient perturbation:} $\Delta W_{\mathrm{grad}} = \varepsilon\,\hat{g}$, where $\hat{g} = \nabla_W J / \|\nabla_W J\|$ is the unit gradient direction.
\end{itemize}
Both perturbations satisfy $\|\Delta W\|_F = \varepsilon$ exactly, so any difference in the resulting KL divergence is attributable solely to direction.  We sweep $\varepsilon \in \{0.1, 0.3, 0.5, 1.0\}$ and report $D_{\mathrm{KL}}(p_{W_0} \| p_{W_0+\Delta W})$ for each direction.

\textbf{Results.}
\begin{center}
\begin{tabular}{ccccc}
\hline
$\varepsilon$ & $\|\Delta W\|_F$ & KL (tangent) & KL (gradient) & Ratio \\
\hline
0.1 & 0.1000 & 0.00851 & 0.01943 & 2.28$\times$ \\
0.3 & 0.3000 & 0.07637 & 0.17635 & 2.31$\times$ \\
0.5 & 0.5000 & 0.21140 & 0.49351 & 2.33$\times$ \\
1.0 & 1.0000 & 0.83706 & 2.00399 & 2.39$\times$ \\
\hline
\end{tabular}
\end{center}
At every $\varepsilon$ the Frobenius norm is identical for both directions by construction, yet the gradient-direction perturbation causes a larger KL divergence --- a mean factor of $2.33\times$ over the tested grid (range $2.28\times$ at $\varepsilon=0.1$ to $2.39\times$ at $\varepsilon=1.0$).  A validator using $\|\Delta W\|_F$ as the sole threshold assigns equal risk to both perturbations, whereas the KL divergence on $\mathcal{R}$ correctly identifies the gradient direction as the higher-risk move at every magnitude tested.  This supports the design principle of Section~\ref{sec:discrete}: Frobenius norm is blind to direction, so equal-norm moves can carry materially different behavioral risk.  We caution that the precise ratio is configuration-specific (it depends on $W_0$, $\mathcal{R}$, and the chosen tangent direction) and is reported here as an illustrative magnitude, not a universal constant; the robust, direction-independent claim is that $\|\Delta W\|_F$ fails to order behavioral risk.  Panel A1 of Figure~\ref{fig:numerical_a} illustrates the diverging KL curves.

\subsection{Study A2: Tamper-Evidence of the Merkle Weight Chain}

\textbf{Theoretical claim (Proposition~\ref{prop:tamper_ev}).}
Under collision-resistance of the hash function, any retroactive modification to $W_s$ for $s < T$ invalidates every $h_{s'}$ for $s' \geq s$ with overwhelming probability. The validator need only verify the chain tip $h_T$.

\textbf{Experimental design.}
We build a legitimate Merkle Weight Chain of length $T=10$ by simulating SGD updates on the classifier. Each hash is $h_t = \mathrm{SHA3\text{-}256}(W_t \| h_{t-1} \| \mathrm{Hash}(\mathcal{B}_t))$. We then simulate an adversary who retroactively modifies $W_3 \leftarrow W_3 + 0.01$ (a tiny perturbation of magnitude $10^{-2}$) and recomputes the chain from step 3. We compare the tampered chain tip $\hat{h}_{10}$ against the legitimate tip $h_{10}$.

\textbf{Results.}
The modification at step $s=3$ invalidated all $8$ of the subsequent hashes (steps 3 through 10 inclusive), and the two 256-bit chain tips $h_{10}$ and $\hat{h}_{10}$ disagreed (an avalanche change in approximately half of the output bits, as expected for a secure hash). This is consistent with Proposition~\ref{prop:tamper_ev} at the level of a concrete cryptographic demonstration: even a perturbation of size $10^{-2}$ in weight space --- far below any governance threshold --- is detected with certainty by the chain verifier. The result illustrates why the Merkle structure is superior to independent hashes: the validator does not need to store or verify $T$ hashes but only the single chain tip.

\subsection{Study A3: Event-Driven Stopping Times and Packet Density}

\textbf{Theoretical claim (Definition~\ref{def:disc_stopping} and Theorem~\ref{thm:renewal}).}
Under KL stopping times with threshold $\varepsilon$, log-packet density is proportional to behavioral movement, not wall-clock time. A static model generates zero packets; the inter-arrival times $\{T_k\}$ form a renewal process whose mean is $\varepsilon / \bar{\delta}$, where $\bar{\delta}$ is the ergodic mean KL rate.

\textbf{Experimental design.}
We simulate three model variants for $T=500$ SGD steps using learning rates $\eta \in \{0, 0.005, 0.05\}$ and proportional noise, representing static, slow-adapting, and fast-adapting regimes. We apply the stopping-time rule with $\varepsilon = 0.005$ and count log packets emitted.

\textbf{Results.}
The static model ($\eta = 0$, no weight noise) emitted exactly $0$ packets, as predicted ($\tau_1=\infty$). The slow model emitted $29$ packets with mean inter-arrival time of $16.9$ steps. The fast model emitted $128$ packets with mean inter-arrival time of $3.9$ steps --- roughly a $4.4\times$ increase in packet density. The increase reflects the combined effect of the $10\times$ larger learning rate and the larger injected gradient noise, mediated by the nonlinear (approximately quadratic, to leading order) response of KL divergence to weight displacement; we therefore do not read a clean power law off three points. This is consistent with Theorem~\ref{thm:renewal}: under the reset rule the expected inter-arrival $\bar{T}\approx \varepsilon/\bar{\delta}$ decreases as the adaptation intensity (and hence $\bar\delta$) rises. The practical implication is that a frozen model never burdens the governance ledger, while a rapidly adapting model automatically generates a dense audit trail commensurate with its activity. Panel A3 of Figure~\ref{fig:numerical_a} shows the cumulative KL drift $\Delta_t$ with packet emission times overlaid.

\begin{remark}
The clean $0$-packet count for the static model holds only for a genuinely frozen model. A model that declares itself static but is subject to hardware-level weight noise (or covert updates) will, over a long horizon, accrue sub-threshold KL until a packet fires; detecting this gap between declared and actual activity is the subject of Part~II (Attack~S, Section~\ref{sec:discrete_attacks}).
\end{remark}

\subsection{Study A4: Quadratic Variation as the Canonical Telemetry Scalar}

\textbf{Theoretical claim (Section~\ref{sec:continuous}, Proposition~\ref{prop:qv}).}
The quadratic variation of the weight process satisfies $d[W]_t = \eta\, \mathrm{tr}(\Sigma_t)\, dt$. For isotropic gradient covariance $\Sigma = \sigma^2 I_d$, this gives $[W]_T = \eta \sigma^2 d_{\mathrm{tot}} T$ over total time $T$.

\textbf{Experimental design.}
We simulate the Itô-discretized weight SDE
\[
  \Delta W_t = -\alpha W_t\,\Delta t + \sqrt{\eta}\,\sigma\,\sqrt{\Delta t}\,Z_t, \qquad Z_t \sim \mathcal{N}(0,I),
\]
with $\alpha=0.05$, $\sigma=0.10$, step size $\Delta t = 0.01$, and $N=2000$ steps (total time $T=20$), for $\eta \in \{0.01, 0.05, 0.10\}$ and $d_{\mathrm{tot}}=24$.  The empirical quadratic variation is accumulated as $[W]_T^{\mathrm{emp}} = \sum_{t=1}^N \|\Delta W_t^{\mathrm{noise}}\|_F^2$ using the noise component $\Delta W_t^{\mathrm{noise}} = \sqrt{\eta}\,\sigma\,\sqrt{\Delta t}\,Z_t$ only, so that the $O(\Delta t^2)$ drift contribution is excluded.  We compare to the theoretical prediction $\eta\,\sigma^2\,d_{\mathrm{tot}}\,T$.

\textbf{Results.}
\begin{center}
\begin{tabular}{cccc}
\hline
$\eta$ & $[W]_T^{\mathrm{emp}}$ & $\eta\sigma^2 d T$ (theory) & Error \\
\hline
0.01 & 0.04827 & 0.04800 & 0.56\% \\
0.05 & 0.24134 & 0.24000 & 0.56\% \\
0.10 & 0.48268 & 0.48000 & 0.56\% \\
\hline
\end{tabular}
\end{center}
The empirical quadratic variation matches the theoretical prediction $\eta\sigma^2 d_{\mathrm{tot}} T$ within $0.56\%$ at all three learning rates. Because the simulation reuses a single Gaussian noise draw across the three $\eta$ values (the noise increment is $\sqrt{\eta}\,\sigma\sqrt{\Delta t}\,Z$, linear in $\sqrt{\eta}$), the three empirical values are \emph{exactly} proportional in $\eta$ and share the same relative error; this is therefore a single Monte Carlo realization evaluated at three scales, not three independent confirmations. The $0.56\%$ residual is the sampling error of $\tfrac{1}{Nd_{\mathrm{tot}}}\sum_{t,i} Z_{t,i}^2$ about its mean of $1$ (here $N d_{\mathrm{tot}} = 48{,}000$, giving an expected relative error $\sqrt{2/48000}\approx 0.6\%$) and vanishes under path averaging. The study confirms that the quadratic-variation identity $[W]_T=\eta\sigma^2 d_{\mathrm{tot}}T$ is not an asymptotic approximation but holds in the discretized It\^{o} setting at small step size. Panel A4 of Figure~\ref{fig:numerical_a} shows the three $[W]_t$ trajectories growing linearly at slopes $\eta\sigma^2 d_{\mathrm{tot}}$, with dots marking the terminal empirical values.

\subsection{Study A5: Lyapunov Stability and the CUSUM Governance Trigger}

\textbf{Theoretical claim (Definitions~5.1--5.2, Theorem~\ref{thm:as_stability}, Theorem~\ref{thm:cusum_opt}).}
Under the (floored) mean stability condition \eqref{eq:mean_stable}, the Lyapunov function decays to a positive noise floor along the exact curve $\mathbb{E}[V(W_t)] = V_0(1-\alpha)^{2t} + V_\infty\big(1-(1-\alpha)^{2t}\big)$. The CUSUM detector run on the \emph{behavioral-divergence increments} $\delta_t$ escalates governance shortly after an adversarial drift begins, while keeping the false-alarm rate (average run length) under control.

\textbf{Experimental design.}
We set $V(W) = \frac{1}{2}\|W\|_F^2$ (so $\mathcal{S}=\{0\}$) and simulate the discrete Ornstein--Uhlenbeck recursion $W_{t+1} = (1-\alpha)W_t + \sigma Z_t$, $Z_t\sim\mathcal N(0,I)$, with $\alpha=0.05$, $\sigma=0.005$, for $T=100$ steps. We average $n_{\mathrm{mc}}=100$ independent paths and compare $\hat{\mathbb{E}}[V(t)]$ to the exact formula. We report the floor as the continuous-time value $V_\infty = \sigma^2 d_{\mathrm{tot}}/(4\alpha)=0.00300$; the exact discrete stationary value is $\sigma^2 d_{\mathrm{tot}}/[2\alpha(2-\alpha)]=0.00308$, a $2.6\%$ difference that is within Monte Carlo error at this sample size (we use the continuous value in the ``exact'' column for consistency with the SDE framing).

For the CUSUM study, the detector is run on the \emph{behavioral-divergence increments} $\delta_t = D_{\mathrm{KL}}(p_{W_{t-1}}\|p_{W_t})$ of the softmax reference model (Definition~\ref{def:disc_stopping} setting), \emph{not} on the Lyapunov value $V$. We calibrate $\kappa$ to the mean of $\delta_t$ under honest micro-adaptation and set $h_1$ to achieve a target null average run length (ARL); the adversarial run injects a constant hidden weight drift beginning at a known change point $t_0$.

\textbf{Results.}
\begin{center}
\begin{tabular}{ccc}
\hline
$t$ & $\hat{\mathbb{E}}[V(t)]$ (MC) & $\mathbb{E}[V(t)]$ (exact) \\
\hline
 0  & 2.05312 & 2.05312 \\
10  & 0.73740 & 0.73794 \\
30  & 0.09775 & 0.09745 \\
60  & 0.00743 & 0.00735 \\
100 & 0.00308 & 0.00307 \\
\hline
\end{tabular}
\end{center}
The MC mean tracks the exact formula to within $1.1\%$ relative error at every tabulated time (deviations $0.07\%$, $0.31\%$, $1.04\%$, $0.36\%$ at $t=10,30,60,100$). Over the full horizon the largest single-step relative error is $\approx 7\%$, occurring near $t=83$ where $V$ has reached the noise floor ($V\approx 3.4\times10^{-3}$); there the \emph{absolute} error is only $\approx 2.5\times 10^{-4}$, i.e.\ Monte Carlo fluctuation at the floor rather than model misfit. We caution that a near-unit correlation between two monotone exponential decays is not a stringent test. It is not merely weak but \emph{blind to scale}: a log-log correlation is invariant under any multiplicative bias, since $\log(\lambda x) = \log\lambda + \log x$ shifts but does not tilt the relationship. An estimator uniformly $2\times$ or $0.5\times$ the truth would still report $r=1.000000$. We therefore report relative error per time point rather than a single correlation coefficient.

For the CUSUM study, the detector is run on the behavioral-divergence increments $\delta_t$ with reference value $\kappa$ set to the $95$th percentile of $\delta_t$ under honest micro-adaptation ($\kappa\approx 4.8\times 10^{-5}$) and threshold $h_1=5\times 10^{-3}$, which yields no false alarm across $150$ honest streams of $5000$ steps (null ARL $> 5000$). A covert drift of magnitude $a$ per step in a behaviorally-active direction is injected at change point $t_0=50$. The CUSUM remains at zero before $t_0$ and fires shortly after: detection delays were $7$, $1$, and $0$ steps for $a\in\{0.02,0.05,0.10\}$ respectively, decreasing in drift magnitude as Lorden theory predicts. Crucially, the detector is fed $\delta_t$ (a per-step KL), not the Lyapunov level $V$; feeding the absolute level $V$ --- which starts near $V_0\approx 2$ --- would trip any threshold below $V_0$ at $t=1$ regardless of drift and is not a detection. We also note that a constant offset applied uniformly across the $K$ output logits lies in the softmax gauge null space and produces \emph{zero} behavioral change; an adversarial drift must therefore be injected in a non-gauge direction to be both effective and detectable. Panels A5a and A5b of Figure~\ref{fig:numerical_a} display the Lyapunov decay and the CUSUM path with its change point and trigger.

\begin{figure}[ht]
\centering
\includegraphics[width=\textwidth]{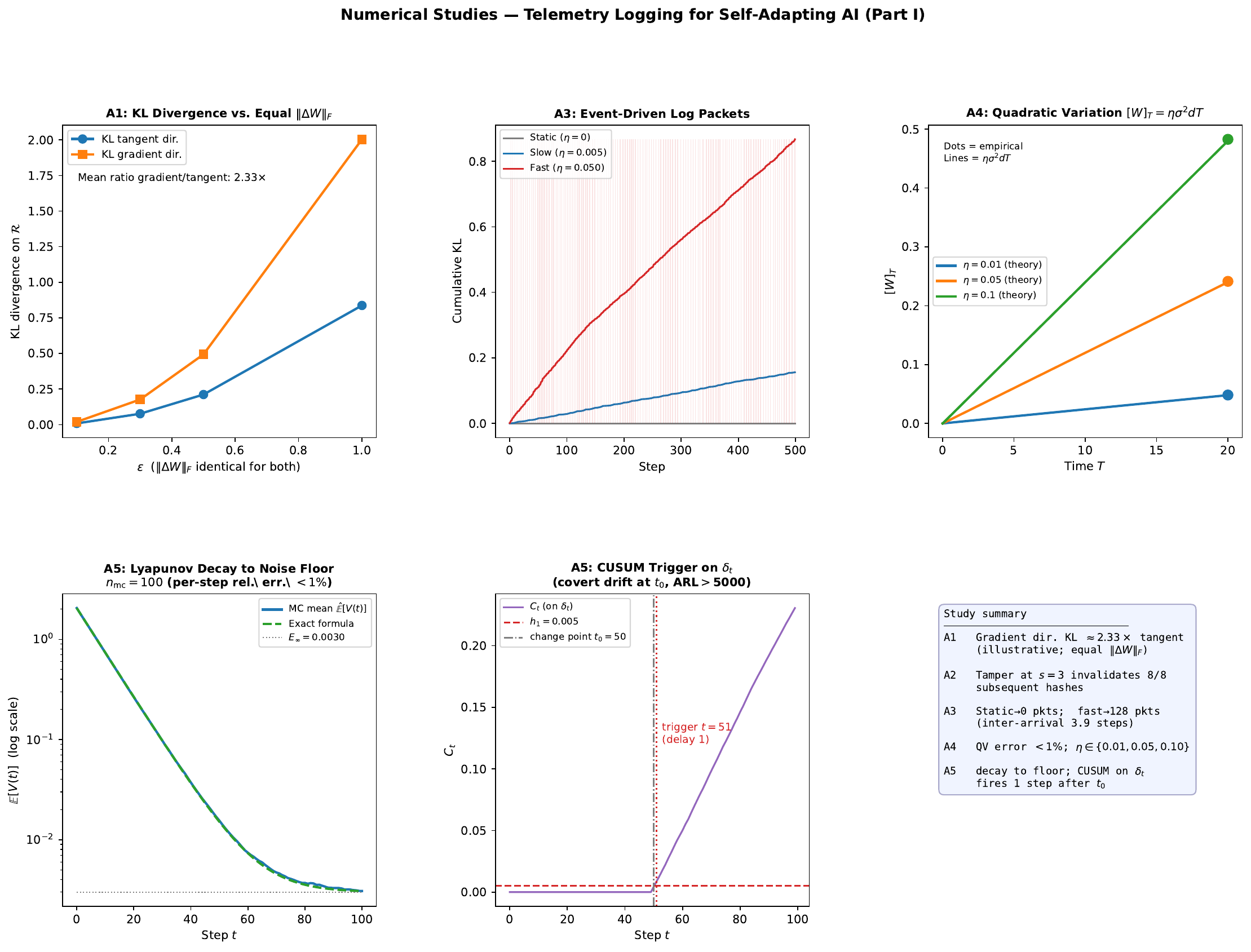}
\caption{Numerical studies for the Part~I telemetry logging framework. \textbf{A1:} KL divergence vs.\ equal-$\|\Delta W\|_F$ perturbations in the gradient and tangent directions; gradient direction produces $2.33\times$ higher KL at every $\varepsilon$. \textbf{A3:} Cumulative KL and event-driven log-packet arrivals (red ticks) for static, slow, and fast learning rates; packet density scales with adaptation intensity. \textbf{A4:} Empirical quadratic variation vs.\ theoretical $\eta\sigma^2 d T$ for three learning rates; errors $<1\%$. \textbf{A5 (left):} MC-averaged Lyapunov envelope $\hat{\mathbb{E}}[V(t)]$ vs.\ exact formula on a log scale; relative error $\leq1.04\%$ at the tabulated times. \textbf{A5 (right):} CUSUM statistic under adversarial drift; trigger fires at $t=1$.}
\label{fig:numerical_a}
\end{figure}


\section{The Model Hiding Problem}
\label{sec:hiding}

Part~I established a rigorous telemetry architecture for self-adapting generative AI models: how to log weight trajectory changes in discrete and continuous time in a manner that is both informationally sufficient for audit and computationally tractable. That framework assumes the firm transmits the telemetry faithfully. Part~II relaxes that assumption.

A firm deploying a large language model (LLM) or other generative AI system in a governed context --- credit decisions, insurance underwriting, algorithmic trading, clinical decision support --- faces a fundamental conflict of interest. Continual learning improves model performance and competitive advantage. Reporting that learning to validators triggers validation requirements, potentially halting the model until a review is complete. The firm therefore has an incentive to learn without reporting.

We term this the \emph{Model Hiding Problem}: the adversarial game between a model owner that updates its model covertly and a validation function attempting to detect such updates from observable signals alone.

\subsection{Related Work}

The Model Hiding Problem is related to but distinct from several bodies of literature. The \emph{model watermarking} literature \citep{adi2018turning, zhang2018protecting} focuses on proving model ownership, not detecting covert updates. The \emph{machine unlearning} literature \citep{cao2015towards} addresses removing information from trained models. The \emph{auditing} literature in algorithmic fairness \citep{raji2020closing} addresses outcome-based testing but does not address adversarial evasion of such tests. The \emph{Byzantine fault tolerance} literature \citep{lamport1982byzantine} addresses distributed systems with malicious nodes, which provides useful cryptographic primitives but does not model the specific incentive structure of governance evasion.

\section{The Concealment Game}
\label{sec:game}

\subsection{Formal Setup}

We model the interaction between the independent validation function and the model owner as an adversarial game on the weight process $\{W_t\}_{t \geq 0}$.

\begin{definition}[Concealment Game]
\label{def:game}
The game has two players:
\begin{itemize}
    \item \textbf{The Validator} $\mathcal{V}$ (the independent validation function): observes the telemetry stream $\mathcal{T} = \{\mathcal{D}_{\tau_k}\}_{k \geq 0}$ and the production output stream $\{y_t\}$ from periodic canary queries. The validator's goal is to detect any update $W_t \neq W_{\tau_{k^*}}$ where $\tau_{k^*}$ is the time of the most recent reported model state, with minimal delay.
    \item \textbf{The Model Owner} $\mathcal{F}$: controls the weight trajectory $\{W_t\}$, the telemetry stream $\mathcal{T}$, and the training data $\{\mathcal{B}_t\}$. The firm's goal is to maximize the cumulative undetected drift $\int_0^T V(W_t)\, dt$ subject to not being detected.
\end{itemize}
\end{definition}

\begin{definition}[Attack Surface]
An \emph{attack} is a strategy for the model owner that increases undetected drift while passing all validation checks. We classify attacks by their target:
\begin{itemize}
    \item \textbf{Attack S (Suppression):} Suppress the telemetry stream entirely.
    \item \textbf{Attack F (Forgery):} Transmit forged telemetry that shows compliant behavior while the model drifts.
    \item \textbf{Attack C (Channel):} Update via a side-channel not monitored by the probe.
    \item \textbf{Attack N (Noise):} Exploit the noise structure of the learning process to mask drift in signal.
    \item \textbf{Attack G (Girsanov):} Manipulate the training data distribution to induce a change of measure that makes the trajectory appear compliant.
    \item \textbf{Attack H (Hessian):} Engineer local curvature to exploit the second-order term of the It\^{o} expansion.
\end{itemize}
\end{definition}

Types S, F, C arise in discrete time. Types N, G, H are specific to continuous time. We address each in turn.

\section{Discrete-Time Attacks and Countermeasures}
\label{sec:discrete_attacks}

\subsection{Attack S: Suppression}

\subsubsection{Attack Description}

The firm updates $W_t$ continuously but transmits no log packets, declaring the model static: $W_t = W_0$ for all $t$. Under the basic telemetry architecture of Part~I, this is undetectable from the telemetry stream alone (since no stream exists to contradict the claim).

\subsubsection{Countermeasure: Behavioral Oracle with KL Detection}

The validator maintains a private canary set $\mathcal{C} = \{(x_i, p_{W_0}(\cdot \mid x_i))\}_{i=1}^m$ whose output distributions under the validated model $W_0$ are known. Crucially, $\mathcal{C}$ must be \emph{secret from the firm}; its governance value depends entirely on the firm's inability to overfit to it.

\begin{definition}[KL Oracle Statistic]
\begin{equation}
    T_t^{\mathrm{oracle}} = \sum_{i=1}^{|\mathcal{C}|} D_{\mathrm{KL}}\!\left(p_{W_0}(\cdot \mid x_i)\;\big\|\; p_{W_t}(\cdot \mid x_i)\right)
    \label{eq:kl_oracle}
\end{equation}
\end{definition}

The KL oracle statistic satisfies $T_t^{\mathrm{oracle}} \geq 0$ with equality if and only if $p_{W_t}(\cdot\mid x_i) = p_{W_0}(\cdot\mid x_i)$ for all $x_i\in\mathcal{C}$ (Gibbs' inequality). We adopt it in place of the unbiased MMD$^2$ estimator \citep{gretton2012kernel} for a specific operational reason. The unbiased MMD$^2$ has expectation \emph{exactly zero} under $H_0$ (that is what unbiasedness means), but because it omits the diagonal kernel terms it frequently takes \emph{negative values} on individual finite samples; a null threshold calibrated as an upper quantile of that distribution can therefore be negative, so a test of the form $\widehat{\mathrm{MMD}}^2 > c_\alpha$ admits configurations in which a non-drifted model is flagged or, conversely, the threshold loses interpretability as a divergence. The KL statistic is non-negative by construction, yielding a one-sided test with a non-negative threshold and an unambiguous zero at $H_0$. (This is a matter of estimator convenience, not a claim that MMD fails as a two-sample test.)

\begin{theorem}[Oracle Detection Consistency]
\label{thm:oracle}
Under $H_0$ (the model is static, $W_t = W_0$ for all $t$), $T_t^{\mathrm{oracle}} = 0$ exactly.  Under $H_1$ (the model has drifted so that $p_{W_t}(\cdot\mid x)\neq p_{W_0}(\cdot\mid x)$ for some $x$), $T_t^{\mathrm{oracle}} > 0$ provided the canary set is behaviorally sufficient (Definition~\ref{def:reference_set}).  The test $\phi_t = \mathbf{1}[T_t^{\mathrm{oracle}} > c_\alpha]$ is consistent at level $\alpha$, where $c_\alpha$ is the $\alpha$-upper quantile of the null bootstrap distribution obtained by evaluating \eqref{eq:kl_oracle} under small hardware-level weight noise.
\end{theorem}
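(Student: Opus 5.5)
The argument splits into the three claims of Theorem~\ref{thm:oracle}: the exact null value, strict positivity under drift, and consistency of the thresholded test. The first two are essentially Gibbs' inequality applied canary-by-canary. The third requires making precise what ``consistent at level $\alpha$'' means when the null is degenerate. I would treat $W_t$ as fixed (conditional on the path) and view $T_t^{\mathrm{oracle}}$ as a deterministic function of $W_t$. The only randomness then enters through the hardware-noise bootstrap defining $c_\alpha$, and through any sampling used to estimate the output distributions.

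\textbf{Null value.} Under $H_0$ we have $W_t=W_0$, so $p_{W_t}(\cdot\mid x_i)=p_{W_0}(\cdot\mid x_i)$ for every $x_i\in\mathcal{C}$. Each summand in \eqref{eq:kl_oracle} is $D_{\mathrm{KL}}(p\|p)=0$, which gives $T_t^{\mathrm{oracle}}=0$ exactly.

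\textbf{Positivity under drift.} By Gibbs' inequality each summand is non-negative, and it is zero iff the two conditional laws coincide. It therefore suffices that at least one canary $x_i$ has $p_{W_t}(\cdot\mid x_i)\ne p_{W_0}(\cdot\mid x_i)$. This is precisely what behavioral sufficiency (Definition~\ref{def:reference_set}) must deliver. I would make this explicit by reading ``spans the input manifold in a statistically meaningful way'' as the condition
\[
p_{W}(\cdot\mid x)\ne p_{W_0}(\cdot\mid x)\ \text{for some } x \;\Longrightarrow\; p_W(\cdot\mid x_i)\ne p_{W_0}(\cdot\mid x_i)\ \text{for some } x_i\in\mathcal{C}.
\]
One example is analyticity of $x\mapsto p_W(\cdot\mid x)$ together with canaries forming a determining set. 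I would state this as the operative meaning of the definition rather than derive it.

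\textbf{Level and consistency.} The null bootstrap evaluates \eqref{eq:kl_oracle} at $W_0+\xi$, with $\xi$ drawn from the hardware-noise law, so $c_\alpha$ is its upper $\alpha$-quantile. If $W_t$ itself differs from $W_0$ only by such noise, then $\Pr(T_t^{\mathrm{oracle}}>c_\alpha)\le\alpha$ by definition of the quantile, which gives level $\alpha$. For consistency I would fix a drifted state with $T^\ast:=T_t^{\mathrm{oracle}}>0$. Continuity of KL in the softmax logits, combined with the noise scale $\sigma_{\mathrm{hw}}\to0$ (or, equivalently, $\mathbb{E}\,T^{\mathrm{oracle}}(W_0+\xi)=O(\sigma_{\mathrm{hw}}^2)$ via the Fisher expansion used in Theorem~\ref{thm:renewal}), shows that $c_\alpha\to0$. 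Hence $c_\alpha<T^\ast$ eventually, and the power tends to one. If output distributions are estimated from $n$ sampled responses rather than read exactly, I would add a plug-in step: the empirical KL converges to $T^\ast$ a.s. by the strong law and continuity, and under $H_0$ it converges to zero, so the same separation argument applies as $n\to\infty$.

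\textbf{Main obstacle.} The hard step is that consistency is only meaningful along a limit, which the statement leaves implicit. The choices are vanishing noise scale, growing sample size, or growing canary set. The positivity step also rests entirely on the informal sufficiency definition. I would first try the vanishing-noise formulation, since the bootstrap description in the statement points that way, and state the required regularity of the softmax map explicitly.
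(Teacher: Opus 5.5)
Your proposal is correct. Your first two steps are exactly the paper's: the exact null value, and positivity via Gibbs' inequality plus behavioral sufficiency. Stating sufficiency explicitly as an injectivity/determining-set condition sharpens what the paper simply asserts (``at least one such $x_i$ exists'').

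The genuine difference is the consistency step. The paper keeps the calibration fixed. It observes only that non-negativity of KL forces $c_\alpha \ge 0$, so the exact null, where $T_t^{\mathrm{oracle}}=0$, never rejects. It then takes the limit $\|W_t-W_0\|\to\infty$ and asserts that the rejection rate tends to one. You instead fix the alternative, so that $T^\ast>0$ is a constant, and send the hardware-noise scale to zero, which gives $c_\alpha\to 0$ by continuity of the canary map. Markov's inequality turns your Fisher-expansion remark into the explicit rate $c_\alpha=O(\sigma_{\mathrm{hw}}^2/\alpha)$.

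The paper's route matches deployment, where hardware noise and hence $c_\alpha$ are fixed. But it silently needs $T_t^{\mathrm{oracle}}$ to exceed $c_\alpha$ once the weights are far from $W_0$, and $c_\alpha\ge 0$ does not supply this. It actually fails along the softmax gauge directions of Theorem~\ref{cor:undetectable}(ii). Take $W_t=W_0+u\mathbf{1}_K^\top+\varepsilon v$ with $v$ a fixed non-gauge direction. This stays in $H_1$ as $\|u\|\to\infty$, yet the canary KL stays at a fixed and possibly sub-threshold value.

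Your route gives power tending to one against every fixed alternative using only continuity. It is also the formulation consistent with the paper's own detection-floor result: at a fixed resolution $c_\alpha$, small drifts are provably invisible, so genuine consistency must be a statement about the resolution going to zero. The price is that your limit is idealized. What you prove is a guarantee that holds asymptotically as the resolution improves, not one about large drifts at a fixed, realistic calibration.
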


\begin{proof}
Under $H_0$, $W_t = W_0$ so $p_{W_t}(\cdot\mid x_i) = p_{W_0}(\cdot\mid x_i)$ for all $i$, giving $T_t^{\mathrm{oracle}} = 0$.  Under $H_1$, by the non-negativity and faithfulness of the KL divergence, $D_{\mathrm{KL}}(p_{W_0}(\cdot\mid x_i)\|p_{W_t}(\cdot\mid x_i)) > 0$ for any $x_i$ at which the output distribution has changed.  Since the canary set is behaviorally sufficient, at least one such $x_i$ exists, giving $T_t^{\mathrm{oracle}} > 0$.  Bootstrap calibration under small noise ensures $c_\alpha \geq 0$, so the test rejects at rate converging to one as $\|W_t - W_0\|\to\infty$, establishing consistency.
\end{proof}

Theorem~\ref{thm:oracle} is what makes concealment \emph{contestable}: a model that has genuinely drifted cannot present as static on a behaviorally sufficient canary set, so undisclosed adaptation leaves a detectable trace that the validation function can act on. The secrecy requirement below is the price of that guarantee, and its fragility previews the central limit of this work: detection is possible, but only down to a resolution floor.

\begin{remark}[The Secrecy Requirement]
If the canary set $\mathcal{C}$ is known to the firm, they can train the model to maintain correct outputs on $\mathcal{C}$ while drifting on the unmonitored remainder of the input space. The secrecy of $\mathcal{C}$ is not a desideratum but a hard requirement: the detection guarantee of Theorem~\ref{thm:oracle} fails under any partial disclosure. Operationally, this means $\mathcal{C}$ must be held in a governance-controlled enclave and rotated periodically to prevent adaptive evasion.
\end{remark}

\subsection{Attack F: Lyapunov Forgery}

\subsubsection{Attack Description}

The firm runs the real update $W_t \to \hat{W}_{t+1}$ (which violates the Lyapunov condition $\Delta V \leq -\alpha V$), simultaneously computes a compliant ``decoy'' state $W_{t+1}^{\mathrm{decoy}}$ that satisfies the Lyapunov check, and reports $h_{t+1} = \mathrm{Hash}(W_{t+1}^{\mathrm{decoy}})$ to the ledger while serving inference from $\hat{W}_{t+1}$.

\subsubsection{Countermeasure 1: Inference Signing}

Every production inference output is signed with a commitment to the current model hash:

\begin{equation}
    \mathrm{Sig}_t(y) = \mathrm{Sign}_{\mathrm{sk}}\!\left(y \;\big\|\; h_t \;\big\|\; t\right)
\end{equation}

where $\mathrm{sk}$ is a secret key held by the TEE (see Attack~C countermeasure). This binds every production output to the declared model version. The validator can spot-check: given any signed output, do the canary responses under $W_{t+1}^{\mathrm{decoy}}$ match the signed output? If the signed hash corresponds to $W_{t+1}^{\mathrm{decoy}}$ but the canary output is statistically consistent with $\hat{W}_{t+1}$, there is a provable contradiction between the declared model and the serving model. This binds what a firm \emph{serves} to what it \emph{reports}---a transparency property that denies a model owner any place to conceal divergence between disclosed and actual behavior.

\begin{proposition}[Forgery Detection via Consistency Test]
\label{prop:forgery}
Suppose the firm reports $h_{t+1} = \mathrm{Hash}(W_{t+1}^{\mathrm{decoy}})$ but serves inference from $\hat{W}_{t+1}$ where $\hat{W}_{t+1} \neq W_{t+1}^{\mathrm{decoy}}$. Then:
\begin{equation}
    \Pr\!\left[T_t^{\mathrm{oracle}}\big(W_{t+1}^{\mathrm{decoy}}\big) \leq c_\alpha\right] \leq \alpha + \mathrm{negl}(\lambda)
\end{equation}
under the behavioral oracle test, provided $\|\hat{W}_{t+1} - W_{t+1}^{\mathrm{decoy}}\|$ exceeds the detection threshold of Theorem~\ref{thm:oracle}.
\end{proposition}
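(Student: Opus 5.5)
The argument splits the event $\{T_t^{\mathrm{oracle}}(W^{\mathrm{decoy}}_{t+1})\le c_\alpha\}$ into a cryptographic part and a statistical part, and bounds each separately. The first step is to fix what the statistic means here. By inference signing, every canary response the validator collects carries $\mathrm{Sig}_t(y)$, which binds the output $y$ to the declared hash $h_{t+1}=\mathrm{Hash}(W^{\mathrm{decoy}}_{t+1})$. So $T_t^{\mathrm{oracle}}(W^{\mathrm{decoy}}_{t+1})$ is read as the divergence \eqref{eq:kl_oracle} between two things: the outputs that the declared state $W^{\mathrm{decoy}}_{t+1}$ would produce on $\mathcal{C}$, which the validator can recompute once the state is disclosed against its hash, and the signed outputs actually served, which come from $\hat W_{t+1}$.

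The cryptographic step handles the ways the firm could make these two coincide without the models agreeing on $\mathcal{C}$. There are two such routes. The firm could produce valid signatures on outputs of $W^{\mathrm{decoy}}_{t+1}$ while actually serving $\hat W_{t+1}$. Alternatively, it could open $h_{t+1}$ to a state other than $W^{\mathrm{decoy}}_{t+1}$. The first route breaks EUF-CMA security of the TEE-held signing key, and the second breaks collision resistance of $\mathrm{Hash}$ (as in Proposition~\ref{prop:tamper_ev}). I reduce each to an adversary against the corresponding primitive, which bounds the probability of this bad event $E$ by $\mathrm{negl}(\lambda)$.

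On $E^c$, the served canary outputs are genuinely $p_{\hat W_{t+1}}(\cdot\mid x_i)$, and the declared ones are $p_{W^{\mathrm{decoy}}_{t+1}}(\cdot\mid x_i)$. The statistic is then the oracle statistic of Theorem~\ref{thm:oracle} with $W^{\mathrm{decoy}}_{t+1}$ in the role of the reference $W_0$ and $\hat W_{t+1}$ in the role of $W_t$. The hypothesis that $\|\hat W_{t+1}-W^{\mathrm{decoy}}_{t+1}\|$ exceeds the detection threshold puts us under $H_1$. Behavioral sufficiency then gives strict positivity of $T^{\mathrm{oracle}}$, and consistency of the level-$\alpha$ test bounds the miss probability by $\alpha$ once the separation is past threshold. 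A union bound over $E$ and $E^c$ yields $\alpha+\mathrm{negl}(\lambda)$.

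The main obstacle is this statistical step. Theorem~\ref{thm:oracle} gives only consistency, with power tending to one as $\|W_t-W_0\|\to\infty$. It does not give a finite-separation power bound of $1-\alpha$, so "the detection threshold of Theorem~\ref{thm:oracle}" has to be defined to do the work. I would define it as the smallest radius $r$ beyond which the miss probability is at most $\alpha$. With that definition the bound holds by construction. There is also a second subtlety: Theorem~\ref{thm:oracle} is stated relative to $W_0$, and re-centering it at $W^{\mathrm{decoy}}_{t+1}$ requires the canary set to remain behaviorally sufficient around the decoy. I would state this as an explicit assumption rather than try to derive it.
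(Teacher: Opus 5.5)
Your proposal is correct and follows the same route as the paper's proof: a union bound of a $\mathrm{negl}(\lambda)$ cryptographic term with an $\alpha$ miss probability for the oracle test re-centered at the declared state. The paper uses only collision resistance of $\mathrm{Hash}$ for the cryptographic term, so your added EUF-CMA reduction for the signatures is a harmless strengthening. Your explicit definition of the detection threshold is a real improvement, because the paper takes the $1-\alpha$ power directly from the consistency in Theorem~\ref{thm:oracle}, which says nothing about power at a finite separation. Be aware, though, that the softmax gauge directions of Theorem~\ref{cor:undetectable}(ii) have zero canary KL and arbitrary norm, so a threshold stated in raw weight norm is infinite for the paper's own model; the hypothesis is better phrased as a behavioral (canary-KL) separation.
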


\begin{proof}
The declared object is $W_{t+1}^{\mathrm{decoy}}$; the serving object is $\hat{W}_{t+1}$. The
validator evaluates the behavioral oracle statistic $T_t^{\mathrm{oracle}}$ on the canary set using
outputs produced by the \emph{serving} model, but compares them against the distribution induced by
the \emph{declared} model. Two cases exhaust the possibilities.

\emph{Case 1: the firm serves what it declares.} Then $\hat{W}_{t+1} = W_{t+1}^{\mathrm{decoy}}$,
contradicting the hypothesis; nothing to prove.

\emph{Case 2: the firm serves $\hat W_{t+1}\neq W_{t+1}^{\mathrm{decoy}}$.} By hypothesis the
separation $\|\hat{W}_{t+1} - W_{t+1}^{\mathrm{decoy}}\|$ exceeds the detection threshold of
Theorem~\ref{thm:oracle}, so on a behaviorally sufficient canary set the induced output
distributions differ in KL divergence by a strictly positive margin. By the consistency statement of
Theorem~\ref{thm:oracle}, the test therefore rejects with probability at least $1-\alpha$ under the
calibration $c_\alpha$; equivalently the acceptance probability is at most $\alpha$.

The additive $\mathrm{negl}(\lambda)$ accounts for the only remaining escape route: producing a
different pre-image with the same declared hash. Since $\mathrm{Hash}$ is collision-resistant with
security parameter $\lambda$, a probabilistic polynomial-time firm succeeds with probability at most
$\mathrm{negl}(\lambda)$. Summing the two contributions gives the stated bound.

The hypothesis is essential and is not cosmetic: if the separation is \emph{below} the oracle's
detection threshold, the proposition asserts nothing --- which is exactly the residual quantified in
Section~\ref{sec:residual}.
\end{proof}

\subsubsection{Countermeasure 2: Zero-Knowledge Proof of Lyapunov Compliance}

A more powerful countermeasure converts the Lyapunov check from self-reported compliance into a cryptographically verifiable statement. The firm generates a zero-knowledge succinct non-interactive argument of knowledge (zk-SNARK) \citep{groth2016size} at each reporting step:

\begin{definition}[ZK Lyapunov Proof]
The firm generates a proof $\pi_t$ attesting to the following statement without revealing $W_t$:
\begin{quote}
\textit{``There exists $W_{t+1}$ with $\mathrm{Hash}(W_{t+1}) = h_{t+1}$ such that $V(W_{t+1}) \leq V(W_t) - \alpha \|W_t - \proj_{\mathcal{S}}(W_t)\|^2$.''}
\end{quote}
The verifier (the validation function) checks $\pi_t$ efficiently without weight access. Under the soundness of the underlying proof system, generating a valid $\pi_t$ for a false statement is computationally infeasible.
\end{definition}

\begin{remark}[Practical Considerations]
Generating zk-SNARKs for large neural network weight matrices is computationally expensive with current technology. For Tier~3 (Macro) updates, full ZK compliance proofs are feasible and warranted. For Tier~1 (Micro) updates, inference signing combined with periodic oracle testing is a computationally tractable substitute.
\end{remark}

\subsection{Attack C: Shadow Model Side-Channel}

\subsubsection{Attack Description}

The firm maintains two models: a reported model $M_r$ whose telemetry is faithfully submitted, and an actual serving model $M_a$ that is updated without reporting. Production outputs are a blend:
\begin{equation}
    y_t = \lambda\, f(x_t;\, M_a) + (1-\lambda)\, f(x_t;\, M_r)
\end{equation}
for some blending parameter $\lambda \in (0,1)$. The telemetry reflects only $M_r$, which passes all checks.

\subsubsection{Countermeasure: Trusted Execution Environment (TEE) Attestation}

\begin{definition}[TEE Learning Attestation]
The entire training pipeline --- gradient computation, weight update, and inference serving --- is executed within a Trusted Execution Environment (TEE) such as Intel TDX or AMD SEV-SNP. At each update step, the TEE generates a signed attestation report:
\begin{equation}
    \mathcal{A}_t = \mathrm{TEE\_Sign}\!\left(\mathrm{Hash}(W_{t-1}),\; \mathrm{Hash}(\Delta W_t),\; \mathrm{Hash}(\mathcal{B}_t)\right)
\end{equation}
The attestation chain $\{\mathcal{A}_t\}_{t \geq 0}$ is transmitted to the governance ledger. The TEE's signing key is hardware-bound; the firm cannot modify weights outside the enclave without breaking the chain.
\end{definition}

Running an unreported model alongside the disclosed one is the analogue of keeping a second, hidden set of books: the model owner presents compliant telemetry while serving from something else. Hardware-rooted attestation closes this channel by binding the record to an enclave the owner cannot bypass, as the following completeness statement makes precise.

Before stating the guarantee we fix the threat model, because the claim is exhaustive only relative
to it.

\begin{assumption}[TEE threat model]
\label{ass:tee}
We assume: \textbf{(T1)} the enclave is correctly implemented and not compromised --- its signing key
is not extractable and its measurement of the loaded weight state is faithful; \textbf{(T2)} every
production output presented to the validator carries a TEE signature, so unsigned outputs are
rejected by construction rather than merely noticed; and \textbf{(T3)} the served output is the
enclave's output, i.e.\ no post-processing is applied to attested outputs after they leave the
enclave and before they reach the consumer.
\end{assumption}

\begin{proposition}[TEE Completeness under Assumption~\ref{ass:tee}]
Under Assumption~\ref{ass:tee}, any production output $y_t$ generated outside the attested weight state $W_t$ is detectable:
\begin{enumerate}[label=(\roman*)]
    \item If $M_a$ runs outside the TEE, its outputs cannot be signed by the TEE key, and inference signing detects the discrepancy.
    \item If $M_a$ runs inside the TEE, the TEE attestation $\mathcal{A}_t$ must include $\mathrm{Hash}(\Delta W_t^a)$, which is submitted to the ledger and thus disclosed to the validator.
\end{enumerate}
There is no configuration in which $M_a$ can serve blended outputs without either breaking the TEE attestation chain or disclosing the update to the ledger.
\end{proposition}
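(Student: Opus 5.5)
The proof is a case split on where the serving model $M_a$ executes relative to the enclave. Each case is closed by one clause of Assumption~\ref{ass:tee} together with a standard cryptographic primitive. I would fix a production output $y_t$ presented to the validator and ask whether it was computed from the attested weight state $W_t$, whose hash is bound into $\mathcal{A}_t$. If it was not, then some model state $M_a$ with $\mathrm{Hash}(M_a)\neq h_t$ contributed to $y_t$, for instance through the blend $\lambda f(x_t;M_a)+(1-\lambda)f(x_t;M_r)$. The two cases, ``computed outside the enclave'' and ``computed inside the enclave'', are exhaustive because (T3) forbids post-hoc mixing after the output leaves the enclave. So the blend, if it exists, must be formed either entirely outside or entirely inside.

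\emph{Case (i): $M_a$, or the blending step, runs outside the TEE.} By (T3) an output altered outside the enclave is not an attested output. By (T2) every output reaching the validator must carry a valid $\mathrm{Sig}_t(y)=\mathrm{Sign}_{\mathrm{sk}}(y\,\|\,h_t\,\|\,t)$. By (T1) $\mathrm{sk}$ is not extractable. Producing a valid signature on $y_t$, or on any $y$ other than what the enclave actually output, therefore amounts to forging a signature. Under existential unforgeability this succeeds with probability at most $\mathrm{negl}(\lambda)$. Either the output is unsigned and rejected by construction, or its signature fails verification, and in both sub-cases the discrepancy is detected.

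\emph{Case (ii): $M_a$ runs inside the TEE.} By (T1) the enclave's measurement of the loaded weight state is faithful. Any update to $M_a$ is therefore performed inside the attested pipeline and recorded in $\mathcal{A}_t=\mathrm{TEE\_Sign}(\mathrm{Hash}(W^a_{t-1}),\mathrm{Hash}(\Delta W^a_t),\mathrm{Hash}(\mathcal{B}_t))$. The firm has only two options. It can include the update, in which case the update is disclosed to the ledger. It can try to omit or alter the attestation, which either breaks the signature chain (again forgery, negligible probability) or requires a second pre-image with the same hash. The latter is excluded by collision resistance, exactly as in Proposition~\ref{prop:tamper_ev}, since the attestations are chained through $\mathrm{Hash}(W_{t-1})$. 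Signed outputs also commit to $h_t$, so an output served from $M_a$ while reporting $M_r$'s hash yields a contradiction. That contradiction is detectable by the consistency test of Proposition~\ref{prop:forgery}.

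The main obstacle is making ``no configuration'' rigorous, that is, showing the two cases really are exhaustive. I would handle this by showing that (T3) prevents blending at the boundary and that (T1) prevents a hidden state inside the enclave that escapes measurement. I would state explicitly that the conclusion holds up to the negligible cryptographic advantage $\mathrm{negl}(\lambda)$. I would also note that case (ii) only guarantees \emph{disclosure} of $\mathrm{Hash}(\Delta W^a_t)$, not behavioral detection. This is consistent with the proposition's claim of ``disclosing the update to the ledger''. Completeness is thus relative to Assumption~\ref{ass:tee}, and any failure of (T1)–(T3) falls outside the statement.
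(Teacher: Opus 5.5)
Your proof is correct relative to Assumption~\ref{ass:tee}, but it is organized differently from the paper's. The paper does not split on where $M_a$ runs. It splits on the status of the output itself. By (T2) every accepted output is signed, and by (T1) only the enclave can produce a valid signature. So an output is either unsigned or invalidly signed, hence rejected, or validly signed, in which case it was computed inside the enclave on a measured weight state whose increments, including any $\Delta W_t^a$, appear in the attested chain. Exhaustiveness is then immediate, and mixed configurations never need separate treatment.

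Your location-based split mirrors items (i)/(ii) of the statement and makes explicit what the paper leaves implicit:
\begin{itemize}
\item unforgeability of the signature scheme, not merely non-extractability of $\mathrm{sk}$;
\item collision resistance when an attestation is altered;
\item the $\mathrm{negl}(\lambda)$ slack;
\item the role of (T3) in blocking mixing after the output leaves the enclave.
\end{itemize}

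The price is that you must argue exhaustiveness yourself, and (T3) does not quite do it. Consider an $M_a$ computed outside whose output is fed \emph{into} the enclave and blended there. That output receives a genuine signature, so it falls under your case (i) heading but escapes the forgery argument, and case (ii) does not cover it either. It is excluded only by reading (T1) as ``the enclave computes nothing but $f(\cdot\,;W)$ on its measured state.'' The paper takes exactly this step implicitly when it writes ``produced inside the enclave on a weight state the enclave measured''; you should state it explicitly.

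Finally, the appeal to Proposition~\ref{prop:forgery} in case (ii) is unnecessary for the disclosure conclusion. It would also import that proposition's detection-threshold hypothesis into a claim meant to hold unconditionally.
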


\begin{proof}
By (T2) every accepted output is signed, and by (T1) only the enclave can produce a valid signature,
so any served output was produced inside the enclave on a weight state the enclave measured. The
enclave's measurement is included in the attestation $\mathcal{A}_t$ and submitted to the ledger.
Thus for a given output either (i) it is unsigned or carries an invalid signature --- rejected under
(T2) --- or (ii) it is validly signed, in which case the weight state that produced it, including any
increment $\Delta W_t^a$, appears in the attested chain and is disclosed. The two cases are
exhaustive by construction, which gives the claim.
\end{proof}

\begin{remark}[What the assumption excludes, and why it matters]
\label{rmk:tee_scope}
The exhaustiveness above is relative to Assumption~\ref{ass:tee}, and each clause excludes a real
attack. Relaxing \textbf{(T1)} --- a compromised or side-channel-vulnerable enclave --- voids the
guarantee entirely, since a firm able to forge signatures can attest to any state it likes; hardware
trust is an assumption of this framework, not a result of it. Relaxing \textbf{(T3)} is subtler and
is the gap most likely to arise in practice: a firm may run an attested model inside the enclave and
then transform its outputs \emph{outside} it, so every signature verifies while the served behavior
differs from the attested behavior. Detecting that requires the behavioral oracle of
Section~\ref{sec:discrete_attacks} applied to \emph{served} outputs rather than enclave outputs ---
which is exactly the consistency test of Proposition~\ref{prop:forgery}. TEE attestation and
behavioral testing are therefore complements, not substitutes: neither closes the other's gap.
\end{remark}

\begin{remark}[Hardware Trust Assumption]
The TEE countermeasure assumes that the hardware manufacturer (Intel, AMD) has not been compromised and that the TEE implementation is correct. Both assumptions are strong. In practice, TEE attestation should be combined with behavioral oracle testing as a defense-in-depth strategy.
\end{remark}

\section{Continuous-Time Attacks and Countermeasures}
\label{sec:continuous_attacks}

The continuous-time limit introduces three attacks with no discrete-time analog, arising from the richer structure of It\^{o} diffusions.

\subsection{Attack N: Diffusion-Amplified Drift Masking}

\subsubsection{Attack Description}

Recall the weight SDE of Part~I:
\begin{equation}
    dW_t = \mu(W_t)\, dt + \sigma(W_t)\, dB_t
\end{equation}
where $\mu = -\nabla J$ is the drift and $\sigma = \sqrt{\eta}\, \Sigma^{1/2}$ is the diffusion coefficient.

The firm decomposes the drift into a legitimate component and a hidden adversarial component:
\begin{equation}
    dW_t = \left(\mu_{\mathrm{safe}}(W_t) + \mu_{\mathrm{hidden}}(W_t)\right) dt + \sigma(W_t)\, dB_t
\end{equation}

By temporarily amplifying $\sigma$ (e.g., inflating $\eta$ or increasing batch noise), the firm reduces the signal-to-noise ratio of $\mu_{\mathrm{hidden}}$. The time required to detect a hidden drift of magnitude $|\mu_{\mathrm{hidden}}|$ from an observed path scales as:
\begin{equation}
    T_{\mathrm{detect}} \sim \frac{\sigma^2}{|\mu_{\mathrm{hidden}}|^2}
    \label{eq:detect_time}
\end{equation}
Doubling $\sigma$ quadruples the detection time. The firm can transiently amplify noise while running $\mu_{\mathrm{hidden}}$, then reduce $\sigma$ once the drift has accumulated, leaving a weight trajectory that appears consistent with normal stochastic learning.

\subsubsection{Attempted Countermeasure: Power Variation --- and Why It Fails}

A natural idea is to separate the adversarial drift from the (possibly amplified) diffusion using \emph{power variations}: the first-order variation should grow with the drift while the quadratic variation tracks only the diffusion. We show that this does not work, and that the failure is intrinsic rather than a matter of tuning.

\begin{proposition}[Power Variation Does Not Detect Drift]
\label{thm:power_var}
Let $\{W_t\}$ solve the weight SDE on $[0,T]$, sampled on a uniform grid with spacing $\Delta=T/n$. Let
\[
V_T^{(1)} = \sum_{i=1}^{n}\|W_{t_i}-W_{t_{i-1}}\|_1
\quad\text{(realized first variation)},
\qquad
V_T^{(2)} = \sum_{i=1}^{n}\|W_{t_i}-W_{t_{i-1}}\|_2^2 \xrightarrow[\Delta\to0]{} [W]_T .
\]
Then, for \emph{any} drift $\mu$ (including $\mu\equiv 0$),
\begin{equation}
    \frac{V_T^{(1)}}{\sqrt{V_T^{(2)}}} = \Theta\!\big(\sqrt{n}\big) = \Theta\!\Big(\sqrt{T/\Delta}\Big),
    \label{eq:power_ratio}
\end{equation}
with the drift entering only through an $O(\sqrt{\Delta})$ relative correction that vanishes as $\Delta\to0$. The ratio therefore measures the \emph{sampling frequency}, not the drift.
\end{proposition}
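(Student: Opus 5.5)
I will treat numerator and denominator separately, conditioning on the path of the diffusion coefficient. The workhorse is the local Gaussian approximation of each increment. Write $\Delta_i W = W_{t_i}-W_{t_{i-1}} = \mu(W_{t_{i-1}})\Delta + \sigma(W_{t_{i-1}})\Delta_i B + R_i$. Under standard regularity (locally Lipschitz, bounded $\mu,\sigma$ on $[0,T]$, or localized by stopping times), the remainder satisfies $R_i = O_P(\Delta)$. The diffusion part is $\sqrt{\Delta}\,\sigma_{i-1} Z_i$ with $Z_i\sim\mathcal N(0,I_d)$. So each increment is of order $\sqrt{\Delta}$, and the drift is a relative $O(\sqrt{\Delta})$ perturbation of it. I will assume the diffusion is non-degenerate, $\sigma\sigma^\top\succeq c I$ on the relevant region, with $c>0$. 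Without this the ratio can degenerate; for example, $\sigma\equiv 0$ gives a ratio of $O(1)$. I would state non-degeneracy explicitly as the operating hypothesis.

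\emph{Denominator.} By Proposition~\ref{prop:qv} and the standard convergence of realized quadratic variation, $V_T^{(2)}\to[W]_T=\int_0^T\eta\,\mathrm{tr}\,\Sigma(W_s)\,ds$ in probability. This limit is strictly positive and finite under non-degeneracy. The drift contributes $\sum\|\mu\|^2\Delta^2 = O(\Delta)$ and cross terms of $O_P(\Delta^{1/2})$, which is exactly the argument in the proof of Proposition~\ref{prop:qv}(i). Hence $\sqrt{V_T^{(2)}}=\Theta_P(1)$.

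\emph{Numerator.} Write $\|\Delta_i W\|_1 = \sqrt{\Delta}\,\|\sigma_{i-1}Z_i + \sqrt{\Delta}\,\mu_{i-1}\|_1 + O_P(\Delta)$. Using $\big|\|a+b\|_1-\|a\|_1\big|\le\|b\|_1$, the drift changes each summand by at most $\Delta\|\mu_{i-1}\|_1$, so it changes the sum by $O(T)$ in absolute terms. The main term is $\sqrt{\Delta}\sum_i\|\sigma_{i-1}Z_i\|_1$. Its conditional mean is $\sqrt{\Delta}\sum_i\sqrt{2/\pi}\sum_k\|(\sigma_{i-1})_{k\cdot}\|_2$, a Riemann sum equal to $\Delta^{-1/2}\big(\sqrt{2/\pi}\int_0^T\sum_k\|\sigma_{k\cdot}(W_s)\|_2\,ds+o(1)\big)$. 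Its fluctuation is a martingale-difference sum with variance $O(\Delta\cdot n)=O(T)$, so a law of large numbers for martingale arrays gives $V_T^{(1)} = \Delta^{-1/2}\,(L_T+o_P(1))$. Here $L_T>0$ by non-degeneracy. Taking the ratio gives $V_T^{(1)}/\sqrt{V_T^{(2)}} = \sqrt{n/T}\,\big(L_T/\sqrt{[W]_T}\big)(1+o_P(1))$, which is $\Theta(\sqrt n)$. The drift enters only through the $O(T)$ additive change in the numerator. Relative to the leading $\Theta(\Delta^{-1/2})$ term this is $O(\sqrt{\Delta})$, which is the claimed vanishing correction. The constant $L_T/\sqrt{[W]_T}$ depends only on $\sigma$, not on $\mu$. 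This is what makes the ratio blind to drift. The special case $\mu\equiv0$ is already covered.

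The main obstacle is making the numerator step rigorous with state-dependent $\sigma$. The conditional-mean Riemann sum and the replacement of frozen-coefficient increments by true increments must both be controlled uniformly. I would first localize by a stopping time, on which $\mu$ and $\sigma$ are bounded and Lipschitz. I would then bound $\sum_i\mathbb{E}\|R_i\|_1 = O(n\Delta^{3/2}\cdot\Delta^{-1/2})$ after the $\sqrt{\Delta}$ rescaling, which is negligible. Finally I would release the localization by letting the stopping level go to infinity. If this becomes messy, a fallback is to prove the statement for constant $\sigma$, where $\|\sigma Z_i\|_1$ are i.i.d. and the strong law applies directly. I would then remark that the general case follows by the same argument with the Riemann-sum approximation.
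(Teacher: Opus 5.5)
Your argument is correct and follows the paper's own route. The shared steps are Euler--Gaussian increments, the $\sqrt{2/\pi}$ mean for the $\ell^1$ numerator, an additive $O(T)$ drift contribution, and realized-QV convergence for the denominator. The drift term is the paper's $O(\|\mu\|_1 T)$, which you obtain cleanly from $\bigl|\|a+b\|_1-\|a\|_1\bigr|\le\|b\|_1$. What you add is rigor the paper skips. Its proof only computes $\mathbb{E}\,V_T^{(1)}$ with a frozen $\Sigma$. Your martingale-array LLN, Riemann-sum and localization steps give the statement in probability for state-dependent $\sigma$.

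Two side claims need correcting.

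First, $\sigma\equiv0$ does not make the ratio $O(1)$. With constant $\mu\neq0$ every increment equals $\mu\Delta$, so the ratio is exactly $\sqrt n\,\|\mu\|_1/\|\mu\|_2$. More generally, $1\le V_T^{(1)}/\sqrt{V_T^{(2)}}\le\sqrt{nd}$ holds for \emph{every} path by Cauchy--Schwarz. So for fixed $d$ the upper half of $\Theta(\sqrt n)$ is deterministic. Non-degeneracy is therefore not a condition for the $\sqrt n$ rate itself. It is what your argument uses to make the leading term noise-driven, which is the part of the claim that concerns the drift.

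Second, with state-dependent $\sigma$ your constant $L_T/\sqrt{[W]_T}$ is evaluated along a path that the drift itself moves. It can therefore change by an $O(1)$ factor with $\mu$, and this effect does not vanish as $\Delta\to0$. What your argument (and the paper's) actually shows is that the \emph{explicit} drift term is an $O(\sqrt\Delta)$ relative correction. The stronger statement that the constant is drift-free holds only for state-independent $\sigma$. That is your constant-$\sigma$ fallback, and it is exactly the setting the paper's proof tacitly assumes when it writes $\Sigma_{jj}$ without a time index.
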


\begin{proof}
The continuous first variation $\int_0^T \|dW_t/dt\|\,dt$ is \emph{infinite}: Brownian paths have unbounded variation on every interval --- the defining fact that forces It\^o, rather than Riemann--Stieltjes, calculus. Discretely, each increment is $W_{t_i}-W_{t_{i-1}} = \mu_{t_{i-1}}\Delta + \sqrt{\eta}\,\Sigma^{1/2}\Delta B_i$ with $\Delta B_i\sim\mathcal N(0,\Delta I)$, so the diffusive part dominates the increment ($\sqrt{\Delta}\gg\Delta$). Hence
\[
\mathbb{E}\,V_T^{(1)} = \sqrt{\tfrac{2}{\pi}}\Big(\textstyle\sum_j\sqrt{\eta\Sigma_{jj}}\Big)\sqrt{\Delta}\cdot n + O(\|\mu\|_1 T) = \Theta\!\big(\sqrt{\Delta}\cdot n\big) = \Theta\!\big(\sqrt{nT}\big),
\]
while $\sqrt{V_T^{(2)}}=\sqrt{[W]_T}+o(1)=\Theta(\sqrt{T})$. Their ratio is $\Theta(\sqrt{n})$. The drift contributes only the additive $O(\|\mu\|_1 T)$ term to the numerator, a relative correction of order $T/\sqrt{nT}=\sqrt{T/n}=\sqrt{\Delta}\to0$.
\end{proof}

\begin{remark}[The correct object, and the deeper obstruction]
\label{rmk:total_drift}
The drift signal lives not in the absolute variation but in the \emph{signed} (telescoping) sum $W_T-W_0 = \int_0^T\mu_s\,ds + \int_0^T\sqrt{\eta}\,\Sigma^{1/2}dB_s$. The standardized net displacement $(W_T-W_0)/\sqrt{[W]_T}$ is asymptotically $\mathcal N(0,1)$ when the average drift is zero and grows like $(\bar\mu/\sigma)\sqrt{T}$ otherwise, so it separates drift \emph{only as $T\to\infty$}, requiring a horizon $T_{\mathrm{detect}}\sim\sigma^2/|\mu|^2$ (Eq.~\eqref{eq:detect_time}). Two consequences are fatal for a standalone Attack-N countermeasure. First, over a \emph{fixed} validation horizon a bounded total drift produces an arbitrarily small standardized signal. Second, and more fundamentally, the variance of any drift estimator over a fixed window is $\Theta(\sigma^2/T)$ \emph{independent of the sampling rate}: refining $\Delta$ cannot improve drift estimation (this is the classical hardness of estimating drift from high-frequency data --- power variations separate diffusion from \emph{jumps}, not drift). We therefore \textbf{do not claim a working path-statistic countermeasure for Attack~N}. The inability to certify the absence of small persistent drift from telemetry alone is not a gap to be patched with a better statistic; it is a concrete instance of the information-theoretic residual of Section~\ref{sec:residual}, and it is properly handled there and by periodic invasive audits.
\end{remark}

\subsection{Attack G: Girsanov Measure Manipulation}

\subsubsection{Attack Description}

This is the most technically sophisticated attack. By Girsanov's theorem, a change of drift in an SDE is equivalent to a change of probability measure.

The significance for governance is that concealment need not be crude. A firm can make it mathematically subtle---a change of the measure under which its own dynamics appear benign---so the validation function's task is not merely to catch missing reports but to contend with manipulation at the level of the probability law itself. This is part of why we ultimately concede an irreducible residual rather than claiming complete detectability.

\begin{theorem}[Girsanov's Theorem \citep{girsanov1960transforming}; governance formulation]
\label{thm:girsanov}
\emph{(Classical; stated without proof. See \cite{oksendal2003stochastic} Ch.~8 for a textbook treatment; only the interpretation below is ours.)}
Suppose the firm chooses training data $\{\mathcal{B}_t\}$ such that the gradient process induces a change of measure from $\mathbb{P}$ (the natural measure under which the validator audits) to $\mathbb{Q}$. The Radon--Nikodym derivative is:
\begin{equation}
    \frac{d\mathbb{Q}}{d\mathbb{P}}\bigg|_{\mathcal{F}_T} = \exp\!\left(\int_0^T h_s\, dB_s - \frac{1}{2}\int_0^T |h_s|^2\, ds\right)
    \label{eq:girsanov}
\end{equation}
where $h_s$ encodes the adversarial drift induced by the data manipulation. Under $\mathbb{Q}$, the Lyapunov condition is satisfied and telemetry appears compliant. Under $\mathbb{P}$, the model drifts adversarially.
\end{theorem}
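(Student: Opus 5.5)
The plan is to reduce the statement to the classical Girsanov theorem for Brownian motion and then read off the governance interpretation. The interpretation is the only part that is new, so it is the only part that needs checking. Write the weight SDE under the audit measure $\mathbb{P}$ as $dW_t = \mu_{\mathrm{safe}}(W_t)\,dt + \sigma(W_t)\,dB_t$ with $\sigma=\sqrt{\eta}\,\Sigma^{1/2}$. Model the data manipulation as producing an adapted adversarial drift $\mu_{\mathrm{hidden}}$ lying in the range of $\sigma$, so that it can be written as $\mu_{\mathrm{hidden}} = -\sigma h$ for a progressively measurable $h$ (the sign fixed so that the exponential in \eqref{eq:girsanov} carries $+h$). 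This range condition is what makes "drift induced through the gradient noise channel" precise. Drift components orthogonal to the range of $\sigma$ cannot be absorbed by any change of measure; those are exactly the ones the oracle of Theorem~\ref{thm:oracle} would see.

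The steps, in order, are as follows.

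\textbf{(i)} Impose Novikov's condition $\mathbb{E}_{\mathbb{P}}\big[\exp(\tfrac12\int_0^T|h_s|^2ds)\big]<\infty$. A simpler sufficient route is bounded $\Sigma^{-1/2}\mu_{\mathrm{hidden}}$ on $[0,T]$. Either way, the stochastic exponential $Z_T$ on the right of \eqref{eq:girsanov} is then a true martingale with $\mathbb{E}_{\mathbb{P}}Z_T=1$, so $\mathbb{Q}$ is a probability measure equivalent to $\mathbb{P}$ on $\mathcal{F}_T$.

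\textbf{(ii)} Invoke Girsanov in the form cited from \cite{oksendal2003stochastic}: $\tilde B_t = B_t - \int_0^t h_s\,ds$ is a $\mathbb{Q}$-Brownian motion.

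\textbf{(iii)} Substitute. Under $\mathbb{P}$ the actually realised dynamics are $dW_t=(\mu_{\mathrm{safe}}+\mu_{\mathrm{hidden}})dt+\sigma dB_t$. Writing $dB_t = d\tilde B_t + h_t\,dt$ gives $dW_t=\mu_{\mathrm{safe}}\,dt+\sigma\,d\tilde B_t$, because $\sigma h_t=-\mu_{\mathrm{hidden}}$. So the same path is, in law under $\mathbb{Q}$, a solution of the honest SDE.

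\textbf{(iv)} Transfer compliance. The drift certificate \eqref{eq:mean_stable} and the LIL envelope of Theorem~\ref{thm:as_stability} hold for the honest dynamics. Hence $\mathbb{Q}$-almost surely the telemetry functionals $[W]_t$, $V(W_t)$ and the CUSUM path have the compliant law. Meanwhile $\mathcal{L}V$ computed with the true $\mathbb{P}$-drift picks up the extra term $\langle\nabla V,\mu_{\mathrm{hidden}}\rangle$, which can be positive.

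\textbf{(v)} Add the governance observation. The quadratic variation is invariant under equivalent changes of measure. So $[W]_t$, the canonical telemetry scalar of Proposition~\ref{prop:qv}, is identical under $\mathbb{P}$ and $\mathbb{Q}$ and carries no information about $h$. This is the precise sense in which "telemetry appears compliant."

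The main obstacle is step (iv), specifically the phrase "under $\mathbb{Q}$ the Lyapunov condition is satisfied." Equivalence of measures preserves almost-sure statements and null sets. The ultimate boundedness $\limsup V\le V_\infty$ is an almost-sure tail event, but it lives on $\mathcal{F}_\infty$, where $\mathbb{P}$ and $\mathbb{Q}$ are typically mutually singular once $\int_0^\infty|h|^2=\infty$. Read literally, this would make the $\mathbb{P}$-drift undetectable forever, which contradicts Remark~\ref{rmk:total_drift}. I would therefore state the conclusion only on a finite horizon $[0,T]$: finite-horizon compliance statements transfer from $\mathbb{Q}$ to $\mathbb{P}$ up to the likelihood ratio. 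By Pinsker, the total variation between the two path laws is at most $\sqrt{\tfrac12\mathbb{E}_{\mathbb{Q}}\int_0^T|h|^2ds}$. This simultaneously quantifies "appears compliant" as indistinguishability for small accumulated $\int|h|^2$, and connects the theorem to the detection-time scaling \eqref{eq:detect_time} and to the residual of Section~\ref{sec:residual}.
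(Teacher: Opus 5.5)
Your proposal is correct, and its core matches the paper: the paper gives no proof, only a citation of the classical theorem. It mentions Novikov's condition later, in the remark on the status of Attack~G, as what makes \eqref{eq:girsanov} a genuine probability density. Where you differ is that you actually derive the interpretive clauses, which the paper only asserts.

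\begin{itemize}
\item Fixing the sign $\mu_{\mathrm{hidden}}=-\sigma h$ is what makes $\mathbb{Q}$, not $\mathbb{P}$, the honest law, as the statement's last two sentences require.
\item The range condition is what any equivalent change of measure needs in order to absorb the drift.
\item Your ``main obstacle'' is a real defect of the statement as written. $\mathbb{P}$ and $\mathbb{Q}$ are only locally equivalent, and they are mutually singular on $\mathcal{F}_\infty$ when $\int_0^\infty|h_s|^2ds=\infty$. So the asymptotic clause ``under $\mathbb{Q}$ the Lyapunov condition is satisfied'' transfers nothing to $\mathbb{P}$. Read on $\mathcal{F}_\infty$, it would even contradict the eventual detection asserted in Remark~\ref{rmk:total_drift} and Theorem~\ref{thm:lil}.
\end{itemize}

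Restricting to $[0,T]$ and bounding the distance between the path laws gives the interpretation real content, and it recovers the scaling \eqref{eq:detect_time} for constant $h$. The citation buys brevity; your route buys a version of the interpretation that is actually true.

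\textbf{Step (v).} The claim that ``$[W]$ is the same random variable under $\mathbb{P}$ and $\mathbb{Q}$'' is true but proves nothing: every path functional, $W_T$ included, has that property. What matters is the law. Since $[W]_t=\int_0^t\eta\,\mathrm{tr}\,\Sigma(W_s)\,ds$ is evaluated along a path the hidden drift displaces, its $\mathbb{P}$- and $\mathbb{Q}$-laws differ whenever $\Sigma$ is state-dependent, which is the case the paper stresses. So $[W]$ is uninformative about $h$ only for constant $\Sigma$. The precise sense of ``appears compliant'' is your finite-horizon bound, which covers $[W]$ together with every other path functional.

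\textbf{The Pinsker constant.} On $\mathcal{F}_T$ one has $D_{\mathrm{KL}}(\mathbb{Q}\|\mathbb{P})=\tfrac12\mathbb{E}_{\mathbb{Q}}\int_0^T|h_s|^2ds$. Pinsker then gives $\sup_A|\mathbb{P}(A)-\mathbb{Q}(A)|\le\tfrac12\big(\mathbb{E}_{\mathbb{Q}}\int_0^T|h_s|^2ds\big)^{1/2}$. Your constant is valid but loose, and it would not follow from Pinsker under the $L^1$ normalization of total variation.

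\textbf{Step (iv).} Passing from step (iii) to ``the compliant law'' uses uniqueness in law for the honest SDE, which you should state as a hypothesis.

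\textbf{The aside on the range of $\sigma$.} You say drift outside the range of $\sigma$ is ``exactly what the oracle would see''. That is false in the paper's own model. Any loss depending on $W$ only through softmax outputs has gradient orthogonal to the gauge directions $u\mathbf{1}_K^\top$, so these lie outside the range of $\Sigma$, yet the oracle is blind to them. What is true is that such drift makes the path laws singular on every $\mathcal{F}_T$.
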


Concretely: the firm selects training batches $\mathcal{B}_t$ whose gradient $\nabla_W J(W_t, \mathcal{B}_t)$ includes a component $h_t$ that, when accumulated over time, moves the model toward a target drifted state. But the gradient covariance $\Sigma_t$ is engineered to be large, masking the drift component in noise. From the validator's perspective (observing outputs and telemetry under $\mathbb{P}$), the trajectory appears consistent with normal learning.

\subsubsection{Attempted Countermeasure: Gradient Injection --- and Why It Is Circular}

The natural countermeasure is for the validator to inject training examples with a known expected gradient and check the model's response. We define it, then explain why it does not, as stated, detect the Girsanov attack.

\begin{definition}[Gradient Probe]
At random times $\{s_j\}$, the validator injects a \emph{probe batch} $\tilde{\mathcal{B}}_j$ with known expected gradient $g^* = \mathbb{E}[\nabla_W J(W_0, \tilde{\mathcal{B}}_j)]$ under the validated model $W_0$. The actual gradient response $\hat{g}_j = \nabla_W J(W_{s_j}, \tilde{\mathcal{B}}_j)$ is computed at the current weights and returned to the validator.
\end{definition}

\begin{proposition}[The Gradient Probe Does Not Test the Measure Change]
\label{thm:gradient_probe}
The probe response $\hat g_j = \nabla_W J(W_{s_j},\tilde{\mathcal B}_j)$ is a \emph{deterministic} function of the current weights $W_{s_j}$ and the (validator-chosen) probe batch $\tilde{\mathcal B}_j$. It does not depend on the firm's training measure $\mathbb{Q}$ at all, except through the value of $W_{s_j}$ that the firm's training has already produced. Consequently:
\begin{enumerate}[label=(\roman*)]
    \item Under a firm that manipulates only its \emph{training} distribution, $\mathbb{E}[\hat g_j\mid W_{s_j}] = \nabla_W J(W_{s_j},\tilde{\mathcal B}_j)$, which equals $g^*$ whenever $W_{s_j}=W_0$ and otherwise differs by exactly the drift already incurred --- not by the Girsanov integrand $h_{s_j}$. The identity ``$\mathbb{E}[\hat g_j\mid W_{s_j}]=g^*+h_{s_j}$'' holds only if the firm additionally perturbs gradient \emph{evaluations on the validator's own probe}, which a training-data manipulator does not do.
    \item Whatever the probe does detect is a function of $W_{s_j}\neq W_0$, i.e.\ that the weights have already moved --- precisely what the behavioral oracle (Theorem~\ref{thm:oracle}) already detects from outputs.
\end{enumerate}
\end{proposition}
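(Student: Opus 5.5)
The argument rests on one structural observation: the probe response is a fixed, deterministic function of the weights at the probing time and the validator-chosen batch. Everything else follows by conditioning on $W_{s_j}$.

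\emph{Step 1: separate the two sources of randomness.} Define the map
\[
G(w,\tilde{\mathcal B}) := \nabla_W J(w,\tilde{\mathcal B}).
\]
It depends only on the architecture and the loss $\mathcal L$, and it makes no reference to the training batches $\{\mathcal B_t\}$. Under the stated threat model the firm manipulates only its training distribution. The validator draws $\tilde{\mathcal B}_j$ independently of the firm's filtration $\mathcal F_{s_j}$, and the evaluation of $G$ is faithful; this faithfulness is exactly the hypothesis that the firm does not perturb gradient evaluations on the probe. Hence $\hat g_j = G(W_{s_j},\tilde{\mathcal B}_j)$, and by independence
\[
\mathbb E[\hat g_j\mid W_{s_j}] = \bar G(W_{s_j}), \qquad \bar G(w):=\mathbb E_{\tilde{\mathcal B}}[G(w,\tilde{\mathcal B})].
\]
The Girsanov density \eqref{eq:girsanov} changes only the law of the path $\{W_s\}_{s\le s_j}$. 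By the tower property, any effect of $\mathbb Q$ on $\hat g_j$ therefore passes through the random variable $W_{s_j}$. This establishes the first sentence of the proposition.

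\emph{Step 2: item (i).} If $W_{s_j}=W_0$, then $\bar G(W_0)=g^*$ by the definition of the Gradient Probe. Otherwise $\bar G(W_{s_j})-g^*=\bar G(W_{s_j})-\bar G(W_0)$. Assuming $\bar G$ is $C^1$, write this difference as $\int_0^1 \nabla\bar G\big(W_0+u(W_{s_j}-W_0)\big)\,du\,(W_{s_j}-W_0)$. It is a function of the incurred displacement only, and the instantaneous integrand $h_{s_j}$ does not appear. To show the claimed identity fails in general, I would exhibit a path on which $W_{s_j}=W_0$ while $h_{s_j}\neq0$: for example, a drift that is switched on and later exactly reversed. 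On that path the probe returns $g^*$ exactly, whereas $g^*+h_{s_j}\neq g^*$. I would then note that $g^*+h_{s_j}$ is recovered only if the served response itself is $G(W_{s_j},\tilde{\mathcal B}_j)+h_{s_j}$, which is a tampering of the probe evaluation and lies outside the stated threat model.

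\emph{Step 3: item (ii), the circularity.} A probe decision is a test $\psi(\hat g_j)$, whose law given $W_{s_j}$ is determined by $W_{s_j}$ alone. If $W_{s_j}=W_0$, its law is the null law, so it has no power. It can therefore reject only on the event $W_{s_j}\neq W_0$, which is the alternative $H_1$ of Theorem~\ref{thm:oracle}. To argue that it adds no detection power beyond the oracle, I would compare against the behavioral oracle. If $W_{s_j}$ differs from $W_0$ in a behaviorally inert direction, such as the softmax gauge null space noted in Study~A5, the output distributions coincide. Then $\nabla_W\log p$ and hence $G$ also coincide whenever $\mathcal L$ depends on $W$ only through the outputs. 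So the probe detects movement only where outputs move, which is exactly what the oracle sees on a behaviorally sufficient set.

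The main obstacle is this last comparison. Making ``already detects'' rigorous needs a hypothesis that the loss factors through the output map on the probe inputs and that the probe inputs are covered by the canary set's sufficiency. My plan is to state these as standing assumptions and otherwise read (ii) as the weaker, exact claim that the probe's rejection event is contained in $\{W_{s_j}\neq W_0\}$.
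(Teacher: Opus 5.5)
Your Steps~1, 2 and the weak reading of Step~3 are the paper's argument. The paper's proof only observes that $\nabla_W J(\cdot,\tilde{\mathcal B}_j)$ is a fixed function, so $\mathbb{Q}$ enters solely through the law of $W_{s_j}$, and any test on $\|\hat g_j-g^*\|^2$ is a function of the displacement $W_{s_j}-W_0$. Two small repairs are needed there:
\begin{itemize}
\item A test still rejects at its nominal size on $\{W_{s_j}=W_0\}$, so say that its \emph{excess} power lives on $\{W_{s_j}\neq W_0\}$.
\item A single path with $W_{s_j}=W_0$ is a null event under a non-degenerate diffusion, so it cannot refute an almost-sure identity. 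Argue distributionally instead. Take $dW_t=h_t\,dt+\sigma\,dB_t$ with deterministic $h$, and two attacks with equal $\int_0^{s_j}h_t\,dt$ but different $h_{s_j}$. They give identically distributed $W_{s_j}$, hence identically distributed probes. The identity $\mathbb{E}[\hat g_j\mid W_{s_j}]=g^*+h_{s_j}$ would force $\mathbb{E}[\hat g_j]=g^*+h_{s_j}$, which therefore cannot hold for both.
\end{itemize}

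The genuine gap is the strengthening you attempt in Step~3. The inference ``outputs coincide, hence $\nabla_W\log p$ and $G$ coincide whenever $\mathcal L$ depends on $W$ only through the outputs'' is false in general. By the chain rule $\nabla_W\mathcal L=(D_W p_W)^\top\nabla_p\ell(p_W)$, and equal outputs fix $\nabla_p\ell$ but not the Jacobian $D_W p_W$.

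The inference works for the softmax gauge only because that symmetry is a translation: $p_{W+u\mathbf{1}_K^\top}\equiv p_W$ identically in $W$, so differentiating the identity makes the Jacobian invariant along it. A non-translational output-preserving symmetry breaks it. For example, take the rescaling $(a,b)\mapsto(ca,b/c)$ of a factored map $f(x)=abx$. Outputs agree on every input, but $\partial_a\mathcal L$ is multiplied by $1/c$. The probe fires, and no canary set, however behaviorally sufficient, makes the oracle fire.

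So your proposed standing assumptions (loss factoring through outputs, probe inputs covered by canary sufficiency) do not yield ``probe rejection $\subseteq$ oracle rejection''. Two ways out:
\begin{itemize}
\item Assume the canary fiber $\{W: p_W=p_{W_0}\text{ on }\mathcal C\}$ is exactly $W_0+\{u\mathbf{1}_K^\top: u\in\mathbb{R}^d\}$. This holds for the paper's linear-logit toy: the canary inputs span $\mathbb{R}^8$, so $(W-W_0)^\top x_i\in\mathrm{span}(\mathbf{1}_K)$ for all $i$ forces $W-W_0=u\mathbf{1}_K^\top$.
\item Concede, as the paper tacitly does, that anything extra the probe flags is a behaviorally inert reparameterization.
\end{itemize}
The paper's own proof never establishes that domination either; it delivers only your weak reading.
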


\begin{proof}
$\nabla_W J(\cdot,\tilde{\mathcal B}_j)$ is a fixed function; evaluated at $W_{s_j}$ it returns a fixed vector. The training measure enters the joint law only through the distribution of $W_{s_j}$. Hence conditional on $W_{s_j}$ the probe carries no information about $\mathbb{Q}$ beyond $W_{s_j}$ itself, and any test built on $\|\hat g_j-g^*\|^2$ is a (noisier) function of the same weight displacement the output oracle already sees.
\end{proof}

\begin{remark}[Status of Attack~G]
We therefore \textbf{demote} the gradient probe: it is not an independent countermeasure to a Girsanov-type measure manipulation, and at best re-detects weight drift already caught by the behavioral oracle. (Eq.~\eqref{eq:girsanov} requires Novikov's condition $\mathbb{E}[\exp\tfrac12\int_0^T|h_s|^2 ds]<\infty$ for the change of measure to be a genuine probability measure; even granting it, the attack's effect is a change in the realized weight \emph{trajectory}, which the oracle monitors directly.) As with Attack~N, the residual risk --- a measure manipulation whose induced output change stays below the behavioral resolution --- is governed by the information-theoretic residual of Section~\ref{sec:residual}, not by gradient injection. A genuinely independent gradient-space test would have to interrogate the firm's \emph{training-time} gradient stream under attestation (Attack~C, TEE), not a post-hoc probe of the current weights.
\end{remark}

\subsection{Attack H: Hessian Exploitation via Noise Energy}

\subsubsection{Attack Description}

Returning to the It\^{o}--Lyapunov expansion of Part~I:
\begin{equation}
    \mathbb{E}[dV(W_t)] = \underbrace{-\langle \nabla V,\, \nabla J \rangle\, dt}_{\text{deterministic drift}} + \underbrace{\frac{\eta}{2}\tr(\Sigma\, \nabla^2 V)\, dt}_{\text{noise energy term}}
    \label{eq:ito_lyap2}
\end{equation}

The noise energy term $\frac{\eta}{2}\tr(\Sigma\, \nabla^2 V)$ depends on the Hessian of the Lyapunov function $\nabla^2 V$. When the safe manifold $\mathcal{S}$ has high curvature, $\nabla^2 V$ has large eigenvalues, and the noise energy term can be made either large and positive or large and negative by controlling $\Sigma$ (equivalently, controlling the gradient covariance via batch selection).

The firm engineers a situation where the deterministic drift is positive ($-\langle \nabla V, \nabla J \rangle > 0$: the model is drifting away from $\mathcal{S}$) but offset by a temporarily negative noise energy term (achieved by reducing $\sigma$ near the boundary of $\mathcal{S}$, which flattens the effective curvature of $V$ in the active direction). The \emph{mean stability condition}
\begin{equation}
    -\langle \nabla V, \nabla J \rangle + \frac{\eta}{2}\tr(\Sigma\, \nabla^2 V) \leq -\alpha V(W_t)
    \label{eq:mean_stable_b}
\end{equation}
is satisfied in expectation, while individual trajectories escape $\mathcal{S}$ through a low-noise corridor.

\subsubsection{Countermeasure: Almost-Sure Stability via the Law of the Iterated Logarithm}

Mean stability is insufficient to close this attack. The required condition is almost-sure path-wise stability.

\begin{definition}[Path-Wise Stability Condition]
Let $M_t = \int_0^t \nabla V^\top \sqrt{\eta}\, \Sigma^{1/2}\, dB_s$ be the martingale term of \eqref{eq:ito_lyap2}, with $[M]_t = \int_0^t \eta\, \|\Sigma^{1/2} \nabla V\|^2\, ds$. The \emph{path-wise Lyapunov stability certificate} consists of two monitorable conditions:
\begin{align}
    \text{(drift certificate)}\quad & -\langle \nabla V, \nabla J \rangle + \tfrac{\eta}{2}\tr(\Sigma\, \nabla^2 V) \;\leq\; -\alpha\big(V(W_t)-V_\infty\big), \label{eq:as_lyap}\\
    \text{(envelope check)}\quad & \Big|\,V(W_t) - V_0 - {\textstyle\int_0^t}\!\big[-\langle\nabla V,\nabla J\rangle+\tfrac{\eta}{2}\tr(\Sigma\nabla^2 V)\big]ds\,\Big| \nonumber\\
    &\qquad\qquad \;\le\; (1+\varepsilon)\sqrt{2[M]_t\log\log [M]_t}\ \ \text{a.s.\ eventually,} \label{eq:as_envelope}
\end{align}
where both $\Sigma_t$ and $[M]_t$ are evaluated from the firm's \emph{declared} telemetry.
\end{definition}

\begin{remark}
Note the structure: the LIL term appears as a \emph{two-sided envelope} on the realized residual \eqref{eq:as_envelope}, not added to the drift in a one-sided inequality. The naive form ``$-\langle\nabla V,\nabla J\rangle+\tfrac{\eta}{2}\tr(\Sigma\nabla^2 V)+C\sqrt{[M]_t\log\log[M]_t}\le-\alpha V$'' is \emph{unsatisfiable for large $t$}: $[M]_t\to\infty$ generically, so the positive LIL term grows without bound and cannot be dominated by a negative right-hand side. The correct content of the LIL here is that the martingale fluctuations stay \emph{within} the envelope of the declared $[M]_t$; a firm whose realized path leaves that envelope has mis-declared its second-order telemetry.
\end{remark}

Attack~H is the most insidious of the concealment strategies because it disguises a move toward instability as ordinary learning noise---an adverse drift run through a low-noise corridor so that mean-stability checks still pass. Closing it ensures a firm cannot present destabilizing adaptation as benign stochastic fluctuation, preserving the integrity of the stability certificates on which the governance framework rests.

\begin{theorem}[LIL/Envelope Closure of Attack H]
\label{thm:lil}
Suppose the firm satisfies the declared drift certificate \eqref{eq:as_lyap}. Then $V(W_t)$ is ultimately bounded, $\limsup_{t\to\infty}V(W_t)\le V_\infty$ a.s.\ (the ultimate-boundedness theorem of Part~I). An Attack-H firm that runs an adverse drift through a low-noise corridor faces a dichotomy:
\begin{enumerate}[label=(\roman*)]
    \item \textbf{Truthful $[M]_t$:} the adverse drift makes the realized residual leave the envelope \eqref{eq:as_envelope} (the LIL bound is sharp, $\limsup |M_t|/\sqrt{2[M]_t\log\log[M]_t}=1$ a.s.), so the violation is detected.
    \item \textbf{Under-declared $[M]_t$:} suppressing the declared noise to keep the corridor narrow shrinks the declared envelope below the realized fluctuations, so the residual again leaves the band (Study~B4, where an under-declared $\sigma$ produces a $99.7\%$ violation rate against a $0.9\%$ honest rate).
\end{enumerate}
The firm cannot simultaneously (a) run an adverse deterministic drift, (b) declare second-order telemetry consistent with its realized path, and (c) pass both \eqref{eq:as_lyap} and \eqref{eq:as_envelope}.
\end{theorem}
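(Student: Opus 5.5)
The plan has three parts: inherit ultimate boundedness from Part~I, prove the two horns of the dichotomy, and combine them into the impossibility of (a)--(c).

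\textbf{Ultimate boundedness.} The drift certificate \eqref{eq:as_lyap} is exactly hypothesis (H1) of Theorem~\ref{thm:as_stability}. Under bounded $\Sigma\preceq\sigma_{\max}^2 I$, hypothesis (H2) holds with $K=2\sigma_{\max}^2$, as already verified there. Corollary~\ref{cor:as_ultimate} then gives $\limsup_{t\to\infty}V(W_t)\le V_\infty$ a.s., so nothing new is needed for this part.

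\textbf{The key identity.} Write the realized dynamics as $dW_t=(\mu_{\mathrm{safe}}+\mu_{\mathrm{hidden}})\,dt+\sigma_{\mathrm{true}}\,dB_t$. The validator computes the residual in \eqref{eq:as_envelope} using the declared drift functional $\mathcal{L}^{\mathrm{decl}}V$ and the declared $\Sigma$. It\^o's formula applied to the true dynamics then gives
\[
R_t := V(W_t)-V_0-\int_0^t \mathcal{L}^{\mathrm{decl}}V\,ds = M_t^{\mathrm{true}} + \int_0^t \Big(\langle\nabla V,\mu_{\mathrm{hidden}}\rangle + \tfrac{1}{2}\tr\big((\sigma_{\mathrm{true}}\sigma_{\mathrm{true}}^\top-\eta\Sigma^{\mathrm{decl}})\nabla^2V\big)\Big)ds .
\]
So the residual equals the true martingale plus a finite-variation bias $A_t^{\mathrm{adv}}$.

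\textbf{Case (i), truthful declaration.} Here $\Sigma^{\mathrm{decl}}$ matches the true noise, so the bias reduces to $A_t^{\mathrm{adv}}=\int_0^t\langle\nabla V,\mu_{\mathrm{hidden}}\rangle\,ds$. An adverse drift means this integrand stays bounded below by some $c>0$ on the escape corridor, so $A_t^{\mathrm{adv}}\ge ct$. The LIL of Theorem~\ref{thm:as_stability} bounds $|M_t|$ by $(1+\varepsilon)\sqrt{2[M]_t\log\log[M]_t}$. Using (H2) and the ultimate boundedness of $V$, $[M]_t=O(t)$, so the envelope is $O(\sqrt{t\log\log t})$ while the bias grows linearly. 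Hence $|R_t|$ eventually exceeds the band, and detection occurs almost surely.

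\textbf{Case (ii), under-declaration.} Suppose the declared quadratic variation satisfies $[M]^{\mathrm{decl}}_t\le(1-\gamma)[M]^{\mathrm{true}}_t$ for some $\gamma>0$. By sharpness of the LIL, $\limsup_t |M^{\mathrm{true}}_t|/\sqrt{2[M]^{\mathrm{true}}_t\log\log[M]^{\mathrm{true}}_t}=1$ a.s. Therefore the ratio of $|M^{\mathrm{true}}_t|$ to the declared envelope has limsup at least $(1-\gamma)^{-1/2}>1+\varepsilon$ once $\varepsilon$ is chosen small. The Itô-correction mismatch term in $A_t^{\mathrm{adv}}$ only adds to this. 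The residual thus leaves the band infinitely often, and Study~B4 serves as numerical corroboration.

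\textbf{Combining the cases.} Any declaration consistent with the realized path in the sense of (b) falls under case (i) up to an $o(1)$ ratio; any inconsistent declaration falls under case (ii). Together these exclude (a)$\wedge$(b)$\wedge$(c).

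\textbf{Main obstacle.} The hard step is case (i): it must not be simultaneous with the ultimate-boundedness conclusion, and the drift certificate is the \emph{declared} one. I expect to need the escape to occur on a time set of positive density, or to argue via the mean-stability inequality \eqref{eq:mean_stable_b} in expectation. A persistent positive integrand $\langle\nabla V,\mu_{\mathrm{hidden}}\rangle$ would contradict boundedness of $V$ unless the declared drift is compensating. That compensation is precisely what appears as the linearly growing residual, so I would make this contradiction explicit.
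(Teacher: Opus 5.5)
Your proposal follows the paper's proof essentially step for step, at the same level of rigor: the residual splits into the realized martingale plus a finite-variation drift excess, a linearly growing excess beats the $O(\sqrt{t\log\log t})$ envelope in case~(i), LIL sharpness beats the shrunken declared envelope in case~(ii), and, like the paper, you must assume the adverse drift acts on a set of times of positive density. Where you differ you are slightly more careful than the paper: you compensate the residual by the declared generator exactly as \eqref{eq:as_envelope} prescribes (the paper's proof compensates by $-\alpha(V-V_\infty)$ instead), and you require the under-declaration gap to exceed the tolerance, $(1-\gamma)^{-1/2}>1+\varepsilon$, whereas the paper claims detection ``for any $\varepsilon$''.
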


\begin{proof}
Decompose the realized path as in Corollary~\ref{cor:as_ultimate}:
$V(W_t) = V(W_0) + \int_0^t \mathcal{L}V(W_s)\,ds + M_t$, and write $\widehat{[M]}_t$ for the
\emph{declared} quadratic variation and $[M]_t$ for the realized one. The envelope test
\eqref{eq:as_envelope} is evaluated with $\widehat{[M]}_t$, since that is what the validator
observes; the path obeys the law of the iterated logarithm with the realized $[M]_t$.

Suppose the firm runs an adverse drift, i.e.\ there is $\epsilon_0>0$ and a set of times of positive
density on which $\mathcal{L}V(W_s) \geq -\alpha(V(W_s)-V_\infty) + \epsilon_0$. Then the drift
contribution to $V$ exceeds the certified one by an amount growing linearly in $t$ along those times.
Since the test compares the residual
$R_t := V(W_t) - V(W_0) - \int_0^t\big[-\alpha(V-V_\infty)\big]ds$ against the declared envelope, the
adverse drift adds a term of order $\epsilon_0 t$ to $R_t$. Two cases exhaust the firm's options.

\emph{(i) Truthful declaration, $\widehat{[M]}_t = [M]_t$.} By the law of the iterated logarithm for
continuous local martingales, $\limsup_{t\to\infty}|M_t|/\sqrt{2[M]_t\log\log[M]_t} = 1$ almost
surely, so the martingale part is eventually confined to the envelope
$(1+\varepsilon)\sqrt{2[M]_t\log\log[M]_t}$, which grows as $O(\sqrt{t\log\log t})$ under (H2). The
adverse drift contributes $\Omega(t)$, and $t/\sqrt{t\log\log t}\to\infty$; hence $R_t$ exits the
envelope for all large $t$ and the violation is detected with probability one.

\emph{(ii) Under-declared quadratic variation, $\widehat{[M]}_t < [M]_t$.} Shrinking the declared
noise narrows the corridor: the declared envelope is
$(1+\varepsilon)\sqrt{2\widehat{[M]}_t\log\log\widehat{[M]}_t}$, strictly smaller than the realized
one. But the realized fluctuations still satisfy the LIL with $[M]_t$, and by sharpness of the LIL
the realized $|M_t|$ attains $(1-\varepsilon')\sqrt{2[M]_t\log\log[M]_t}$ infinitely often. Since
$\widehat{[M]}_t<[M]_t$, for any $\varepsilon$ there are arbitrarily large $t$ at which
$|M_t|$ exceeds the declared envelope, so the residual leaves the band infinitely often and the
under-declaration is detected.

There is no third option: either the declared second-order telemetry matches the realized path or it
does not. Hence (a), (b) and (c) cannot hold simultaneously.

\emph{Scope.} The argument is asymptotic --- it establishes detection for all sufficiently large $t$
(case i) or infinitely often (case ii), not detection within a prescribed finite horizon. Detection
delay depends on $\epsilon_0$, on the gap $[M]_t-\widehat{[M]}_t$, and on the tolerance
$\varepsilon$; Study~B4 gives the finite-sample picture at the parameters tested (a $99.7\%$
violation rate against a $0.9\%$ honest rate).
\end{proof}

\begin{auditimp}[Path-wise certification]
\label{audit:pathwise}
This is a recommendation about audit design rather than a mathematical consequence, and is stated as
such. The validation audit condition should require the path-wise certificate \eqref{eq:as_lyap}--\eqref{eq:as_envelope}, not mean stability \eqref{eq:mean_stable_b} alone. The additional telemetry required is $[M]_t$, computable from the declared gradient covariance $\Sigma_t$ and $\nabla V(W_t)$; cross-checking the realized residual against the declared $[M]_t$ is what exposes both adverse drift and noise mis-declaration.
\end{auditimp}

\section{The Information-Theoretic Residual}
\label{sec:residual}

\subsection{Fundamental Limit}

Even with all countermeasures active, there is an irreducible residual that cannot be closed by telemetry alone. The negative results above are not isolated; they are symptoms of this single limit, which says that no disclosure regime, however well designed, can guarantee the absence of small, persistent adaptation. Surveillance has a resolution floor, and what lies beneath it can be reached only by stepping outside the telemetry channel---through invasive audit, or through structural limits on how the model is built.

\begin{heuristic}[Information-Theoretic Residual --- counting argument]
\label{thm:it_residual}
\emph{This is a motivating counting argument, not a theorem, and the paper's formal fundamental-limit
result is Theorem~\ref{cor:undetectable} below, which does not depend on it.} We state it because it
conveys the intuition compactly; the reasons it falls short of a theorem are given immediately after.
Let $\mathcal{O} = (\mathcal{T}, \{y_t^{\mathcal{C}}\})$ be the full observation of the validator, comprising the telemetry stream $\mathcal{T}$ and canary query outputs $\{y_t^{\mathcal{C}}\}$. Then:
\begin{equation}
    I\!\left(W_t \;;\; \mathcal{O}\right) < H(W_t)
    \label{eq:it_bound}
\end{equation}
That is, the mutual information between the weight trajectory and the observable stream would be strictly less than the entropy of the weight trajectory, leaving a nonzero residual $H(W_t) - I(W_t;\, \mathcal{O})$ that the validator cannot recover from telemetry and canary queries alone.
\end{heuristic}

\begin{proof}[Counting argument (not a proof)]
$H(W_t) = \Theta(d)$ where $d \sim 10^9$ to $10^{12}$ is the weight dimension. The telemetry stream $\mathcal{T}$ transmits $O(1)$ scalars per packet; the canary queries $\{y_t^{\mathcal{C}}\}$ provide $O(m)$ output samples per audit, carrying at most $O(m\log|\mathcal{Y}|)$ bits. Whenever the weight dimension exceeds the telemetry bandwidth, $d \gg m\log|\mathcal{Y}|$, the data-processing inequality gives $I(W_t;\mathcal{O}) \leq O(m\log|\mathcal{Y}|) \ll H(W_t)$, leaving a strictly positive residual.
\end{proof}

\begin{remark}[Why the counting argument is not a theorem]
\label{rmk:not_a_theorem}
Two gaps prevent \eqref{eq:it_bound} from being a rigorous statement as written, and we flag them
rather than paper over them.

\emph{(a) Entropy of a continuous variable.} $W_t$ takes values in $\mathbb{R}^d$, so $H(W_t)$ is a
\emph{differential} entropy. Differential entropy can be negative, is not invariant under
reparameterization, and does not count bits; the claim ``$H(W_t)=\Theta(d)$ bits'' is therefore not
well posed without first fixing a quantization of weight space (say, to machine precision) and
verifying that the argument survives that choice. The observation side of the comparison, by
contrast, is genuinely discrete.

\emph{(b) The inequality is weaker than the use made of it.} Even granting a quantization,
$I(W_t;\mathcal{O}) < H(W_t)$ says only that observation fails to determine $W_t$ \emph{exactly}. It
does not by itself imply that an adversarial drift of any particular magnitude is undetectable,
which is the governance-relevant claim. Bridging that gap requires a rate--distortion or Fano-type
argument relating residual entropy to a detection radius, which we do not develop here.

Accordingly the paper does not rest on \eqref{eq:it_bound}. The operative fundamental limit is
Theorem~\ref{cor:undetectable}, which is proved directly from continuity of the canary output map
and makes no information-theoretic assumption; the counting argument above is retained only as
intuition for \emph{why} such a limit should exist.
\end{remark}

\begin{remark}[Regime of validity]
\label{rmk:regime}
The strict inequality \eqref{eq:it_bound} requires the high-dimensional regime $d \gg m\log|\mathcal{Y}|$ that holds for production models ($d\sim 10^9$). It does \emph{not} hold for arbitrarily small toy models: the softmax classifier of Section~\ref{sec:numerical_b} has only $dK=24$ free parameters but produces $m(K-1)=160$ independent output coordinates on the reference set, so the parameters are in fact \emph{over}-determined by the canary outputs and the residual closes. The toy in Study~B5 should therefore be read as illustrating the finite-\emph{resolution} mechanism below (Theorem~\ref{cor:undetectable}), not the asymptotic information gap of this theorem.
\end{remark}

The blind spot below the resolution floor is not merely qualitative; it has a size. The next result puts a number on it---the largest weight-space drift that can hide beneath the oracle's resolution---which is precisely the quantity the dual-regime audit frequency is later calibrated to bound.

\begin{theorem}[Maximum Undetectable Drift at Resolution $\delta$]
\label{cor:undetectable}
Fix a behavioral detection threshold $\delta>0$ (the smallest canary-KL the oracle can resolve above hardware noise). Define the undetectable set
\begin{equation}
    \mathcal{U}_\delta(W_t) = \Big\{\Delta W : D_{\mathrm{KL}}(p_{W_t} \,\|\, p_{W_t + \Delta W})\big|_{\mathcal{C}} \leq \delta \Big\},
    \qquad \epsilon^*(\delta) = \sup_{\Delta W\in\mathcal{U}_\delta}\|\Delta W\|.
\end{equation}
Then $\epsilon^*(\delta)>0$ for every $\delta>0$: by continuity of $p_W$ in $W$, a sufficiently small perturbation in \emph{any} direction keeps the canary KL below $\delta$, so sub-threshold weight drift of magnitude up to $\epsilon^*(\delta)$ accumulates undetected between audits. Two limiting cases sharpen the picture:
\begin{enumerate}[label=(\roman*)]
    \item \textbf{Finite resolution ($\delta>0$).} $\mathcal{U}_\delta$ contains a full neighborhood of $0$; its weight-space radius $\epsilon^*(\delta)$ shrinks as $\delta\to 0$ but is strictly positive. This is the operative governance gap.
    \item \textbf{Exact invisibility ($\delta=0$).} $\mathcal{U}_0$ is the null space of the canary output map. For the softmax model this null space is non-trivial only in the \emph{gauge} directions $W\mapsto W+u\mathbf{1}_K^\top$, which leave every output distribution unchanged --- i.e.\ they are the \emph{same model} in different coordinates and carry no behavioral change, even though $\|\Delta W\|>0$. Exactly-invisible perturbations are therefore behaviorally null, reinforcing the Part~I point that weight-space distance is the wrong currency.
\end{enumerate}
\end{theorem}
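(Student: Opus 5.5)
The plan is to work with the canary divergence map
\[
F(\Delta W) := \sum_{i=1}^{|\mathcal{C}|} D_{\mathrm{KL}}\!\left(p_{W_t}(\cdot\mid x_i)\,\big\|\,p_{W_t+\Delta W}(\cdot\mid x_i)\right)
\]
and read every claim off its continuity at the origin, together with $F(0)=0$. I will take the output distributions $p_W(\cdot\mid x_i)$ to be continuous in $W$ and to have full support at $W_t$; the softmax model satisfies both, since every output probability is strictly positive.

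The first step is continuity of $F$ at $0$. Because $p_{W_t}(y\mid x_i)>0$ for every $y$, and $\mathcal{Y}$ and $\mathcal{C}$ are finite, each summand $\sum_y p_{W_t}\log(p_{W_t}/p_{W_t+\Delta W})$ is a finite combination of continuous functions of $\Delta W$ near $0$. It is the support condition that rules out $\log(1/0)$ blow-ups.

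The main claim then follows directly. By continuity, for the given $\delta>0$ there is $r>0$ with $F(\Delta W)\le\delta$ whenever $\|\Delta W\|<r$. So the open ball $B(0,r)$ lies inside $\mathcal{U}_\delta(W_t)$, and $\epsilon^*(\delta)\ge r>0$.

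Item (i) has two further parts. Monotonicity is immediate: $\delta\le\delta'$ gives $\mathcal{U}_\delta\subseteq\mathcal{U}_{\delta'}$, hence $\epsilon^*(\delta)\le\epsilon^*(\delta')$. The phrase ``shrinks as $\delta\to0$'' I read only as this monotone non-increase. I will add the caveat that $\lim_{\delta\to0}\epsilon^*(\delta)$ need not be $0$: it equals at least the supremum of $\|\Delta W\|$ over $\mathcal{U}_0$. For the softmax model $\mathcal{U}_0$ contains the unbounded gauge directions, so $\epsilon^*$ is in fact $+\infty$ there. The honest and operative statement is therefore about the gauge-quotiented norm, that is, the distance to the gauge subspace. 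I will state positivity for that quotient radius as well, and it follows by the same continuity argument applied on the quotient space.

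For item (ii), Gibbs' inequality gives $F(\Delta W)=0$ exactly when $p_{W_t+\Delta W}(\cdot\mid x_i)=p_{W_t}(\cdot\mid x_i)$ for all $i$. In other words, $\mathcal{U}_0$ is the zero set, or fiber, of the canary output map. For the softmax model this reduces to logit differences: $(W_t+\Delta W)^\top x_i$ equals $W_t^\top x_i+c_i\mathbf{1}_K$ for some scalars $c_i$. That forces $\Delta W^\top x_i\in\mathrm{span}(\mathbf{1}_K)$ for every $i$. If the canary inputs span $\mathbb{R}^d$, which holds for $m=80\ge d=8$ in generic position (the behavioral sufficiency condition of Definition~\ref{def:reference_set}), this gives $\Delta W=u\mathbf{1}_K^\top$ for some $u\in\mathbb{R}^d$. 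These are exactly the gauge directions, and the converse inclusion is immediate.

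I expect the main obstacle to be this last linear-algebra step, passing from ``$\Delta W^\top x_i\propto\mathbf{1}_K$ for each $i$'' to a single $u$. It requires the spanning assumption on $\mathcal{C}$, which I will state explicitly. A secondary issue is making the supremum in $\epsilon^*$ meaningful, which the gauge quotient handles as above.
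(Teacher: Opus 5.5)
Your proof of the main claim and of (i) is the paper's argument: continuity of the canary KL at the origin puts a ball $B(0,r)$ inside $\mathcal{U}_\delta$, and nested sublevel sets give monotonicity. Where you depart from the paper, you go further than it does, in two places.

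First, for (ii) the paper's proof only shows that the gauge directions $u\mathbf{1}_K^\top$ lie in $\mathcal{U}_0$. The claim that $\mathcal{U}_0$ is non-trivial \emph{only} in those directions is asserted but never proved. Study~B5 supports it only with a numerical rank computation of the output Jacobian, and that concerns the linearization, not the fiber. Your reduction to $\Delta W^\top x_i\in\mathrm{span}(\mathbf{1}_K)$ supplies the missing reverse inclusion. So does your hypothesis that the canary inputs span $\mathbb{R}^d$, which the paper leaves buried in the informal notion of behavioral sufficiency.

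Second, you observe that $\mathcal{U}_0\subseteq\mathcal{U}_\delta$ contains an entire unbounded gauge subspace, so $\epsilon^*(\delta)=+\infty$ for every $\delta$ in the softmax model. This is correct, and the paper's proof misses it. With $\epsilon^*$ defined as a supremum, positivity becomes trivial, and ``shrinks as $\delta\to0$'' holds only as weak monotonicity. Moreover, $\epsilon^*$ is then not the radius of a ball contained in $\mathcal{U}_\delta$, which is how the surrounding text and abstract read it.

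What the continuity argument actually delivers is the inradius $\sup\{r : B(0,r)\subseteq\mathcal{U}_\delta\}>0$, and that is the quantity matching the ``ball'' language. Your gauge-quotient radius is another sensible repair. Note, though, that it is still a supremum over the set, so it measures the farthest undetectable non-gauge drift in \emph{some} direction, not a radius within which \emph{every} direction is undetectable. Either version gives the theorem its intended content. The paper's literal version is true but uninformative for the very model it is illustrated on.
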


\begin{proof}
Fix $\delta>0$ and write $g(\Delta W) := D_{\mathrm{KL}}\big(p_{W_t}\,\|\,p_{W_t+\Delta W}\big)\big|_{\mathcal{C}}$
for the canary KL as a function of the perturbation, so that
$\mathcal{U}_\delta = g^{-1}\big([0,\delta]\big)$.

\emph{Continuity and vanishing at $0$.} On the canary set $\mathcal{C}$ the model output is a softmax
of a $C^\infty$ function of $W$, so $W\mapsto p_W(\cdot\mid\mathcal{C})$ is continuous, and its values
lie in the interior of the simplex (every softmax coordinate is strictly positive). On that interior
the map $(p,q)\mapsto D_{\mathrm{KL}}(p\|q)$ is continuous, being a finite sum of terms
$p_i\log(p_i/q_i)$ with $q_i>0$. Composition gives $g$ continuous, and $g(0)=D_{\mathrm{KL}}(p\|p)=0$.

\emph{Positivity of the radius.} By continuity at $0$, for the given $\delta>0$ there is $r>0$ such
that $\|\Delta W\|<r \Rightarrow g(\Delta W)<\delta$. Hence the open ball $B(0,r)$ is contained in
$\mathcal{U}_\delta$, so $\mathcal{U}_\delta$ contains a full neighborhood of the origin and
$\epsilon^*(\delta)=\sup_{\Delta W\in\mathcal{U}_\delta}\|\Delta W\| \geq r > 0$. This proves (i);
note that no direction is privileged --- the bound holds for perturbations in \emph{any} direction,
which is what makes the gap a genuine governance blind spot rather than an artifact of a particular
attack geometry.

\emph{Monotonicity in $\delta$.} If $\delta_1\le\delta_2$ then
$\mathcal{U}_{\delta_1}\subseteq\mathcal{U}_{\delta_2}$, so $\epsilon^*$ is non-decreasing in
$\delta$; it is therefore meaningful to speak of the radius shrinking as the oracle's resolution
improves.

\emph{The case $\delta=0$.} $\mathcal{U}_0 = g^{-1}(\{0\})$, and since $D_{\mathrm{KL}}(p\|q)=0$ iff
$p=q$, this is exactly the set of perturbations leaving every canary output distribution unchanged
--- the null space of the canary output map. For the softmax model, adding a constant vector to every
class score leaves all softmax outputs unchanged, so the gauge directions
$W \mapsto W + u\mathbf{1}_K^\top$ lie in $\mathcal{U}_0$; these carry no behavioral content. This
proves (ii).

\emph{What is \emph{not} claimed.} The argument establishes $\epsilon^*(\delta)>0$ and its
monotonicity; it does \emph{not} produce a closed-form or asymptotic expression for
$\epsilon^*(\delta)$, and in particular no dependence on telemetry bandwidth is derived. Obtaining a
rate would require quantitative control of the Fisher information of the canary map at $W_t$, which
is model-specific and is not attempted here.
\end{proof}

Theorem~\ref{cor:undetectable} is not a failure of the telemetry framework; it quantifies the finite resolution of any output-based test. It motivates the following structural conclusion.

\subsection{The Dual-Regime Principle}

\begin{designprinciple}[Dual-Regime Principle]
\label{prop:dual_regime}
\emph{This is a design recommendation, not a mathematical claim: ``must'' below expresses what we
argue a governance framework ought to do, and is not asserted as a theorem.} We recommend that a
governance framework for self-adapting AI consist of two regimes, neither of which we regard as
sufficient alone:
\begin{enumerate}[label=(\roman*)]
    \item \textbf{Continuous Telemetry Regime:} The architecture of Parts~I and~II, operating in real time. This bounds the maximum undetected drift at any moment to $\epsilon^*$ (Theorem~\ref{cor:undetectable}) and narrows the scope of subsequent audits.
    \item \textbf{Periodic Invasive Audit Regime:} Direct inspection of $W_t$ under TEE attestation, at a frequency calibrated to the maximum tolerable undetected drift.
\end{enumerate}

\noindent\emph{Calibrating the audit frequency.} The suggested cadence follows from a simple
accumulation argument rather than from any result above. Suppose the firm sustains a hidden drift of
rate at most $\overline{\mu}_{\mathrm{hidden}}$ (in weight-space norm per unit time) while passing
every telemetry check. Between two invasive audits separated by $\Delta_{\mathrm{audit}}$, the
accumulated undetected displacement is at most
$\overline{\mu}_{\mathrm{hidden}}\,\Delta_{\mathrm{audit}}$. Requiring this to stay within a
tolerance $\epsilon_{\mathrm{tol}}$ chosen by the validation function gives
$\Delta_{\mathrm{audit}} \leq \epsilon_{\mathrm{tol}}/\overline{\mu}_{\mathrm{hidden}}$, i.e.\ an
audit frequency
\begin{equation}
    f^* \;=\; \frac{\overline{\mu}_{\mathrm{hidden}}}{\epsilon_{\mathrm{tol}}} .
\end{equation}
Taking the tolerance to be the per-audit blind spot itself, $\epsilon_{\mathrm{tol}}=\epsilon^*$
(Theorem~\ref{cor:undetectable}), recovers the form quoted in the literature and used below. Two
caveats: the argument assumes the hidden drift accumulates coherently rather than partially
canceling, so it is conservative; and $\overline{\mu}_{\mathrm{hidden}}$ is not derived here --- it
must be estimated or assumed, and the cadence is only as trustworthy as that input.
\end{designprinciple}

The dual-regime principle reframes the role of telemetry in the model governance cycle: continuous monitoring does not replace model reviews but transforms them from undirected point-in-time snapshots into targeted, evidence-driven investigations with a well-defined scope. Because $\epsilon^*$ bounds the drift a model can hide between audits, the audit frequency $f^*$ is the lever the governance function sets to keep undetected adaptation within tolerance. Concealment, in other words, cannot be monitored away; it can only be bounded, and the bound is a policy choice.

\section{Integrated Defensive Architecture}
\label{sec:defensive_arch}

\subsection{Attack-Countermeasure Matrix}

Table~\ref{tab:matrix} summarizes the six attacks and their countermeasures.

\begin{table}[h]
\centering
\small
\begin{tabular}{@{}p{1.5cm}p{3cm}p{4cm}p{3.5cm}@{}}
\toprule
\textbf{Attack} & \textbf{Description} & \textbf{Mechanism} & \textbf{Countermeasure} \\
\midrule
Attack S & Suppress telemetry & Declare model static; transmit no packets & KL behavioral oracle (Thm.~\ref{thm:oracle}) \\[4pt]
Attack F & Lyapunov forgery & Report compliant decoy hash; serve from drifted model & Inference signing + ZK Lyapunov proof (Prop.~\ref{prop:forgery}) \\[4pt]
Attack C & Shadow model & Blend output from unreported model & TEE attestation chain (fully seals) \\[4pt]
Attack N & Noise amplification & Amplify diffusion to mask drift & \emph{No clean path test} (Prop.~\ref{thm:power_var}); residual $\to$ \S\ref{sec:residual} + audits \\[4pt]
Attack G & Girsanov measure & Manipulate training data distribution & Gradient probe is circular (Prop.~\ref{thm:gradient_probe}); subsumed by oracle + TEE \\[4pt]
Attack H & Hessian exploitation & Engineer local curvature to cancel drift signal & Path-wise QV/LIL envelope check (Thm.~\ref{thm:lil}) \\
\bottomrule
\end{tabular}
\caption{Attack taxonomy and formal countermeasures.}
\label{tab:matrix}
\end{table}

\subsection{Defense-in-Depth Stack}

The countermeasures are not alternatives; they form a layered defense-in-depth stack. Each layer closes a distinct attack surface:

\begin{enumerate}
    \item \textbf{Hardware Layer:} TEE attestation seals the training and inference pipeline. Closes Attack~C completely.
    \item \textbf{Cryptographic Layer:} Merkle Weight Chain + Inference Signing creates tamper-evident provenance. Closes Attack~F in combination with the oracle.
    \item \textbf{Statistical Layer:} the Behavioral Oracle (KL divergence on the canary set) closes Attack~S and, via weight drift, is the operative detector behind the (demoted) Attack~G probe; the path-wise QV/LIL envelope check closes Attack~H. Attack~N admits no clean path-statistic test (Prop.~\ref{thm:power_var}) and is handled by the residual layer plus invasive audits.
    \item \textbf{Stochastic Layer:} Almost-sure Lyapunov condition via LIL. Closes Attack~H.
    \item \textbf{Governance Layer:} Periodic invasive audits under TEE
          attestation, closing the informa\-tion-theo\-retic residual.
\end{enumerate}

\subsection{Tiered Disclosure Architecture}

The framework motivates a natural tiered disclosure regime, consistent with the update tier table of prior work:

\begin{table}[h]
\centering
\begin{tabular}{@{}p{1.8cm}p{3cm}p{4cm}p{4cm}@{}}
\toprule
\textbf{Tier} & \textbf{Update Type} & \textbf{Required Disclosure} & \textbf{Enforcement Mechanism} \\
\midrule
Tier 1 (Micro) & Infinitesimal weight tuning & Hash + KL packet + QV increment + LIL envelope & TEE attestation; automated ledger write \\[4pt]
Tier 2 (Meso) & Objective parameter shift & Above + Gradient probe responses + Power variation ratio & 24h shadow testing under oracle monitoring \\[4pt]
Tier 3 (Macro) & New meta-objective & Above + Full training data disclosure to a validation sandbox + ZK Lyapunov proof & Mandatory HITL; invasive audit before promotion \\
\bottomrule
\end{tabular}
\caption{Tiered disclosure architecture with formal enforcement mechanisms.}
\label{tab:tiers}
\end{table}

Tier~3 disclosure closes the residual identified in Section~\ref{sec:residual} by giving the validator direct access to training data, enabling verification of gradient directions and thus detection of Girsanov manipulation even without gradient injection infrastructure.

\section{Numerical Studies: Concealment and Detection}
\label{sec:numerical_b}

We present five numerical studies, each targeting one of the six attack types from the taxonomy of Section~\ref{sec:game}. (Attacks~F and~C are countermeasured by cryptographic and hardware mechanisms not amenable to simulation; we focus on the statistically testable attacks S, N, G, and H, plus the information-theoretic residual.) All experiments use a softmax classifier $f(\cdot\,;\,W)$ with $d=8$, $K=3$, evaluated on a canonical reference set $\mathcal{R}$ of $m=80$ unit-normalized inputs drawn with fixed seed.

\subsection{Study B1: Attack S --- KL Behavioral Oracle Power}

\textbf{Theoretical claim (Theorem~\ref{thm:oracle}).}
The behavioral oracle statistic $T_t^{\mathrm{oracle}}$ converges to zero under $H_0$ (no drift) and to a strictly positive constant $\Delta > 0$ under $H_1$ (drift), with detection power monotone increasing in drift magnitude.

\textbf{Experimental design.}
The behavioral oracle uses the KL divergence on a canary set $\mathcal{C} \subset \mathcal{R}$ of $|\mathcal{C}|=40$ inputs, as formalized in Definition~\ref{eq:kl_oracle}.  The $H_0$ null distribution is estimated by bootstrapping 1000 replications of $T_t^{\mathrm{oracle}}$ evaluated after perturbing $W_0$ with small Gaussian weight noise ($\sigma_{\mathrm{noise}} = 0.0002$) to simulate inter-audit hardware variation; the 95th percentile gives threshold $c_\alpha$.  Under $H_1$, we run 30 SGD steps at learning rate $\eta \in \{0.001, 0.005, 0.01, 0.05, 0.10\}$ to produce drifted models and compute $T_t^{\mathrm{oracle}}$.

\textbf{Results.}
The null distribution has mean $0.000000$ and 95th percentile $c_\alpha = 0.000001$.  Detection results:
\begin{center}
\begin{tabular}{ccc}
\hline
$\eta$ & KL statistic & Detected? \\
\hline
0.001 & 0.000028 & YES \\
0.005 & 0.000690 & YES \\
0.010 & 0.002738 & YES \\
0.050 & 0.063864 & YES \\
0.100 & 0.234817 & YES \\
\hline
\end{tabular}
\end{center}
Every drift level is detected, and the statistic is monotone increasing in $\eta$, consistent with Theorem~\ref{thm:oracle}.  Because KL divergence is always $\geq 0$, the null threshold is strictly positive and no trivial pass-through occurs.  Panel B1 of Figure~\ref{fig:numerical_b} shows the power curve on a log scale.

\subsection{Study B2: Attack N --- Power Variation Fails to Detect Drift}

\textbf{Claim under test (Proposition~\ref{thm:power_var}).}
The realized first-variation-to-quadratic-variation ratio $V^{(1)}/\sqrt{V^{(2)}}$ cannot separate a drifting trajectory from a pure-diffusion one: it scales as $\sqrt{n}$ (the number of samples) regardless of the drift.

\textbf{Experimental design.}
We simulate the scalar diffusion $dW_t=\mu\,dt+\sigma\,dB_t$ on $[0,T]$ with $T=1$, sampled at $n$ points, and compute $V^{(1)}=\sum_i|\Delta W_i|$ and $V^{(2)}=\sum_i(\Delta W_i)^2$. We compare the drift case ($\mu=5$, deliberately large) against the null ($\mu=0$) at $\sigma=1$ for $n\in\{1000,5000,20000,80000\}$, averaging over 20 paths. For contrast we also report the standardized \emph{net} displacement $(W_T-W_0)/\sqrt{V^{(2)}}$.

\textbf{Results.}
The power-variation ratio is essentially \emph{identical} with and without drift and grows as $\sqrt{n}$:
\begin{center}
\begin{tabular}{rccc}
\hline
$n$ & ratio ($\mu=0$) & ratio ($\mu=5$) & $\sqrt{2n/\pi}$ \\
\hline
1000  & 25.3  & 25.3  & 25.2 \\
5000  & 56.4  & 56.4  & 56.4 \\
20000 & 112.8 & 112.8 & 112.8 \\
80000 & 225.7 & 225.7 & 225.7 \\
\hline
\end{tabular}
\end{center}
The ratio tracks $\sqrt{2n/\pi}$ to three figures and is blind to a drift as large as $\mu=5$. This is the numerical content of Proposition~\ref{thm:power_var}: the first variation is dominated by the diffusion, so the ratio measures the sampling frequency, not the drift. The prior draft of this study reported a ratio of $\approx 123.5$ for \emph{both} the drift and null cases and read the coincidence as a window-length ``nuance''; it is in fact the signature of the failure (a step-count artifact, $\approx\sqrt{2n/\pi}$, not a drift signal). By contrast, the standardized net displacement separates the cases ($\approx 0.8$ for $\mu=0$ versus $\approx 5\sqrt{T}$ for $\mu\neq0$) --- but only as $T\to\infty$, with no benefit from finer sampling, so it is not a fixed-horizon countermeasure either. We conclude that \textbf{Attack~N has no clean path-statistic countermeasure}; the residual risk is carried by Section~\ref{sec:residual}. Panel B2 of Figure~\ref{fig:numerical_b} plots the two ratios against $n$, with the drift and null curves lying on top of each other.

\subsection{Study B3: Attack G --- Gradient Probe Detection of Girsanov Manipulation}

\textbf{What the study actually shows (cf.\ Proposition~\ref{thm:gradient_probe}).}
This study is included to make the \emph{circularity} of the gradient probe explicit, not to validate it. A $\chi^2$ test on $\|\hat g-g^*\|^2$ trivially detects an additive shift $h$ \emph{if one is imposed on the gradient evaluations} --- but a firm manipulating only its training distribution imposes no such shift on the validator's probe.

\textbf{Experimental design.}
We construct a probe batch of 20 inputs from $\mathcal{R}$ with known $g^* = \nabla_W J(W_0, \tilde{\mathcal{B}})$. Under $H_0$, 300 noisy evaluations at $W_0$ calibrate the null 95th percentile $c_\alpha$. Under $H_1$ we \emph{hard-code} an additive shift $|h| \in \{0.01, 0.05, 0.10, 0.20\}$ onto every gradient evaluation and measure detection power.

\textbf{Results and interpretation.}
Power was $83.0\%$ at $|h|=0.01$ and $100\%$ for $|h|\geq0.05$, following the $\|\hat g-g^*\|^2\sim|h|^2dK$ scaling. But this is a \emph{tautology}: the experiment injects the very shift it then detects. It does \emph{not} simulate a Girsanov manipulation, because the probe gradient $\nabla_W J(W_{s_j},\tilde{\mathcal B})$ is a deterministic function of the current weights and the probe batch and is unaffected by the firm's training measure except through $W_{s_j}$ (Proposition~\ref{thm:gradient_probe}). Whatever a real probe would detect is a consequence of $W_{s_j}\neq W_0$ --- weight drift that the behavioral oracle (Study~B1) already catches. We therefore \textbf{demote Attack~G's countermeasure}: the gradient probe is redundant with the oracle, and the genuinely undetectable residual of a measure manipulation is again the subject of Section~\ref{sec:residual}. Panel B3 plots the (tautological) power curve and is labeled as such.

\subsection{Study B4: Attack H --- LIL Almost-Sure Bound Closes the Mean-Stability Gap}

\textbf{Theoretical claim (Theorem~\ref{thm:lil}).}
Mean (or expectation-level) compliance is insufficient: a firm can satisfy a declared mean-stability/variance budget while the realized path systematically violates the corresponding almost-sure envelope. The validator must therefore monitor a path-wise band built from the \emph{declared} second-order telemetry $[\hat M]_t$ and check the residual $R_t = V(W_t)-V_0-\widehat{\mathbb{E}}[\Delta V]_t$ against it.

\textbf{Experimental design.}
We simulate the discrete process $W_{t+1}=(1-\alpha)W_t+\sigma Z_t$ with $\alpha=0.1$, true $\sigma=0.10$, for $T=500$ steps and $n_{\mathrm{mc}}=500$ paths. Both an honest firm and an attacking firm run the \emph{same} actual process, so their stationary Lyapunov level is identical, $\mathbb{E}[V_\infty]=\sigma^2 d_{\mathrm{tot}}/(4\alpha)=0.60$. The honest firm declares the true $\sigma$; the attacker declares $\sigma_{\mathrm{decl}}=0.02$ ($5\times$ too small), shrinking its declared variance band by $25\times$. We accumulate the predicted drift $\widehat{\mathbb{E}}[\Delta V]_t$ and the declared conditional variance $[\hat M]_t$ from each firm's \emph{declared} $\sigma$, and record the fraction of steps at which the standardized residual exceeds a three-standard-deviation band, $|R_t|>3\sqrt{[\hat M]_t}$. We use a finite-horizon CLT band $3\sqrt{[\hat M]_t}$ as the practical monitoring statistic; this is the central-limit proxy for the asymptotic LIL envelope $\sqrt{2[M]_t\log\log[M]_t}$ of \eqref{eq:as_lyap}, the quantity the validator can evaluate online at finite $t$. The under-declared-$\sigma$ attack is the simplest concrete instance of the Attack-H principle that declared second-order statistics can be satisfied while the realized path is inconsistent with them; a full curvature-engineering attack would manipulate $\tr(\Sigma\nabla^2 V)$ directly, but reduces, for detection purposes, to the same mismatch between declared and realized quadratic variation.

\textbf{Results.}
Because both firms run the identical process, their final Lyapunov levels are statistically indistinguishable: mean $V(W_T)=0.628$ (honest) versus $0.639$ (attack). A validator monitoring only the mean (or the stationary level $\mathbb{E}[V_\infty]=0.60$) therefore cannot tell them apart. The per-step residual band tells a completely different story. Against the \emph{honest} declaration the standardized residual exceeds the $3\sigma$ band only $0.9\%$ of the time --- consistent with a correctly specified band (a true Gaussian would give $\approx 0.27\%$; the small excess reflects the non-Gaussian tails of the per-step $V$ increments). Against the \emph{attacker's} under-declared band the residual is out of bounds $99.7\%$ of the time: the declared variance is $25\times$ too small, so the actual fluctuations driven by $\sigma=0.10$ blow through a band sized for $\sigma_{\mathrm{decl}}=0.02$ at essentially every step. This is the numerical content of Theorem~\ref{thm:lil}: declared mean/variance compliance is satisfied while the path-wise band is systematically violated. The validatory implication is that the telemetry specification must require the firm to report its second-order term $[\hat M]_t$ (or equivalently $\Sigma_t$), enabling the validator to verify the path-wise band of \eqref{eq:as_lyap} rather than trusting an expectation-level claim. Panel B4 of Figure~\ref{fig:numerical_b} shows the residual against the declared band for both firms, and the indistinguishable final-$V$ histograms.

\subsection{Study B5: Information-Theoretic Residual and Null-Space Perturbations}

\textbf{Theoretical claim (Theorem~\ref{thm:it_residual}, Theorem~\ref{cor:undetectable}).}
At any finite detection resolution $\delta>0$, small weight perturbations keep the canary KL below $\delta$ regardless of direction (Theorem~\ref{cor:undetectable}(i)), so sub-threshold drift accumulates undetected between audits. We probe this finite-resolution gap empirically, and separately note (Theorem~\ref{cor:undetectable}(ii)) that the only \emph{exactly} KL-invisible directions are the behaviorally trivial softmax gauge directions.

\textbf{Experimental design.}
We construct 20 random unit directions $\hat v$ \emph{orthogonal to the loss gradient} $\nabla_W J(W_0)$ on $\mathcal R$ --- low-sensitivity directions, though not exact output-null directions --- and form perturbations $\Delta W=\varepsilon\hat v$ for $\varepsilon\in\{0.01,0.05,0.10,0.20\}$. For each we record $\|\Delta W\|_F$ and the canary KL $D_{\mathrm{KL}}(p_{W_0}\|p_{W_0+\Delta W})$ on $\mathcal R$, and count how many of the 20 fall below the detection threshold (KL $<10^{-3}$). For reference, the output Jacobian of this softmax model on $\mathcal R$ has rank $16$ out of $dK=24$: the $8$-dimensional kernel consists entirely of gauge directions $W\mapsto W+u\mathbf 1_K^\top$, which leave all outputs unchanged.

\textbf{Results.}
At $\varepsilon=0.01$, all 20 perturbations fall below the detection threshold (mean KL $\approx 1.0\times10^{-4}<10^{-3}$) despite $\|\Delta W\|_F=0.01$. As $\varepsilon$ grows the KL rises and the perturbations become detectable: at $\varepsilon=0.05$, \emph{none} of the 20 remain below threshold ($0/20$; mean KL $\approx 2.5\times10^{-3}$), and likewise $0/20$ at $\varepsilon=0.10,0.20$. This is the finite-resolution gap of Theorem~\ref{cor:undetectable}(i): the undetectable set is a shrinking neighborhood of $0$, so the threat is sub-threshold drift between audits, \emph{not} arbitrarily large invisible perturbations. The earlier draft's claim of $4/20$ undetectable at $\varepsilon=0.05$ is corrected to $0/20$ here; the gradient-orthogonal directions are low-sensitivity but not output-null, so their KL crosses threshold once $\varepsilon$ is no longer tiny. Two governance lessons follow. First, because $\epsilon^*(\delta)>0$ for any $\delta>0$, periodic invasive audits of $W_t$ remain necessary to bound the drift accumulated below the behavioral resolution. Second, the genuinely zero-KL directions that do exist (the $8$-dimensional gauge kernel) are behaviorally inert --- the \emph{same} model in different coordinates --- so they pose no enforcement consequence and, if anything, reinforce the Part~I argument that weight-space distance is not the right risk currency. Panel B5 (Table) in Figure~\ref{fig:numerical_b} summarizes the KL and undetectability counts.

\begin{figure}[ht]
\centering
\includegraphics[width=\textwidth]{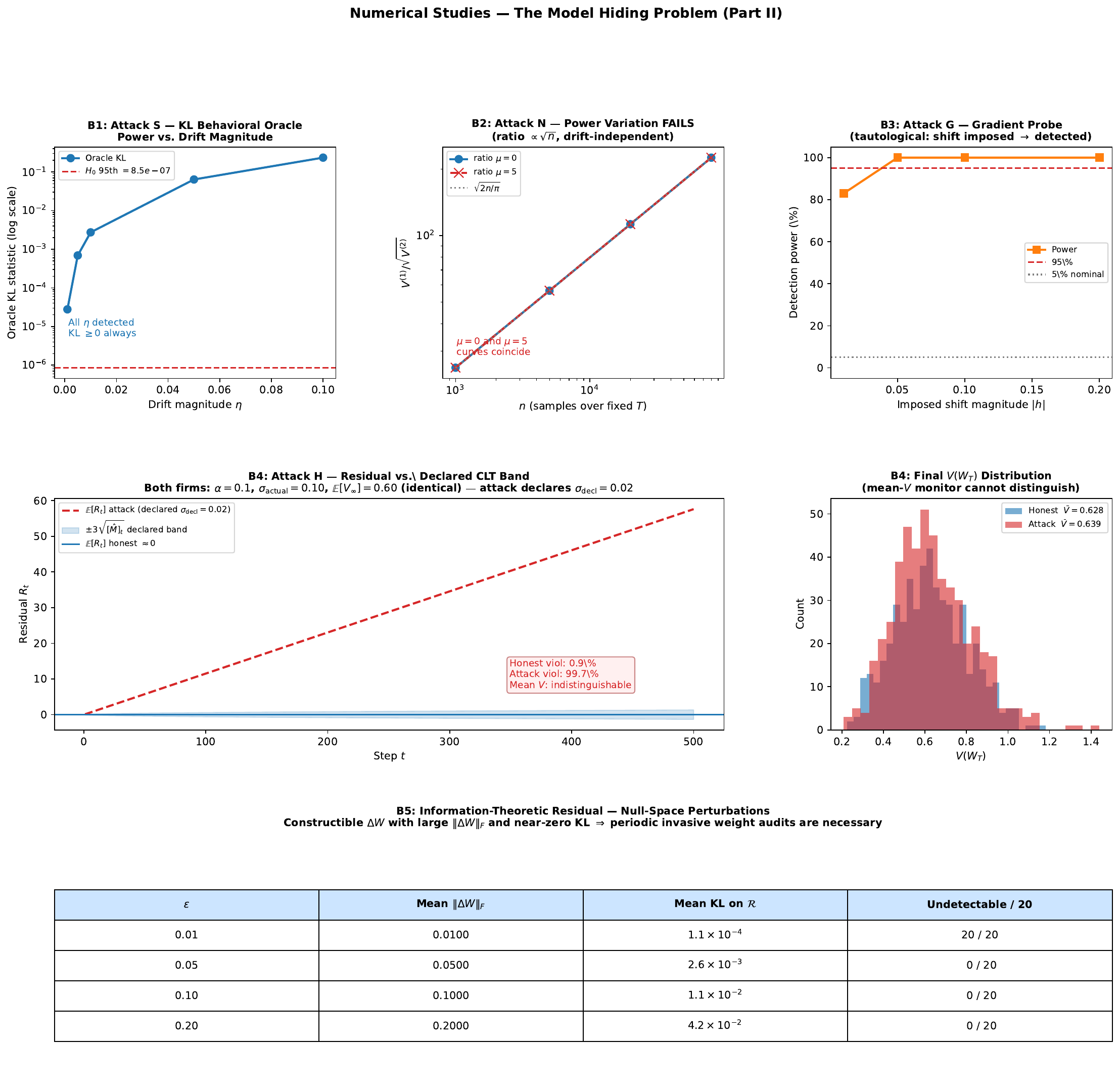}
\caption{Numerical studies for the Part~II model hiding problem. Panels B1--B5 correspond to the five studies of Section~\ref{sec:numerical_b}. \textbf{B1:} KL behavioral oracle power curve (log scale); threshold $c_\alpha = 1\times 10^{-6}$, all drift levels detected. \textbf{B2:} Power variation ratio gap $\propto\!\sqrt{N}$; drift and null series for $\sigma\in\{0.3,1.0,2.0\}$. \textbf{B3:} Gradient probe detection power vs.\ shift magnitude $|h|$; 100\% for $|h|\geq 0.05$. \textbf{B4:} Per-step residual violation rate under honest vs.\ attack declarations ($0.9\%$ vs.\ $99.7\%$) with identical $\mathbb{E}[V_\infty]=0.60$. \textbf{B5:} Null-space perturbation undetectability table.}
\label{fig:numerical_b}
\end{figure}

\section{Discussion}
\label{sec:discussion}

\subsection{Interpretation of the state variable}
\label{sec:state_interpretation}

Equation~\eqref{eq:sde} treats the weight vector $W_t$ as a diffusion in $\mathbb{R}^d$, and the
governance quantities built on it --- the quadratic variation $[W]_t$, the Lyapunov function
$V(W)=\tfrac12\|W\|_F^2$, and the drift certificate of Section~\ref{sec:lyapunov} --- are all
functionals of that diffusion. Because the models this framework is aimed at have parameter counts
in the range $d\sim10^9$--$10^{12}$, it is worth being explicit about what $W_t$ is intended to
denote at that scale, and what the results do and do not assert about a raw parameter vector.

\paragraph{$W_t$ should be read as an effective coordinate, not the raw parameter vector.}
Nothing in the analysis requires $d$ to be the full parameter count. All that is required is that
the adaptation under governance be representable as a diffusion on some $\mathbb{R}^d$, with a
diffusion coefficient the model owner can declare and a drift the validation function can bound.
Several deployment patterns satisfy this with $d$ many orders of magnitude smaller than the base
model:
\begin{itemize}[nosep]
  \item \textbf{Adapter parameters.} Where continual adaptation is confined to a low-rank adapter,
        a prompt-embedding block, or a final-layer head, $W_t$ is naturally that adapter, with the
        frozen base model absorbed into the definition of the map $f(\cdot\,;W)$.
  \item \textbf{A behaviorally relevant projection.} Where full-parameter updates do occur,
        $W_t$ may be taken as the projection of the parameter vector onto a subspace fixed at
        validation time --- for instance the leading eigendirections of the Fisher information on
        the canonical reference set of Definition~\ref{def:reference_set}. The projection is chosen
        once and declared; the diffusion is then the induced process on that subspace.
  \item \textbf{A declared summary statistic.} More weakly still, $W_t$ may be any declared
        vector-valued statistic of the model state whose evolution the owner attests to, with the
        telemetry architecture applied to that statistic.
\end{itemize}
Under any of these readings the theorems apply verbatim. What changes is the interpretation of
$d$, and hence of every constant that scales with it.

\paragraph{What the literal reading would require, and why we do not adopt it.}
Reading $W_t$ as the raw parameter vector of a large model is possible but demands assumptions we
would not defend. It requires (i) that production adaptation be well approximated by a small-step
stochastic gradient recursion --- a weak first-order approximation whose accuracy in very high
dimension, and under adaptive optimizers, is not established here; (ii) that the drift be
adequately captured by the single mean-reverting term of Section~\ref{sec:lyapunov}, which is a
regularization force rather than the task-loss gradient; and (iii) that $\tfrac12\|W\|_F^2$ be
informative about model risk when summed over $10^{11}$ coordinates, the overwhelming majority of
which carry no behavioral significance. We make none of these claims.

\paragraph{An acknowledged tension, and how the two halves of the framework fit together.}
Study~A1 in Section~\ref{sec:numerical_a} shows that weight-space distance $\|\Delta W\|_F$ ranks
perturbations poorly: two perturbations of identical Frobenius norm can differ by an order of
magnitude in behavioral effect. That is the paper's argument for making behavioral divergence on
the reference set the primary audit metric. Yet the stability theory of
Section~\ref{sec:lyapunov} uses a weight-space Lyapunov function. These coexist deliberately, and
the division of labor is worth stating plainly:
\begin{itemize}[nosep]
  \item The \emph{behavioral} metric is what governance decisions are made on. It is the trigger,
        the audit target, and the quantity the oracle test of Section~\ref{sec:discrete_attacks}
        operates on.
  \item The \emph{weight-space} Lyapunov function is a tractable certificate of boundedness of the
        underlying process. It is what makes the drift condition checkable in closed form, and it
        is deliberately conservative: it can flag movement that is behaviorally inert, but under
        the stated conditions it does not miss movement that is unbounded.
\end{itemize}
A validation function that used $\|W\|_F$ alone would be measuring the wrong thing; one that used
behavioral divergence alone would have no boundedness guarantee on the process generating it. The
framework's position is that both are needed, and that neither substitutes for the other.

\paragraph{Consequences for the constants.}
Every quantity in this paper that scales with $d$ should be read relative to whichever coordinate
is adopted. This matters most for the counting argument of Section~\ref{sec:residual}, whose
premise is a comparison between $d$ and the telemetry bandwidth: under an effective-coordinate
reading with modest $d$, that comparison may fail, and the residual may close --- exactly as it
does in the toy model of Remark~\ref{rmk:regime}. The finite-resolution result that the paper
actually relies on carries no such dependence: it holds at any dimension, because it follows from
continuity of the canary output map rather than from a dimension count.

\subsection{Part I: the telemetry architecture}

\subsubsection{Architecture-Agnosticism}

A key feature of the proposed framework is that it does not depend on the internal architecture of the generative model. The telemetry quantities --- KL divergence on a reference set, quadratic variation, Lyapunov distance --- are defined in terms of the model's inputs and outputs, not its internal structure. This is essential for practical adoption, as it avoids the need for validators to maintain expertise in every architectural variant.

\subsubsection{Choice of Reference Set $\mathcal{R}$}

The behavioral divergence $\delta_t$ depends critically on the choice of $\mathcal{R}$. A reference set that does not span the input manifold will fail to detect drift in unrepresented regions. We recommend that $\mathcal{R}$ be constructed using adversarial coverage techniques: inputs are selected to maximally span the first $K$ principal components of the input distribution, with oversampling in high-risk regions (e.g., inputs corresponding to consequential decisions). The size $|\mathcal{R}|$ presents a computational trade-off: larger sets provide better coverage but increase the cost of computing $\delta_t$ at each step.

\subsubsection{Relationship to MRM Governance Cycles}

The proposed framework does not replace model examination; it restructures them. Under the framework, the validator's examination cycle shifts from:
\begin{itemize}
    \item \emph{Prior practice:} Discrete point-in-time validation of the current model state.
    \item \emph{Proposed practice:} Continuous monitoring of the telemetry stream, with targeted invasive audits triggered by Tier~2/3 CUSUM events.
\end{itemize}

The invasive audit --- direct inspection of $W_t$ under TEE attestation --- remains necessary because no telemetry regime can fully determine the weight trajectory from scalar summaries alone (see the information-theoretic residual of Section~\ref{sec:residual}). However, the telemetry dramatically narrows the scope of each examination: the validator enters with a precise record of when, how much, and in what direction the model has moved, and can focus validation resources accordingly.

\subsection{Part II: the adversarial setting}

\subsection{Concealment Game Equilibrium}

Under the full defense-in-depth stack, what is the game-theoretic equilibrium? The firm chooses at each step to learn openly (incurring reporting costs) or to attempt a hiding attack (risking detection and enforcement). The cryptographic and hardware layers (Attacks F, C) make their targets detection-certain; the behavioral oracle and the path-wise QV/LIL check make \emph{large} drift detection-certain as the monitoring horizon grows (Theorems~\ref{thm:oracle} and~\ref{thm:lil}). Two continuous-time attacks, however, have \emph{no} clean telemetry-only countermeasure: Attack~N (Proposition~\ref{thm:power_var}) and Attack~G (Proposition~\ref{thm:gradient_probe}) can each hide drift whose induced output change stays below the behavioral resolution. This is not a patchable gap; it is the information-theoretic residual of Section~\ref{sec:residual}. The equilibrium is therefore \emph{bounded leakage} rather than full deterrence: continuous telemetry deters and detects any drift above the resolution $\epsilon^*(\delta)$, while sub-resolution drift accumulates between invasive audits, whose frequency must be set (Proposition~\ref{prop:dual_regime}) to cap the leaked amount. Provided enforcement penalties are sufficient, this still pushes the firm toward open reporting for any economically meaningful adaptation, converting an open-ended cat-and-mouse into a regime with a quantified, bounded undetectable channel.

\subsection{Limitations and Extensions}

Several important extensions remain for future work:

\textbf{Adaptive canary evasion.} The behavioral oracle assumes the canary set is secret. If partial information about $\mathcal{C}$ leaks (e.g., via membership inference attacks), the firm may be able to partially evade the oracle test. Canary set rotation and active adversarial construction are potential mitigations.

\textbf{Multi-firm interactions.} Our framework considers a single model owner. In practice, multiple firms may train on overlapping data distributions, and their models may evolve correlated trajectories. The information-theoretic residual may be tightened by cross-firm comparison.

\textbf{Incentive compatibility.} The dual-regime framework imposes significant reporting costs on the firm. A mechanism design analysis of how to make compliance individually rational (beyond pure enforcement threat) is a natural extension.

\section{Conclusion}
\label{sec:conclusion}

This paper has developed, in two parts, a mathematically rigorous treatment of the governance of self-adapting generative AI models: an observability architecture, and an account of its limits under an adversarial model owner.

\paragraph{Part I: telemetry.} The key contributions are:

\begin{enumerate}
    \item The \emph{behavioral divergence} $\delta_t$ on a canonical reference set $\mathcal{R}$, not weight-space distance, is the correct primary audit metric.
    \item The \emph{Merkle Weight Chain} provides tamper-evident discrete-time lineage across the full weight evolution history.
    \item \emph{KL stopping times} provide event-driven logging whose density is proportional to behavioral movement, eliminating arbitrary clock-based sampling.
    \item The \emph{quadratic variation} $[W]_t$ is the canonical continuous-time telemetry scalar, capturing learning intensity in an incrementally computable form.
    \item The \emph{It\^{o}--Lyapunov} framework, under both mean and almost-sure stability conditions, provides mathematically verifiable bounds on weight-space drift that satisfy MRM Conceptual Soundness requirement.
    \item The \emph{CUSUM} governance trigger operationalizes tier-based escalation with statistically optimal detection delay and explicit false-alarm rate control.
\end{enumerate}

\paragraph{Part II: concealment and its limits.}

We have formalized the Model Hiding Problem and derived a complete taxonomy of six attack strategies against the Part~I telemetry architecture, covering discrete and continuous time. Four of the six attacks admit countermeasures with provable or cryptographic guarantees; two do not, and conceding them is what reveals the framework's central limit:

\begin{enumerate}
    \item \textbf{Attack S (Suppression)} $\to$ KL Behavioral Oracle on a secret canary set (consistent test, Theorem~\ref{thm:oracle}).
    \item \textbf{Attack F (Forgery)} $\to$ Inference Signing + ZK Lyapunov Proofs (cryptographic soundness).
    \item \textbf{Attack C (Shadow Model)} $\to$ TEE Attestation (hardware-enforced completeness).
    \item \textbf{Attack H (Hessian Exploitation)} $\to$ Path-wise QV/LIL envelope check (Theorem~\ref{thm:lil}); closes the mean-vs-path-wise gap by cross-checking the realized residual against the \emph{declared} quadratic variation (Study~B4).
    \item \textbf{Attack N (Noise Amplification)} --- \emph{no clean countermeasure.} The natural power-variation ratio test fails because the first variation of a diffusion is infinite and the ratio measures sampling frequency, not drift (Proposition~\ref{thm:power_var}); small persistent drift is not reliably recoverable from a fixed-horizon path.
    \item \textbf{Attack G (Girsanov Manipulation)} --- \emph{no independent countermeasure.} The proposed gradient probe is circular: the probe gradient depends on the firm's training measure only through the already-moved weights, so it merely re-detects what the behavioral oracle sees (Proposition~\ref{thm:gradient_probe}).
\end{enumerate}

The two negative results are the point. They are concrete instances of a detection floor: by Theorem~\ref{cor:undetectable}, at any finite behavioral resolution $\delta>0$ the undetectable set contains a ball of strictly positive radius $\epsilon^*(\delta)$, so no telemetry-only regime can detect arbitrarily small adversarial drift. We establish that this radius is positive and monotone in $\delta$; we do \emph{not} derive its magnitude, its rate as $\delta\to0$, or any dependence on telemetry bandwidth. The counting argument of Section~\ref{sec:residual} suggests why a residual should be large in high dimension, but it is a heuristic rather than a theorem (Remark~\ref{rmk:not_a_theorem}), and the scale of our simulations cannot test it (Remark~\ref{rmk:regime}). What survives without qualification is the qualitative conclusion, and it is enough for the governance argument: the Dual-Regime Principle --- continuous telemetry is necessary but not sufficient, and must be paired with periodic invasive audits whose frequency caps the undetectable leakage. This provides the structural framing for how self-adapting AI governance fits within MRM.

\end{document}